\newtheorem{theorem}{Theorem}[section]
\newtheorem{lemma}[theorem]{Lemma}
\newtheorem{corollary}[theorem]{Corollary}
\theoremstyle{definition}
\newtheorem{defn}{Definition}
\newtheorem{rem}{Remark}
\newcommand{\keywords}[1]{\bigskip \par\noindent
{\small{\em Keywords\/}: #1}}
\title{Neighborhood covering and independence on\\two superclasses of cographs}
\author{Guillermo Dur{\'a}n\thanks{CONICET, Argentina, Instituto de C{\'a}lculo and Departamento de Matem{\'a}tica, Facultad de Ciencias Exactas y Naturales, Universidad de Buenos Aires, Buenos Aires, Argentina, and Departamento de Ingenier\'{\i}a Industrial, Facultad de Ciencias F{\'{\i}}sicas y Matem{\'a}ticas, Universidad de Chile, Santiago, Chile. E-mail: \href{mailto:gduran@dm.uba.ar}{\tt gduran@dm.uba.ar}.} \and
        Mart\'{\i}n D.\ Safe\thanks{Instituto de Ciencias, Universidad Nacional de General Sarmiento, Los Polvorines, Buenos Aires, Argentina. E-mail: \href{mailto:msafe@ungs.edu.ar}{\tt msafe@ungs.edu.ar}.} \and
        Xavier S.\ Warnes\thanks{Instituto de C{\'a}lculo and Departamento de Matem{\'a}tica, Facultad de Ciencias Exactas y Naturales, Universidad de Buenos Aires, Buenos Aires, Argentina. E-mail: \href{mailto:xwarnes@dc.uba.ar}{\tt xwarnes@dc.uba.ar}.}
}
\date{December 31, 2015}
\newcommand{\BigO}{\mathcal{O}}
\newcommand{\NP}{$\mathcal{NP}$}
\newcommand{\pn}[1]{\rho_{\mathrm n}(#1)}
\newcommand{\an}[1]{\alpha_{\mathrm n}(#1)}
\newcommand{\at}[1]{\alpha_{2}(#1)}
\begin{document}

\maketitle

\begin{abstract}
Given a simple graph $G$, a set $C \subseteq V(G)$ is a neighborhood cover set if every edge and vertex of $G$ belongs to some $G[v]$ with $v \in C$, where $G[v]$ denotes the subgraph of $G$ induced by the closed neighborhood of the vertex $v$. Two elements of $E(G) \cup V(G)$ are neighborhood-independent if there is no vertex $v\in V(G)$ such that both elements are in $G[v]$. A set $S\subseteq V(G)\cup E(G)$ is neighborhood-independent if every pair of elements of $S$ is neighborhood-independent. Let $\rho_{\mathrm n}(G)$ be the size of a minimum neighborhood cover set and $\alpha_{\mathrm n}(G)$ of a maximum neighborhood-independent set. Lehel and Tuza defined neighborhood-perfect graphs $G$ as those where the equality $\rho_{\mathrm n}(G') = \alpha_{\mathrm n}(G')$ holds for every induced subgraph $G'$ of $G$.

In this work we prove forbidden induced subgraph characterizations of the class of neighborhood-perfect graphs, restricted to two superclasses of cographs: $P_4$-tidy graphs and tree-cographs. We give as well linear-time algorithms for solving the recognition problem of neighborhood-perfect graphs and the problem of finding a minimum neighborhood cover set and a maximum neighborhood-independent set in these same classes.
\end{abstract}

\keywords{forbidden induced subgraphs, neighborhood-perfect graphs, $P_4$-tidy graphs, tree-cographs, recognition algorithms}

\section{Introduction}
\label{sec:intro}
A graph is \emph{perfect} if, for every induced subgraph, the maximum size of a clique equals the minimum number of colors needed to color its vertices such that no two adjacent vertices have the same color. One of the most celebrated results in the last fifteen years in Graph Theory is without a doubt the characterization by forbidden induced subgraphs of the class of perfect graphs. This characterization was proved by Chudnovsky, Robertson, Seymour and Thomas in 2002 \cite{ChudnovskyRoberstonSeymourRobinSPGT2006}, settling affirmatively a conjecture posed more than 40 years before by Berge \cite{berge1961farbung}. The minimal forbidden induced subgraphs of perfect graphs are the chordless cycles of odd length having at least $5$ vertices, called \emph{odd holes} $C_{2k+1}$, and their complements, the \emph{odd antiholes} $\overline C_{2k+1}$.

During the nearly half a century in which this characterization remained a conjecture, many graph classes were defined analogously to perfect graphs by the equality of two parameters (e.g.\ clique perfect graphs \cite{guruswami_algorithmic_2000}, coordinated graphs \cite{bonomo2007coordinated}, neighborhood-perfect graphs \cite{lehel_neighborhood_1986}). 

Neighborhood-perfect graphs were defined in 1986 by Lehel and Tuza \cite{lehel_neighborhood_1986}, by the equality of two parameters for all induced subgraphs. Given a simple graph $G$, a set $C \subseteq V(G)$ is a \emph{neighborhood-covering set} (or \emph{neighborhood set}) if each edge and each vertex of $G$ belongs to some $G[v]$ with $v \in C$, where $G[v]$ denotes the subgraph of $G$ induced by the closed neighborhood of the vertex $v$. Two elements of $E(G) \cup V(G)$ are \emph{neighborhood-independent} if there is no vertex $v\in V(G)$ such that both elements are in $G[v]$. A set $S \subseteq V(G)\cup E(G)$ is said to be a \emph{neighborhood-independent set} if every pair of elements of $S$ is neighborhood-independent. Let $\rho_{\mathrm n}(G)$ be the size of a minimum neighborhood-covering set and $\alpha_{\mathrm n}(G)$ of a maximum neighborhood-independent set. Clearly, $\rho_{\mathrm n}(G) \geq \alpha_{\mathrm n}(G)$ for every graph $G$. When $\rho_{\mathrm n}(G') = \alpha_{\mathrm n}(G')$ for every induced subgraph $G'$ of $G$, $G$ is called a \emph{neighborhood-perfect} graph. Since odd holes and odd antiholes are not neighborhood-perfect (\cite{lehel_neighborhood_1986}), the Strong Perfect Graph Theorem implies that all neighborhood-perfect graphs are also perfect. 

Neighborhood-perfect graphs have been characterized by forbidden induced subgraphs, when restricted to the classes of chordal graphs \cite{lehel_neighborhood_1986}, line graphs \cite{lehel_neighbourhood-perfect_1994} and cographs \cite{gyarfas_minimal_1996}. The characterizations presented here are an extension of this last result. Furthermore, Lehel and Tuza \cite{lehel_neighborhood_1986} proved that finding $\alpha_{\mathrm n}(G)$ and $\rho_{\mathrm n}(G)$ can be done in polynomial time if $G$ is a chordal neighborhood-perfect graph. If $G$ is strongly chordal, interval or a cograph (i.e., $P_4$-free), then linear-time algorithms that find the above mentioned parameters have been given \cite{Branstadt1997CliqueRDomCliqueRpackDuallyChordal,gyarfas_minimal_1996,lehel_neighborhood_1986}. On the other hand it was proven that this problem is $\mathcal{NP}$-complete over a class of split graphs with degree constraints \cite{chang_algorithmic_1993}. Although it follows from previous works (e.g.\ \cite{gyarfas_minimal_1996,lehel_neighbourhood-perfect_1994,lehel_neighborhood_1986}) that deciding whether a graph is neighborhood-perfect can be accomplished in polynomial-time if the input graph belongs to several different graph classes, the computational complexity of recognizing neighborhood-perfect graphs in general is unknown.

The work is organized as follows. In Section 2, we give some preliminary definitions and results, including an introduction to modular decomposition and the structure of the classes of $P_4$-tidy graphs and tree-cographs. In Section 3, we give formulas for $\alpha_{\mathrm n}$ and $\rho_{\mathrm n}$ for the join of two or more graphs and determine all the minimally non-neighborhood-perfect graphs whose complement is disconnected. In Section 4, we prove our structural results, which consist in minimal forbidden induced subgraph characterizations of the class of neighborhood-perfect graphs when restricted to the classes of $P_4$-tidy graphs and tree-cographs, respectively. In Section 5, we give our algorithmic results, which consist in linear-time recognition algorithms for neighborhood-perfectness of $P_4$-tidy graphs and tree-cographs, linear-time algorithms for computing $\alpha_{\mathrm n}$ and $\rho_{\mathrm n}$ for any given $P_4$-tidy graph or tree-cograph, and a proof that the problems of computing $\alpha_{\mathrm n}$ and $\rho_{\mathrm n}$ become \NP-hard for complements of bipartite graphs.

\section{Preliminaries} 
\label{sec:preliminaries}

Before we formulate the results, a few definitions that will be used later on are required. For all undefined terminology we refer to \cite{west2001introduction}. All graphs in this work are finite, undirected, and have no loops or multiple edges. Let $G$ be a graph. We shall denote by $V(G)$ its vertex set and $E(G)$ its edge set, by $m_G$ the size of $E(G)$ and by $n_G$ the size of $V(G)$ (omitting the subscript $G$ when it is clear by context). The complement of $G$ shall be denoted by $\overline{G}$, the neighborhood of a vertex $v$ by $N_G(v)$, and the closed neighborhood $N_G(v) \cup \{v\}$ by $N_G[v]$. We denote by $G[W]$ the subgraph of $G$ induced by $W \subseteq V(G)$. A vertex of $G$ is said to be \emph{pendant} if it is adjacent to exactly one vertex of $G$, \emph{universal} if it is adjacent to all other vertices, and \emph{simplicial} if its neighborhood is a clique. If $H$ is a graph, then $G$ is $H$-free if $G$ contains no induced subgraphs isomorphic to $H$. If $\mathcal{H}$ is a collection of graphs, then $G$ is $\mathcal{H}$-free if $G$ is $H$-free for all $H \in \mathcal{H}$. The chordless path of $k$ vertices is denoted $P_k$, and the chordless cycle of $k$ vertices $C_k$. A cycle is odd if it has an odd number of vertices. The graph $K_n$ is the complete graph of $n$ vertices. If $H$ is a graph and $t$ is a nonnegative integer, then $tH$ denotes the disjoint union of $t$ copies of $H$. The \emph{join} of two graphs $G_1=(V_1,E_1)$ and $G_2=(V_2,E_2)$ (where $V_1\cap V_2=\emptyset$) is the graph $G_1 \vee G_2=(V_1\cup V_2,E_1\cup E_2\cup\{uv\mid u\in V_1,v\in V_2\}$). The disjoint union of two graphs $G_1$ and $G_2$ is denoted by $G_1 + G_2$. The size of a set $S$ is denoted by $|S|$. We shall consider a \emph{clique} (resp.\ \emph{independent set}) to be a set of pairwise adjacent (resp.\ nonadjacent) vertices, and a \emph{2-independent set} to be a set of vertices such that for every pair of vertices of the set, there is no path of length $2$ or less that connects them in $G$. We shall use \emph{maximum} to denote of maximum size, while \emph{maximal} shall denote inclusion-wise maximal (analogously with \emph{minimum} and \emph{minimal}). Given a graph $G$, we shall denote by $\alpha(G)$ the size of any maximum stable set of $G$ and by $\at{G}$ the size of a maximum $2$-independent set of $G$. A set of vertices of $G$ is said to be a \emph{dominating set} if every vertex in $G$ belongs to the set or is adjacent to a vertex of the set. We shall denote by $\gamma(G)$ the size of any minimum dominating set of $G$. A \emph{matching} of a graph is a set of edges such that every node of $G$ is incident with at most one edge of the set. A \emph{vertex cover} of a graph $G$ is a set of vertices $W$ such that every edge of $G$ is incident to a least one vertex of $W$. We shall note by $\tau(G)$ the size of any minimum vertex cover of $G$ and by $\nu(G)$ the size of any maximum matching of $G$.

We say that a graph is \emph{co-connected} if its complement is connected. The \emph{anticomponents} of a graph are its maximal co-connected induced subgraphs (equivalently they are the complements of the components of $G$). This means that if $H_1\dots H_k$ are the anticomponents of $H$, then $H = H_1 \vee \dots \vee H_k$.

A graph $G$ is said to be \emph{chordal} if it is $C_k$-free, for every $k \geq 4$. A \emph{k-sun} (or \emph{trampoline} of order $k$) is a graph having $2k$ vertices $v_1,\dots,v_k,w_1,\dots,w_k$ such that $v_1\dots v_k$ is a cycle and every $w_i$ ($1\leq i \leq k$) has exactly two neighbors: $v_i$ and $v_{i+1}$ ($v_{k+1} = v_1$). A $k$-sun is said to be \emph{odd} if $k$ is odd.

 As stated in Section~\ref{sec:intro}, neighborhood-perfect graphs have been characterized by forbidden induced subgraphs restricted to the class of chordal graphs. The minimal forbidden induced subgraphs are exactly the odd suns defined above. 

 \begin{theorem}[\cite{lehel_neighborhood_1986}]\label{thm:np-chordal-forb}
   A chordal graph $G$ is neighborhood-perfect if and only if it contains no induced odd sun.
 \end{theorem}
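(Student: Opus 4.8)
The statement is an equivalence, so the plan is to prove the two implications separately. Since neighborhood-perfectness is hereditary by definition and both ``chordal'' and ``odd-sun-free'' are hereditary, it suffices, for the direction ``odd-sun-free $\Rightarrow$ neighborhood-perfect'', to prove that $\pn{G}=\an{G}$ whenever $G$ is a chordal graph with no induced odd sun (and then apply this to all induced subgraphs); and for the reverse direction it suffices to check that each odd sun $S$ satisfies $\pn{S}\neq\an{S}$, since then any graph containing $S$ as an induced subgraph already violates the defining equality.

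For the ``only if'' direction I would compute both parameters of an odd $k$-sun $S_k$ on vertices $v_1,\dots,v_k,w_1,\dots,w_k$ (indices mod $k$). The key observation is that, for each $i$, the edges $v_iv_{i+1}$, $v_iw_i$, $v_{i+1}w_i$ and the vertex $w_i$ all lie in $G[u]$ exactly when $u$ belongs to the ``conflict set'' $A_i=\{v_i,v_{i+1},w_i\}$; this uses only that $w_i$ has neighbours exactly $v_i,v_{i+1}$ and that $v_i\sim v_{i+1}$. Hence every neighborhood-covering set meets each $A_i$, and since a vertex $w_j$ meets only $A_j$ while a vertex $v_j$ meets only the two cyclically consecutive sets $A_{j-1},A_j$, at least $\lceil k/2\rceil$ vertices are needed to meet all of them, so $\pn{S_k}\geq\lceil k/2\rceil$. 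Conversely, $\{\,w_i:i\in I\,\}$ is neighborhood-independent for any independent set $I$ of the cycle $v_1\cdots v_k$, and a routine case analysis on which elements can accompany such a set shows that this is optimal, so $\an{S_k}=\lfloor k/2\rfloor$; similarly $\lceil k/2\rceil$ suitably chosen $v_j$'s form a neighborhood-covering set, so $\pn{S_k}=\lceil k/2\rceil$. For odd $k$ this gives $\pn{S_k}=(k+1)/2>(k-1)/2=\an{S_k}$, as required.

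For the ``if'' direction the plan is to prove $\pn{G}=\an{G}$ for chordal odd-sun-free $G$ by induction on $|V(G)|$, using the simplicial-vertex structure of chordal graphs. It helps to read $\pn{G}$ and $\an{G}$ as the transversal number and matching number of the auxiliary hypergraph $\mathcal{D}(G)$ whose vertices are the vertices of $G$ (the candidate centres) and whose hyperedges are the sets $\{\,u\in V(G):x\in G[u]\,\}$, one per element $x\in V(G)\cup E(G)$; then $\pn{G}=\tau(\mathcal{D}(G))$, $\an{G}=\nu(\mathcal{D}(G))$, and $\pn{G}\geq\an{G}$ is just $\tau\geq\nu$. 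In the inductive step I would take a simplicial vertex $v$ (deleting it outright if it is isolated, which drops both parameters by $1$) and show, using odd-sun-freeness, that one of a few local reductions around the clique $N_G[v]$ is always available: for example, if some $u\neq v$ has $N_G[v]\subseteq N_G[u]$ then $\pn{G}=\pn{G-v}$ and $\an{G}=\an{G-v}$ (every element incident to $v$ is covered from $u$, and $v$ can be exchanged for $u$ in any neighborhood-independent set); otherwise the clique must be assigned a dedicated centre and one recurses on a smaller graph, tracking the contribution of that centre on both sides.

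The hard part will be proving that in a chordal odd-sun-free graph some such reduction always applies --- equivalently, that no minimal $G$ with $\pn{G}>\an{G}$ exists. I would argue by contradiction: in a smallest counterexample, pair a minimum neighborhood-covering set with a maximum neighborhood-independent set and descend a perfect elimination ordering until reaching a configuration where the cover cannot be economised; the behaviour of closed neighbourhoods in a chordal graph should then force the vertices and edges responsible for the gap to arrange into an induced odd sun, the parity of the cyclic pattern of conflict sets $A_i$ being precisely what makes $\rho_{\mathrm n}$ exceed $\alpha_{\mathrm n}$, mirroring the role of odd cycles in K\"onig's edge-cover/matching duality. Carrying out that extraction --- identifying which vertices act as the $v_i$ and which as the $w_i$, and verifying that no stray adjacency wrecks the sun --- is the technical core of the argument.
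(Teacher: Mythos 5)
First, note that the paper does not prove this statement at all: it is imported verbatim from Lehel and Tuza \cite{lehel_neighborhood_1986} (Theorem~\ref{thm:np-chordal-forb} carries a citation and no proof), so there is no in-paper argument to compare yours against; your attempt has to stand on its own. The ``only if'' direction of your proposal is essentially sound. The conflict-set observation for $w_i$, $v_iw_i$, $v_{i+1}w_i$ is correct (though the edge $v_iv_{i+1}$ may also lie in $G[v_j]$ for other $j$ when the inner cycle has chords, so it should be dropped from that list), and the transversal bound $\pn{S_k}\geq\lceil k/2\rceil$ follows. The step you label ``routine case analysis'' for $\an{S_k}\leq\lfloor k/2\rfloor$ does need to be closed, but it can be done cleanly: for every element $x\in V(S_k)\cup E(S_k)$ the set $\{u: x\in G[u]\}$ contains at least two of the cycle vertices $v_1,\dots,v_k$ (two consecutive ones for $w_i$ and its incident edges, the endpoints for any edge $v_iv_j$, and $v_{i-1},v_i,v_{i+1}$ for the vertex $v_i$), and these sets are pairwise disjoint over a neighborhood-independent set, so at most $\lfloor k/2\rfloor$ elements fit. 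For odd $k$ this yields $\pn{S_k}>\an{S_k}$ as you intend.

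The genuine gap is the entire ``if'' direction, which is the substance of the theorem. What you offer there is a research plan, not a proof: you yourself defer ``the technical core'' (extracting an induced odd sun from a minimal counterexample by descending a perfect elimination ordering) to a hoped-for argument, and no mechanism is given for why the elements witnessing $\pn{G}>\an{G}$ must organize into a sun rather than some other chordal configuration. Moreover, the one reduction you do state concretely is not justified as written: if $v$ is simplicial and $N_G[v]\subseteq N_G[u]$, the inequality $\pn{G}\geq\pn{G-v}$ is easy (swap $v$ for $u$ in a cover), but $\pn{G}\leq\pn{G-v}$ fails for the naive reason you give --- a minimum neighborhood-covering set of $G-v$ need not cover the edges incident to $v$ (already in $P_3=a\,u\,v$ the cover $\{a\}$ of $G-v$ misses the edge $uv$), so one must argue that some optimal cover of $G-v$ can be modified, which you do not do. Since the forward implication is where all the difficulty of Lehel--Tuza's theorem lives, the proposal as it stands does not constitute a proof of the statement.
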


A graph $G$ was defined to be minimally non-neighborhood-perfect in \cite{gyarfas_minimal_1996}, if $G$ is not neighborhood-perfect, but all proper induced subgraphs of $G$ are. We now state for future reference a result proven in \cite{gyarfas_minimal_1996} that determines exactly which graphs $G$are minimally non-neighborhood-perfect and have $\an{G} = 1$. 

\begin{theorem}[\cite{gyarfas_minimal_1996}]\label{teo:MinNonNPwithAN1Are3PyrAnd0Pyr}
If $G$ is a minimally non-neighborhood-perfect graph and $\an{G}=1$, then $G$ is a $3$-sun or $\overline{3K_2}$.
\end{theorem}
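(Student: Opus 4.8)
The plan is to show that the hypotheses make $G$ rigid enough to be identified by hand, via a reduction to the case $\pn{G}=2$ followed by a structural analysis of a minimum neighborhood cover. First I would record some reductions. Since $G$ is not neighborhood-perfect and $\an{G}=1$, we get $\pn{G}\ge 2$, and $\pn{G}=1$ would mean $V(G)=N_G[v]$ for some $v$, so $G$ has no universal vertex. For the other direction, adding any vertex $x$ to a minimum neighborhood cover of $G-x$ gives a neighborhood cover of $G$, so $\pn{G}\le\pn{G-x}+1$; since $\pn{G-x}=\an{G-x}$ by minimality and, for a suitable choice of $x$, $\an{G-x}\le\an{G}=1$ (any neighborhood-independent pair of $G-x$ that fails to be neighborhood-independent in $G$ must lie inside the closed neighborhood of $x$), one gets $\pn{G}=2$. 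Also, every $2$-independent set of vertices is neighborhood-independent, so $\at{G}\le\an{G}=1$ and hence $G$ is connected of diameter at most $2$; and since odd holes and odd antiholes are not neighborhood-perfect, $G$ is perfect.

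Next I would fix a neighborhood cover $\{u,w\}\subseteq V(G)$ and partition $V(G)=A\cup M\cup B$ with $A=N_G[u]\setminus N_G[w]$, $M=N_G[u]\cap N_G[w]$, $B=N_G[w]\setminus N_G[u]$. Then: covering all edges forbids any $A$--$B$ edge; $A\ne\emptyset\ne B$ because $u,w$ are not universal; $M\ne\emptyset$ because $G$ is connected; applying $\an{G}=1$ to pairs of vertices, each $a\in A$ and $b\in B$ have a common neighbor, which must lie in $M$; and applying $\an{G}=1$ to pairs of disjoint edges, every induced $2K_2$ of $G$ has a vertex adjacent to all four of its vertices. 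A useful dichotomy is: if $u$ has no neighbor in $A$, then $N_G(u)\subseteq M$ and so $A=\{u\}$ (and symmetrically for $w$ and $B$).

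Finally I would split according to whether $A=\{u\}$ and $B=\{w\}$ both hold. If so, then $u\not\sim w$, each of $u,w$ is complete to $M$, and $G=2K_1\vee G[M]$; the absence of a universal vertex forces every vertex of $M$ to miss another vertex of $M$, the cases $|M|\le 3$ produce $C_4$ or a graph with a universal vertex, and for $|M|\ge 4$ the constraints $\an{G}=1$ and minimality force $|M|=4$ with $G[M]=C_4$, i.e.\ $G=2K_1\vee C_4=\overline{3K_2}$. Otherwise some endpoint of the cover has a neighbor on its own side, and then $\an{G}=1$ applied to suitable vertex/edge and edge/edge pairs, together with minimality, determines $A$, $B$, $M$, and all adjacencies inside $M$, and forces $G$ to be the $3$-sun (the configuration realized, for instance, by the cover $\{v_1,v_3\}$ of the $3$-sun). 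The main obstacle is exactly this case analysis: many candidate configurations of $A$, $B$, $M$, and of the edges inside $M$ (the only ``free'' data once the cover is fixed) must each be eliminated by exhibiting a universal vertex (contradicting $\pn{G}=2$), a neighborhood-independent pair (contradicting $\an{G}=1$), or a proper induced subgraph that is not neighborhood-perfect (contradicting minimality; when that subgraph is chordal this follows from Theorem~\ref{thm:np-chordal-forb}). Bounding $|M|$ and pinning down $G[M]$, and making sure no configuration is overlooked, is where the real work lies.
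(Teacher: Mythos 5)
First, a point of comparison: the paper does not prove this statement at all --- it is imported verbatim from \cite{gyarfas_minimal_1996} and used as a black box --- so there is no internal proof to measure your attempt against; your proposal has to stand on its own.

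Judged on its own, what you have is a plausible plan, not a proof, and the gap is substantive in two places. First, the reduction to $\pn{G}=2$ hinges on the existence of a vertex $x$ with $\an{G-x}\le 1$. Your parenthetical observation (a pair that becomes neighborhood-independent only after deleting $x$ must lie entirely inside $G[x]$) is correct, but it does not produce such an $x$: a priori every vertex $x$ could be the unique coverer of some pair inside $N[x]$, in which case $\an{G-x}\ge 2$ for all $x$ and your inequality only yields $\pn{G}\le 3$. The fact that $\pn{}=\an{}+1$ for minimally non-neighborhood-perfect graphs is true, but it is itself a lemma of \cite{gyarfas_minimal_1996} that needs an argument you do not supply. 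Second, and more seriously, the identification of $G$ is deferred exactly where the theorem lives. In the case $A=\{u\}$, $B=\{w\}$ you assert that for $|M|\ge 4$ ``the constraints force $G[M]=C_4$,'' but $G[M]=\overline{P_5}$ and $G[M]=3$-sun both satisfy $\at{G[M]}=1$ and have no vertex universal in $M$, and the resulting graphs $2K_1\vee\overline{P_5}$ and $2K_1\vee(3\text{-sun})$ genuinely have $\an{}=1<2=\pn{}$; they are excluded only because each properly contains a non-neighborhood-perfect induced subgraph, which is a minimality argument that must actually be made for \emph{every} admissible $G[M]$, not gestured at. In the remaining case (some endpoint of the cover with a neighbor on its own side) you give no argument at all beyond ``suitable pairs \ldots force $G$ to be the $3$-sun.'' You say yourself that this case analysis ``is where the real work lies''; since that case analysis \emph{is} the theorem, the proposal as written establishes only the easy preliminaries (the $A$/$M$/$B$ partition, the absence of $A$--$B$ edges, common neighbors in $M$, the $2K_2$ condition), all of which are correct but nowhere near sufficient.
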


\subsection{Modular Decomposition}
Let $G$ be a graph. We shall say that a vertex $v$ of $G$ \textit{distinguishes} between two vertices $x$ and $y$ of $G$ if it is adjacent to one of them and nonadjacent to the other. A set $M$ of vertices shall be called a \textit{module} of $G$ if there is no vertex of $V(G)\setminus M$ that distinguishes any pair of vertices of $M$, or equivalently every vertex of $G$ not in $M$ is either adjacent to all vertices of $M$ or to none of them. The empty set, the singletons $\{v\}$ for each $v \in V(G)$ and $V(G)$ are the \textit{trivial modules} of $G$. A graph is said to be \textit{prime} if it has more than two vertices and it has only trivial modules (for example $P_4$ is a prime graph.) A nonempty module is \textit{strong} if, for every other module $M'$ of $G$, either $M'\subseteq M$, $M \subseteq M'$ or $M \cap M' = \emptyset$. The \textit{modular decomposition tree} $T(G)$ of a graph $G$ is a rooted tree having one node for each strong module of $G$ and such that a node $h$ representing a strong module $M$ has as its children the nodes representing the maximal strong modules of $G$ properly contained in $M$. Clearly the root of the tree represents the module $V(G)$, and every leaf one of the singletons $\{v\}$, for each $v\in V(G)$.\\
For each node $h$ of $T(G)$, we note the module represented by $h$ as $M(h)$. Note that, by construction, if we associate with each leaf the only vertex the module represents, then $M(h)$ corresponds to the set of leafs that have $h$ as an ancestor in $T(G)$.\\
For each node $h$ of $T(G)$, we denote the induced subgraph $G[M(h)]$ by $G[h]$ and call it the \textit{graph represented by h}. Each node of $T(G)$ that is not a leaf is a \textit{parallel}, \textit{series} or \textit{neighborhood} node, and called a \textit{P-node}, \textit{S-node} or \textit{N-node}, respectively. If $G[h]$ is disconnected, $h$ is a P-node; if $\overline{G[h]}$ is disconnected, $h$ is an S-node; and if both $G[h]$ and $\overline{G[h]}$ are connected, then $h$ is an N-node. Thus, if $h$ is an internal node of $T(G)$ and $h_1, \dots, h_k$ are the children of $h$ in $T(G)$, then one of the following conditions holds:
\begin{itemize}
    \item{If $G[h]$ is disconnected, then $h$ is a P-node and $G[h_1], \dots, G[h_k]$ are the components of $G[h]$.}
    \item{If $\overline{G[h]}$ is disconnected, then $h$ is an S-node and $G[h_1],\dots,G[h_k]$ are the anticomponents of $G[h]$.}
    \item{If $G[h]$ and $\overline{G[h]}$ are both connected, then $h$ is an N-node and $M(h_1),\dots,M(h_k)$ are the maximal strong modules of $G[h]$ properly contained in $M(h)$.}
\end{itemize}
In all of these cases, it holds that $\{M(h_1),\dots,M(h_k)\}$ is a disjoint partition of the nodes in $M(h)$ \cite{BuerDecomp1983,GallaiCographs1967}.

Let $h$ be a node of $T(G)$ and let $h_1,\dots,h_k$ be its children. We shall denote by $\pi(h)$ the graph having vertex set $\{h_1,\dots,h_k\}$ and such that $h_i$ is adjacent to $h_j$ if and only if there is some edge in $G$ joining a vertex of $M(h_i)$ and a vertex of $M(h_j)$. Since $M(h_i)$ and $M(h_j)$ are both modules of $G[h]$, then clearly there is an edge between them if and only if every vertex of $M(h_i)$ is adjacent to every vertex of $M(h_j)$. Hence $G[h]$ coincides with the graph that arises from $\pi(h)$ by successively substituting $h_i$ by $G[h_i]$, for each $h_i$. Note that each $\pi(h)$ must be a prime graph, since all $M(h_i)$ are maximal strong modules, and if $\pi(h)$ had a nontrivial module of more than one vertex, the modules $M[h_i]$ corresponding to these vertices would form a module of $G[h]$. We shall denote by $\pi(G)$ the set $\{ \pi(h) \colon \textit{h is an N-node of T(G)} \}$. The following result shows that every induced prime subgraph of a graph $G$ is also an induced subgraph of a graph in $\pi(G)$.

\begin{theorem}[\cite{Fouquet_OnSemiP4sparseGraphs_1997}]\label{theo:primeFreeGraphs}
Let $Z$ be a prime graph. A graph $G$ is $Z$-free if and only if each graph of $\pi(G)$ is $Z$-free.
\end{theorem}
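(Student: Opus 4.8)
The plan is to prove the two implications separately, the forward one being routine and the reverse one carrying the real content. For the implication ``$G$ is $Z$-free $\Rightarrow$ every graph of $\pi(G)$ is $Z$-free'', I would first observe that each $\pi(h)$ is (isomorphic to) an induced subgraph of $G$: given an N-node $h$ with children $h_1,\dots,h_k$, pick a representative vertex $v_i\in M(h_i)$ for each $i$; since adjacency between two distinct modules $M(h_i)$ and $M(h_j)$ is ``all or nothing'', the map $h_i\mapsto v_i$ is an isomorphism of $\pi(h)$ onto $G[\{v_1,\dots,v_k\}]$. Hence a $Z$-free $G$ forces every $\pi(h)\in\pi(G)$ to be $Z$-free as well.

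For the converse I would argue contrapositively: assume $G$ contains an induced copy of $Z$ on a vertex set $W$, and note $|W|\ge 3$ since $Z$ is prime. Let $h$ be the node of $T(G)$ of largest depth with $W\subseteq M(h)$ --- equivalently the least common ancestor in $T(G)$ of the leaves corresponding to $W$, which is well defined because the root of $T(G)$ represents $V(G)\supseteq W$ and the nodes whose module contains $W$ form a path from the root. Write $h_1,\dots,h_k$ for the children of $h$ and set $W_i:=W\cap M(h_i)$; because $\{M(h_1),\dots,M(h_k)\}$ partitions $M(h)$, the nonempty $W_i$'s partition $W$, and by the depth-maximality of $h$ no single $M(h_i)$ contains $W$, so at least two of the $W_i$ are nonempty.

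The crux is that each nonempty $W_i$ is a module of $G[W]$: any $w\in W\setminus W_i$ lies outside $M(h_i)$ (the $M(h_j)$ being pairwise disjoint), and since $M(h_i)$ is a module of $G[h]$, $w$ is adjacent to all or to none of $M(h_i)$, in particular of $W_i$. As $Z\cong G[W]$ is prime and each $W_i$ is a proper subset of $W$, every nonempty $W_i$ is a singleton. Writing $W=\{w_1,\dots,w_t\}$ with $w_j\in M(h_{i_j})$ for pairwise distinct children $h_{i_1},\dots,h_{i_t}$, the map $h_{i_j}\mapsto w_j$ is then an isomorphism of $\pi(h)[\{h_{i_1},\dots,h_{i_t}\}]$ onto $G[W]\cong Z$, since $h_a h_b$ is an edge of $\pi(h)$ exactly when every vertex of $M(h_a)$ is adjacent to every vertex of $M(h_b)$, hence exactly when $w_a$ and $w_b$ are adjacent. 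Thus $\pi(h)$ has an induced $Z$. It remains to see that $h$ is an N-node: if $h$ were a P-node then $\pi(h)$ would be edgeless, and if $h$ were an S-node then $\pi(h)$ would be complete, but no edgeless or complete graph admits an induced copy of a prime graph (which has at least three vertices and is neither complete nor edgeless). Hence $h$ is an N-node, $\pi(h)\in\pi(G)$, and $\pi(h)$ is not $Z$-free, as desired.

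The main obstacle is the reverse implication, and specifically the two structural claims inside it: choosing the correct node $h$ --- the least common ancestor of $W$, so that $W$ genuinely spreads over at least two children --- and verifying that the traces $W\cap M(h_i)$ constitute a modular partition of the prime graph $G[W]$. Once these are in place, primeness forces those traces to be singletons and the quotient $\pi(h)$ to contain $Z$, while the degenerate node types are eliminated by the trivial (complete or edgeless) structure of their quotients.
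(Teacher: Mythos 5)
Your proof is correct. Note that the paper does not actually prove this statement---it is imported verbatim from Fouquet and Giakoumakis \cite{Fouquet_OnSemiP4sparseGraphs_1997}---so there is no in-paper argument to compare against; what you give is the standard least-common-ancestor proof, and both implications are sound: the forward direction correctly realizes each $\pi(h)$ as an induced subgraph of $G$ via one representative per child module, and the reverse direction correctly identifies the deepest node $h$ with $W\subseteq M(h)$, shows the traces $W\cap M(h_i)$ are modules of the prime graph $G[W]$ (hence singletons), transfers $Z$ into $\pi(h)$, and rules out P- and S-nodes because their quotients are edgeless or complete while a prime graph on at least three vertices is neither.
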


In the rest of this work, we shall note $|V(G[h])|$ by $n(h)$, for every $h\in V(T(G))$. If $h$ is an N-node, then we shall note $|V(\pi(h))|$ by $n_{\pi}(h)$ and $|E(\pi(h))|$ by $m_\pi(h)$. A fact that will be used in what follows is that since $T(G)$ has $n$ nonadjacents and each internal node has at least two children, $T(G)$ must have less than $2n$ nodes. An important property that we shall use extensively is that the sum of $n_{\pi}(h)$, over all N-nodes $h$ of $T(G)$, is at most $2n$ \cite{Baumann96alinear}.

In this work we shall assume that each N-node $h$ of the modular decomposition tree $T(G)$ is accompanied by a description of the prime graph $\pi(h)$, by means of an adjacency list. There are linear-time algorithms to compute the rooted tree $T(G)$ \cite{CournierModDecomp1994,DahlhausDecomp2001,McConnellMDecomp1999,TedderSimpleModDecom2008}, moreover in \cite{Baumann96alinear} it is shown that the adjacency lists of each $\pi(h)$, for every N-node $h$, can be added also in linear-time. For a survey on the algorithmic aspects of modular decompositions, see \cite{HabibModDecompSurvey2010}.

\subsection{Structure of $P_4$-tidy graphs and tree-cographs}

We shall now introduce the two classes in which we will be studying neighborhood-perfectness. Both of these classes are generalizations of the class of cographs.

A graph $G=(V,E)$ is \emph{$P_4$-tidy} if for every vertex set $A$ inducing a $P_4$ in $G$ there is at most one vertex $v \in V \setminus A$ such that $G[A\cup\{v\}]$ contains at least two induced $P_4$'s. There is a structure theorem for $P_4$-tidy graphs that extends Seinsche's theorem in cographs, in terms of starfishes and urchins.

A \textit{starfish} is a graph whose vertex set can be partitioned in three disjoint sets, $S$, $C$ and $R$, where each of the following conditions holds:
 \begin{itemize}
     \item{$S = \{s_1,\dots,s_t\}$ is a stable set and $C = \{c_1,\dots,c_t\}$ is a clique, for $t\geq 2$.}
     \item{$R$ is allowed to be empty. If it is not, then all vertices of $R$ are adjacent to all vertices of $C$ and nonadjacent to all vertices of $S$.}
     \item{$s_i$ is adjacent to $c_j$ if and only if $i=j$.}
 \end{itemize}

An \textit{urchin} is a graph whose set can be partitioned intro three sets $S$, $C$ and $R$ satisfying the first two conditions stated above, but instead of satisfying the  third one, it must satisfy that:
\begin{itemize}
    \item{$s_i$ is adjacent to $c_j$ if and only if $i\neq j$.}
\end{itemize}

It is clear that urchins are the complement of starfish and vice versa. Given $G$, a starfish or urchin, and a partition $(S,C,R)$, we shall call $S$ the \textit{ends} of $G$, $C$ the \textit{body} of $G$, and $R$ the \textit{head} of $G$. A \textit{fat urchin} (resp.\ \textit{fat starfish}) arises from an urchin (resp.\ starfish), with partition $(S,C,R)$, by substituting exactly one vertex of $S\cup C$ by a $K_2$ or a $2K_1$.

\begin{theorem}[\cite{Giakoumakis_OnP4TidyGraphs_1997}] \label{theo:StructureOfP4Tidy}
If $G$ is a $P_4$-tidy graph, then exactly one of the following statements holds:
\begin{enumerate} 
     \item {$G$ or $\overline{G}$ is disconnected.}
     \item {$G$ is isomorphic to $C_5$, $P_5$, $\overline{P_5}$, a starfish, a fat starfish, an urchin, or a fat urchin.}
\end{enumerate}
\end{theorem}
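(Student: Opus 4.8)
The plan is to argue through the modular decomposition $T(G)$, together with the quotient graphs $\pi(h)$, which by the results recalled above are available to us. First observe that the two alternatives are mutually exclusive: $C_5$, $P_5$, $\overline{P_5}$ and every starfish, fat starfish, urchin and fat urchin (here $t\ge 2$, so the body is a nonempty clique) is connected and co-connected, hence cannot satisfy statement~1. So it suffices to show that if $G$ is connected and co-connected then $G$ satisfies statement~2. Assume this. Then the root $r$ of $T(G)$ is an N-node; put $F:=\pi(r)$. As noted before Theorem~\ref{theo:primeFreeGraphs}, $F$ is prime, and choosing one vertex out of each maximal strong module $M(h_i)$ realizes $F$ as an induced subgraph of $G$; since $P_4$-tidiness is inherited by induced subgraphs, $F$ is a prime $P_4$-tidy graph, and $G$ is obtained from $F$ by substituting the graph $G[h_i]$ for each vertex $h_i$ of $F$.

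\emph{Step 1: classify the prime $P_4$-tidy graphs.} The claim is that they are exactly $C_5$, $P_5$, $\overline{P_5}$, and the starfishes and urchins whose head has at most one vertex. Being prime on at least four vertices, $F$ is not a cograph, so it contains an induced $P_4$. If $F$ is $P_4$-sparse, then the structure theorem for $P_4$-sparse graphs due to Jamison and Olariu, together with primeness, forces $F$ to be a starfish or an urchin; and since the head of such a graph is always a module, primeness forces it to have at most one vertex. If $F$ is not $P_4$-sparse, fix five vertices inducing at least two $P_4$'s; combining $P_4$-tidiness (for each $P_4$ on a $4$-set $A$, at most one vertex outside $A$ completes it to a configuration with two $P_4$'s) with primeness (every vertex set of size strictly between $1$ and $|V(F)|$ is split by some outside vertex), a finite case analysis on how additional vertices may attach shows that $|V(F)|=5$, and hence $F$ is one of the three prime non-$P_4$-sparse graphs on five vertices, namely $C_5$, $P_5$ and $\overline{P_5}$.

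\emph{Step 2: determine the admissible substitutions.} If $F\in\{C_5,P_5,\overline{P_5}\}$, then enlarging any vertex of $F$ into a module with two vertices produces a $4$-set inducing a $P_4$ that is completed to a configuration with two $P_4$'s by two distinct vertices — one inside the enlarged module and one suitable vertex of $F$ — contradicting $P_4$-tidiness of $G$; hence every $M(h_i)$ is a singleton and $G\cong F$. If $F$ is a starfish or urchin with ends $S$, body $C$ and head of size at most one, the analogous double-$P_4$ obstruction shows that $M(h_i)$ may be non-trivial for at most one $h_i$ with $h_i\in S\cup C$, and that then $G[h_i]$ is $K_2$ or $2K_1$ (any other graph on at least two vertices in that position, or a second non-trivial module over $S\cup C$, creates a forbidden configuration); meanwhile the module over the head vertex of $F$, if any, may be arbitrary and simply becomes the head of $G$. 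Hence $G$ is a starfish, urchin, fat starfish or fat urchin, which completes statement~2.

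The main obstacle is the finite case analysis of Step~1 establishing that a prime, $P_4$-tidy, non-$P_4$-sparse graph has exactly five vertices (and is therefore $C_5$, $P_5$ or $\overline{P_5}$), together with the parallel analysis in Step~2 certifying that the only admissible non-trivial substitution over $S\cup C$ is a single $K_2$ or $2K_1$. Both are elementary but require a complete enumeration of the relevant $5$- and $6$-vertex configurations; once these are in hand, the remainder is routine bookkeeping through $T(G)$ and the formula $G[r]=\pi(r)$ with each $h_i$ replaced by $G[h_i]$.
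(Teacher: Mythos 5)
The paper does not prove this statement at all: it is quoted verbatim from Giakoumakis, Roussel and Thuillier \cite{Giakoumakis_OnP4TidyGraphs_1997} and used as a black box. So there is no in-paper proof to compare against; what can be assessed is whether your proposal stands on its own. Its architecture is sound and is essentially the one used in the cited source: reduce to the prime quotient $\pi(r)$ of the root N-node, classify the prime $P_4$-tidy graphs, and then control which modules may be substituted nontrivially. The intermediate claims you state are all true (the alternatives are mutually exclusive since $t\ge 2$ forces connectedness and co-connectedness; the prime $P_4$-sparse graphs are exactly the spiders, i.e.\ starfishes and urchins with head of size at most one; the prime non-$P_4$-sparse $P_4$-tidy graphs are exactly $C_5$, $P_5$, $\overline{P_5}$; and the only admissible nontrivial substitution over $S\cup C$ is a single $K_2$ or $2K_1$, while the head module is unconstrained).

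The gap is that the two claims carrying essentially all of the content of the theorem are asserted rather than proved. In Step~1 the statement that a prime, $P_4$-tidy, non-$P_4$-sparse graph has exactly five vertices is justified only by ``a finite case analysis \ldots shows that $|V(F)|=5$''; this is the heart of the Giakoumakis--Roussel--Thuillier argument and is not routine --- one must show that no sixth vertex can be attached to any of the five-vertex configurations with two $P_4$'s without either violating $P_4$-tidiness or creating a nontrivial module, and this requires actually enumerating the attachments. The same applies to the Step~2 claim that a second nontrivial module over $S\cup C$, or a substitution by anything other than $K_2$ or $2K_1$, always produces a $P_4$ with two extension vertices. You acknowledge both omissions explicitly, which is honest, but as written the argument is a correct plan rather than a proof. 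A secondary point: you lean on the Jamison--Olariu structure theorem for $P_4$-sparse graphs, an external result of comparable depth to the target statement; that is a legitimate move, but it should be cited as a hypothesis of the argument rather than folded silently into ``the results recalled above'', since the present paper recalls nothing about $P_4$-sparse graphs.
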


Let $G$ be a $P_4$-tidy graph and $h$ be an N-node of $T(G)$, the modular decomposition tree of $G$. Theorem \ref{theo:StructureOfP4Tidy} implies that $\pi(h)$ must be isomorphic to $C_5$, $P_5$, $\overline{P_5}$, a prime starfish, or a prime urchin. Moreover, if $\pi(h)$ is isomorphic to a prime urchin or a prime starfish, each of the children of $h$ in $T(G)$ is a leaf except for at most one child $h_R$ that represents the head and/or another child representing $2K_1$ or $K_2$. As was seen in \cite{Giakoumakis_OnP4TidyGraphs_1997}, in $\BigO(n_{\pi}(h))$ time, it can be decided whether or not $\pi(h)$ is a starfish (resp.\ an urchin) and, if this is the case, find its partition.

The class of \textit{tree-cographs} \index{tree-cograph} is another superclass of the class of cographs. Tree-cographs were introduced in \cite{TinhoferTreeCographs1988} by the following recursive definition:
\begin{defn}~\ \label{def:treecograph}
\begin{enumerate}
    \item Every tree is a tree-cograph.
    \item If $G$ is a tree-cograph, then $\overline{G}$ is a tree cograph.
    \item The disjoint union of tree cographs is a tree cograph. 
\end{enumerate}
\end{defn}
This definition implies that if $G$ is a tree-cograph, then either $G$ or $\overline{G}$ is disconnected, or $G$ is a tree or the complement of a tree. Hence, both Theorem~\ref{theo:StructureOfP4Tidy} and Definition~\ref{def:treecograph} imply that if $G$ is a tree-cograph and $h$ is an N-node of $T(G)$, then $\pi(h)$ is a tree or the complement of a tree. Note that every tree of more than two nodes is a prime graph, and complements of prime graphs are also prime graphs.

\section{Parameters and Minimal Forbidden Induced Subgraphs} 
\label{sec:parametersMFIS}
In order to effectively use the inherent structure of $P_4$-tidy graphs and tree-cographs, we first explore how the join operation modifies $\alpha_{\mathrm n}$ and $\rho_{\mathrm n}$. As a consequence of these results, we will determine ahead in this section all the minimally non-neighborhood-perfect graphs whose complement is disconnected. As a byproduct, we will also characterize the class of graphs that arises by requiring $\alpha_2=\rho_{\mathrm n}$ for every induced subgraph, which we will call the class of \emph{strongly neighborhood-perfect graph}.

\begin{theorem}\label{teo:PnOfJoin}
If $G$ and $H$ are graphs, then  
\[\pn{G \vee H} = \min \{ \gamma(H) + 1, \gamma(G) + 1, \pn{H},  \pn{G}\}.\]
\end{theorem}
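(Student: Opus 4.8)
The plan is to reduce the whole statement to one structural observation about joins. Regarding a vertex $v\in V(G)$ as a vertex of $G\vee H$, its closed neighborhood is $N_{G\vee H}[v]=N_G[v]\cup V(H)$, and symmetrically for $v\in V(H)$. Hence the subgraph of $G\vee H$ induced by $N_{G\vee H}[v]$ always contains the whole of $H$ (vertices and edges) together with the part of $G$ spanned by $N_G[v]$ and all edges between $N_G[v]$ and $V(H)$. Writing $C_G=C\cap V(G)$ and $C_H=C\cap V(H)$ for a neighborhood-covering set $C$ of $G\vee H$, this translates the covering requirement into three independent demands: (i) every vertex and edge lying inside $H$ is covered as soon as $C_G\neq\emptyset$, and symmetrically inside $G$ as soon as $C_H\neq\emptyset$; (ii) a crossed edge $uw$ with $u\in V(G)$, $w\in V(H)$ is covered if and only if $u$ is dominated in $G$ by some vertex of $C_G$ or $w$ is dominated in $H$ by some vertex of $C_H$; and (iii) if $C_H=\emptyset$ then, since $C\subseteq V(G)$, every vertex and edge of $G$ must be covered by $C_G$ acting inside $G$, i.e.\ $C_G$ must itself be a neighborhood-covering set of $G$ (symmetrically with the roles of $G$ and $H$ swapped).

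For the inequality $\le$ I would exhibit four neighborhood-covering sets of $G\vee H$ of the required sizes. A minimum neighborhood-covering set of $G$, used inside $G\vee H$, covers $G$ by definition and, being nonempty and dominating in $G$, covers every vertex and edge touching $H$ as well as every crossed edge; symmetrically for $H$. A minimum dominating set $D$ of $G$ together with one arbitrary vertex $w_0\in V(H)$ also works: $w_0$ covers all edges inside $G$ because its closed neighborhood in $G\vee H$ contains $V(G)$, $D$ covers all vertices of $G$, and $D$ covers every crossed edge and everything inside $H$ since $D\neq\emptyset$; symmetrically with $G$ and $H$ swapped. This gives $\pn{G\vee H}\le\min\{\gamma(H)+1,\gamma(G)+1,\pn{H},\pn{G}\}$.

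For the inequality $\ge$, let $C$ be a minimum neighborhood-covering set of $G\vee H$ and distinguish cases. If $C_H=\emptyset$, point (iii) forces $C_G$ to be a neighborhood-covering set of $G$, so $|C|\ge\pn{G}$; symmetrically $C_G=\emptyset$ gives $|C|\ge\pn{H}$. If $C_G\neq\emptyset\neq C_H$, pick, if they exist, a vertex $u_0$ of $G$ not dominated by $C_G$ and a vertex $w_0$ of $H$ not dominated by $C_H$; by point (ii) the crossed edge $u_0w_0$ could not be covered, a contradiction, so $C_G$ dominates $G$ or $C_H$ dominates $H$. In the first case $|C|=|C_G|+|C_H|\ge\gamma(G)+1$, in the second $|C|\ge\gamma(H)+1$. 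Taking the minimum over the cases proves $\ge$, and with the previous paragraph the formula follows. The only genuinely delicate point is the crossed-edge analysis of point (ii): one has to notice that a vertex of $C_G\subseteq V(G)$ can only help cover a crossed edge $uw$ through its $G$-endpoint $u$, so that covering all crossed edges is precisely the disjunction ``$C_G$ dominates $G$ or $C_H$ dominates $H$''; the remaining verifications are routine bookkeeping. I would also remark at the outset that $G$ and $H$ are taken to be nonempty, since otherwise $G\vee H$ equals the other graph while the right-hand side collapses to $0$.
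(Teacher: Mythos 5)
Your proof is correct and follows essentially the same route as the paper's: the upper bound comes from the same four candidate covers (a minimum neighborhood-covering set of either factor, or a minimum dominating set of one factor plus one vertex of the other), and the lower bound rests on the same crossed-edge observation, which the paper phrases as a contradiction against a strictly smaller cover while you organize it as a direct case analysis on whether the cover meets both sides. The only substantive difference is presentational, plus your (sensible) explicit remark about nonempty factors, which the paper leaves implicit.
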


\begin{proof}
It is immediate to see that \[\pn{G\vee H} \leq \min \{ \gamma(H) + 1, \gamma(G) + 1, \pn{H},  \pn{G}\},\] for we can easily find neighborhood sets of $G \vee H$ with all four amounts considered. Simply take a minimum dominating set of either $G$ or $H$ and any vertex in the other graph or, instead, take a minimum neighborhood set in $G$ or $H$.

Let us then prove that indeed the inequality above cannot hold strictly. 

By contradiction let us say that we have a neighborhood set of $S$ of $G\vee H$, with size strictly less than $\min \{ \gamma(H) + 1, \gamma(G) + 1, \pn{H},  \pn{G}\}$. Hence, as $S$ has fewer vertices than $\pn{G}$ and $\pn{H}$, it must have at least one vertex in each $G$ and $H$. For if not, there would be uncovered edges in the subgraphs corresponding to $G$ or $H$ in the join. Thus if we take $S_G = S \cap V(G)$ and $S_H = S \cap V(H)$, then $|S_H| \leq |S|-1$ and $|S_G| \leq |S|-1$. But as we are assuming that $|S|-1 < \gamma(G)$ and $|S|-1 < \gamma(H)$, we have that neither $S_G$ nor $S_H$ can be dominating sets of $G$ and $H$ respectively. This means that there must be at least some $v\in V(G)$ and some $w\in V(H)$ such that $v \notin N_G[S_G]$ and $ w \notin N_H[S_H]$. And then if we take the edge $(v,w)$ in $G\vee H$, it cannot be covered by $S$, for there is no vertex in $S_H$ or $S_G$ adjacent to both vertices and $S = S_G \cup S_H$. Thus, $S$ is not a neighborhood set of the join, reaching the contradiction that proves the theorem. 
\end{proof}

\begin{theorem}\label{teo:AnOfJoin}
If $G$ and $H$ are graphs, then  
\[\an{G \vee H} = \min \{\at{G}, \at{H}\}.\]
\end{theorem}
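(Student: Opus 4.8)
I want to show $\an{G \vee H} = \min\{\at{G}, \at{H}\}$. The approach is to prove two inequalities, and the key conceptual point is to understand which elements of $E(G\vee H) \cup V(G\vee H)$ can appear together in a neighborhood-independent set. Recall that in $G \vee H$, the closed neighborhood of a vertex $v \in V(G)$ is $N_G[v] \cup V(H)$, so $G\vee H[v] = G[N_G[v]] \vee H$; symmetrically for $v \in V(H)$. The crucial observation is that \emph{any} vertex of $G$ together with \emph{any} vertex of $H$ induces an edge that is covered by $G\vee H[v]$ for every single vertex $v$ of the graph; consequently a neighborhood-independent set can contain at most one element touching $V(H)$ from the ``$G$ side'' perspective and vice versa. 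More precisely, two elements $e_1, e_2$ are neighborhood-independent iff no closed neighborhood contains both, and since every vertex of $G$ is adjacent to all of $H$ (and conversely), an element with all its vertices in $V(G)$ and an element with all its vertices in $V(H)$ are \emph{never} neighborhood-independent; also a ``crossing'' edge $uw$ with $u\in V(G), w \in V(H)$ lies in every $G\vee H[v]$, so it is neighborhood-independent with nothing except possibly itself. Hence any neighborhood-independent set $S$ of $G\vee H$ either is contained (as elements) entirely within $V(G)\cup E(G)$, or entirely within $V(H) \cup E(H)$, or consists of a single crossing edge.

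For the inequality $\an{G\vee H} \ge \min\{\at{G},\at{H}\}$: take a maximum $2$-independent set $X$ in, say, $G$ (the one achieving the minimum); I claim $X$, viewed as a set of vertices of $G \vee H$, is neighborhood-independent in $G \vee H$. Indeed, two vertices $x, y \in X$ with no path of length $\le 2$ between them in $G$ can have a common neighbor in $G\vee H$ only via $V(H)$; but a vertex $w \in V(H)$ adjacent to both $x$ and $y$ would be a common neighbor, giving a path $x\, w\, y$ of length $2$ — however such a path uses a vertex of $H$, and the definition of $2$-independent in $G$ only forbids paths \emph{in $G$}. So I need to be careful: the claim as stated is false unless $H$ is empty. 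The correct statement must be that a neighborhood-independent set of \emph{vertices only} of $G$ within $G \vee H$ corresponds exactly to a $2$-independent set of $G$, because in $G \vee H$ the common-neighbor relation among vertices of $G$ is: $x,y$ share a closed neighbor iff they are at distance $\le 2$ in $G$ \emph{or} $V(H) \ne \emptyset$. This forces me to reexamine: when $V(H)\ne\emptyset$, \emph{no two vertices of $G$} are neighborhood-independent in $G \vee H$. So a neighborhood-independent set meeting $V(G)$ in $\ge 2$ vertices is impossible. The resolution is that $\at{G}$ here must be being used with the convention that isolated-type behavior matches; I will instead argue the bound $\an{G\vee H} \ge \min\{\at{G},\at{H}\}$ by using that a $2$-independent set of size $k$ in $G$ yields $k$ pairwise neighborhood-independent \emph{edges or vertices} — specifically, picking for each element a private vertex and its incident structure — but honestly the clean route is: a maximum $2$-independent set in $G$ of size $\at{G}$ gives $\at{G}$ pairwise non-adjacent vertices no two of which have a common neighbor in $G$, and these remain neighborhood-independent in $G\vee H$ only when $H=\emptyset$. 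Given this, I suspect the intended proof reduces to: $\an{G\vee H}\le \min\{\at G,\at H\}$ always, with equality forced by a matching construction; the lower bound likely comes from noting $G$ (resp.\ $H$) appears as an induced subgraph and $\an{G'}\le\an{G\vee H}$ fails in general, so the real argument is a direct construction of a neighborhood-independent set of $G\vee H$ of the right size.

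Let me restate the plan cleanly. \emph{Upper bound.} Let $S$ be a neighborhood-independent set of $G \vee H$. First handle the trivial case $|S|\le 1$; then $\min\{\at G,\at H\}\ge 1$ whenever both $G,H$ are nonempty, so we are fine. If $|S|\ge 2$, I show $S$ contains no crossing edge (a crossing edge is neighborhood-independent with no other element) and no two elements ``straddling'' the two sides; hence, up to symmetry, every element of $S$ is an edge or vertex of $G$. Then I argue the map sending each such element to the set of its endpoints, or rather: pick any $x\in V(G)$; each element of $S$ that is a vertex $v\in V(G)$ must satisfy that $v$ and $x$ are at distance $>2$ in $G$ unless $v=x$... this still runs into the $V(H)$ issue. \textbf{I now believe} the correct reading is that the definition of neighborhood-independent is relative to $G\vee H$ and that $\at{}$ on the \emph{right} is simply the correct combinatorial upper bound because: for elements of $S$ all lying in $V(G)\cup E(G)$, two such elements $e_1,e_2$ are neighborhood-independent in $G\vee H$ iff there is no $v\in V(G)$ with $e_1,e_2\subseteq G[N_G[v]]$ \emph{and} additionally $V(H)=\emptyset$ would be needed — so if $V(H)\ne\emptyset$ then $|S\cap(V(G)\cup E(G))|\le 1$. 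Combined with the symmetric statement and the crossing-edge case, this gives $|S|\le 2$, hence $\an{G\vee H}\le 2$. So the theorem as literally stated seems to need $\min\{\at G,\at H\}$ to often equal $2$; and indeed for connected co-connected pieces $\at{}\le 2$ is plausible. \emph{Lower bound / equality.} Conversely, take a vertex-or-edge $e$ witnessing... — I will build a neighborhood-independent set of size exactly $\min\{\at G,\at H\}$ by choosing, when this min is $1$, any single vertex; when it is $2$, two elements $e_G\in V(G)\cup E(G)$ and $e_H \in V(H)\cup E(H)$ that are neighborhood-independent in $G\vee H$, which requires finding in $G$ two vertices/edges at ``$2$-distance $>$'' something — equivalently using that $\at G=2$ means $G$ is disconnected or has diameter $>2$, producing $x,y\in V(G)$ with no common closed neighbor \emph{in $G$}, and dually in $H$, then observing $\{x, w\}$ for suitable $w$ — no. \textbf{The main obstacle}, clearly, is correctly pinning down the neighborhood-independence relation in a join and reconciling it with $\at{}$; I expect the honest proof to split on whether $\min\{\at G,\at H\} \in \{1,2\}$ (these being the only possible values, which I would prove first: $\at K \le 2$ fails in general, so instead prove $\at{G\vee H}$-type bounds), then exhibit explicit small neighborhood-independent sets matching the bound and a short case analysis for the upper bound via the three-case structure (both sides, or one crossing edge) described above. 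I would carry it out in the order: (1) characterize $G\vee H[v]$ for $v\in V(G)$ and for $v\in V(H)$; (2) derive the three-case structure of neighborhood-independent sets of $G\vee H$; (3) from the structure, read off $\an{G\vee H}\le\min\{\at G,\at H\}$; (4) construct a matching neighborhood-independent set to get $\ge$.
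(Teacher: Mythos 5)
Your argument goes off the rails at the claim that a crossing edge $uw$ with $u\in V(G)$, $w\in V(H)$ ``lies in every $G\vee H[v]$.'' That is false: for $v\in V(G)$ we have $N_{G\vee H}[v]=N_G[v]\cup V(H)$, so the edge $uw$ lies in $G\vee H[v]$ if and only if $u\in N_G[v]$. Consequently two crossing edges $u_1w_1$ and $u_2w_2$ (with $u_i\in V(G)$, $w_i\in V(H)$) \emph{are} neighborhood-independent precisely when $u_1,u_2$ have no common closed neighbor in $G$ and $w_1,w_2$ have no common closed neighbor in $H$, i.e.\ when $d_G(u_1,u_2)>2$ and $d_H(w_1,w_2)>2$. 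Because you discard crossing edges as useless, you are driven to the three-case structure (all elements on one side, or a single crossing edge) and then to the conclusion $\an{G\vee H}\le 2$, which is simply wrong: for $G=H=3K_1$ one has $G\vee H=K_{3,3}$, whose perfect matching is a neighborhood-independent set of size $3=\min\{\at{G},\at{H}\}$. All the subsequent hedging about the theorem ``needing $\min\{\at{G},\at{H}\}$ to often equal $2$'' stems from this one mistake; the statement is correct as written and the minimum can be arbitrarily large.

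The missing idea, which is exactly the paper's proof, is that crossing edges are the \emph{only} useful elements. Your observations that no two elements of $V(G)\cup E(G)$ are neighborhood-independent in the join (when $H$ is nonempty), and symmetrically for $H$, and that vertices are useless, are all correct; together they show that any neighborhood-independent set of size greater than $1$ consists entirely of crossing edges and is a matching. By the distance criterion above, the $G$-endpoints of such a matching form a $2$-independent set of $G$ and the $H$-endpoints a $2$-independent set of $H$, giving $\an{G\vee H}\le\min\{\at{G},\at{H}\}$. Conversely, taking $2$-independent sets $I_G\subseteq V(G)$ and $I_H\subseteq V(H)$ of equal size $\min\{\at{G},\at{H}\}$ and pairing them up arbitrarily by crossing edges produces a neighborhood-independent set of that size, which settles the lower bound. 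As written, your proposal does not contain a proof and cannot be repaired without replacing its central structural claim.
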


\begin{proof}
Let us first note that if a neighborhood-independent set of $G\vee H$ has size larger than 1, then it must have no edges belonging to $E(G)$ or $E(H)$. For in the join all edges between vertices of $G$ are in the closed neighborhood of any vertex of $H$ and likewise between the edges of $H$ and the vertices of $G$. Similarly it cannot have any vertices, for every vertex in $G$ is in the closed neighborhood of every vertex of $H$.  

Now, let us prove that $\an{G\vee H} \geq \min \{\at{G}, \at{H}\}$, by finding a neighborhood-independent set of $G\vee H$ of that size. Without loss of generality, suppose $\at{G} \leq \at{H}$. Let $I_G$ be an 2-independent set of $G$ and $I_H$ be one of $H$, both of size $\at{G}$. Clearly as both $I_H$ and $I_G$ are independent sets in $H$ and $G$, then they are also independent sets in $G \vee H$ and so $I_H \cup I_G$ induces a complete bipartite subgraph of $G \vee H$. Let $M$ be a perfect matching between $I_G$ and $I_H$ in $G\vee H$. Clearly $|M|= |I_G| = |I_H| = \at{G}$. We will proceed to show that $M$ is a neighborhood-independent set. 

Suppose to the contrary that there are two edges in $M$, $e_1$ and $e_2$, such that there exists a vertex $u$ of $V(G \vee H)$ satisfying $e_1, e_2 \subseteq N[u]$. Let us write $e_1 = v_1 w_1$ and $e_2 = v_2 w_2$, with $v_1, v_2 \in I_G$ and $w_1, w_2 \in I_H$. As $u$ is a vertex of the join then $u$ must belong to $V(G)$ or $V(H)$. If $u\in V(G)$, then $v_1 u v_2$ is a path of length $2$ from $v_1$ to $v_2$, in $G$. If $u\in V(H)$, then $w_1 u w_2$ is a path of length $2$ in $H$ that connects $w_1$ and $w_2$. In both cases we reach a contradiction, because both $I_G$ and $I_H$ were 2-independent sets. Therefore, $M$ must be a neighborhood-independent set of size $\at{G}$ and the inequality $\an{G\vee H} \geq \min \{\at{G}, \at{H}\}$ must hold.

Now, if $\an{G\vee H}=1$, then by the previous inequality we have the equality we were looking for. Let us then suppose that $\an{G\vee H}>1$, which by the first observation of this proof implies that any neighborhood-independent set of the join must be a matching between vertices of $G$ and $H$. Let $M$ be any neighborhood-independent set of size $\an{G\vee H}$. We define $Y_H$ and $Y_G$ as the sets of vertices of $H$ and $G$ respectively such that $Y_H = \{w \in V(H) \colon \mbox{there exists $e \in M$ such that } w\in e\}$ and $Y_G = \{v \in V(G) \colon \mbox{there exists $e \in M$ such that }v\in e\}$. Clearly $|Y_H| = |Y_G|$, for every edge in $M$ has one vertex in $G$ and one in $H$. We shall see now that both are 2-independent sets. 

Suppose again by contradiction that there are two vertices in $Y_G$, $v_1$ and $v_2$, such that $d_G(v_1,v_2) \leq 2$. This implies that there must exist a vertex $u \in V(G)$ such that $v_1,v_2\in N_G[u]$ which clearly also means that $v_1,v_2\in N_{G\vee H}[u]$. If we now take $w_1$ and $w_2$ in $Y_H$ such that $v_i w_i \in M$ for each $i \in \{1,2\}$, then clearly $v_1 w_1$ and $v_2 w_2$ cannot be neighborhood-independent edges because if $u \in V(G)$, then both $w_1,w_2\in N_{G\vee H}[u]$. This contradicts the fact that $M$ is a neighborhood-independent set. The contradiction proves that $Y_G$ must be a 2-independent set of $G$. By the same reasoning, $Y_H$ must be a 2-independent set of $H$. Hence as $|Y_G| \leq \at{G}$ and $|Y_H| \leq \at{H}$, then $|M| = |Y_H| = |Y_G| \leq \min \{\at{G}, \at{H} \}$ and therefore $\an{G \vee H} \leq \min \{\at{G}, \at{H} \}$, proving the reverse inequality and the theorem.   
\end{proof}

Now we shall extend this results to the join of more than two graphs. For this purpose we state the following lemma, which is easy to prove but still useful.

\begin{lemma}\label{lem:DomSetofMultJoin}
If $G_1,\ldots,G_k$ are graphs and $k\geq 2$, then $$\gamma(G_1\vee\cdots\vee G_k)=\min\{2,\gamma(G_1),\ldots,\gamma(G_k)\}.$$
\end{lemma}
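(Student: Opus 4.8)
The plan is to prove the identity $\gamma(G_1\vee\cdots\vee G_k)=\min\{2,\gamma(G_1),\ldots,\gamma(G_k)\}$ by establishing the two inequalities separately, exploiting the fact that in a join every vertex of $G_i$ dominates every vertex of $G_j$ for $j\neq i$.

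\medskip
\noindent\textbf{Upper bound.} First I would show $\gamma(G_1\vee\cdots\vee G_k)\le\min\{2,\gamma(G_1),\ldots,\gamma(G_k)\}$. Any minimum dominating set of a single $G_i$ is already a dominating set of the whole join, since it dominates $V(G_i)$ within $G_i$ and dominates every other $V(G_j)$ because all cross-edges are present; this gives the bound $\gamma(G_i)$ for each $i$. For the bound $2$, pick any vertex $u\in V(G_1)$ and any vertex $v\in V(G_2)$ (here $k\ge 2$ is used): then $u$ dominates all of $V(G_j)$ for $j\ge 2$ and $v$ dominates all of $V(G_1)$, so $\{u,v\}$ dominates $V(G_1\vee\cdots\vee G_k)$. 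Taking the minimum over these dominating sets yields the desired inequality.

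\medskip
\noindent\textbf{Lower bound.} For the reverse inequality, let $D$ be a minimum dominating set of $G_1\vee\cdots\vee G_k$, so $|D|=\gamma(G_1\vee\cdots\vee G_k)$. If $|D|\ge 2$, the bound $\min\{2,\ldots\}\le 2\le|D|$ is immediate, so I would assume $|D|=1$, say $D=\{w\}$ with $w\in V(G_i)$ for some $i$. Since $w$ has no neighbors outside $N_{G_i}[w]$ together with $\bigcup_{j\neq i}V(G_j)$, domination of $V(G_i)$ forces $N_{G_i}[w]=V(G_i)$, i.e.\ $w$ is a universal vertex of $G_i$, so $\gamma(G_i)=1$. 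Hence $\min\{2,\gamma(G_1),\ldots,\gamma(G_k)\}=1=|D|$. Combining the two inequalities proves the lemma.

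\medskip
\noindent This argument is entirely elementary; there is no real obstacle, only the need to handle the degenerate case $|D|=1$ carefully, since that is the only situation in which the ``$2$'' in the formula is not automatically an upper bound for $\gamma$ of the join. One should also note that the $\min$ is well-defined and the statement consistent: if some $G_i$ is a single vertex then $\gamma(G_i)=1$ and the formula gives $1$, matching the fact that that vertex is universal in the join; in all other cases the value is $\min\{2,\gamma(G_i)\}\in\{1,2\}$, as it must be since any two vertices from two distinct parts dominate the join.
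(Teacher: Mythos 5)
Your proof is correct and follows essentially the same route as the paper: the same easy upper bound via a dominating set of one part or a pair of vertices from two distinct parts, and the same key observation that a dominating set of size one in the join is a universal vertex, which must then be universal inside the part containing it. Your direct case analysis on $|D|$ is just a rephrasing of the paper's argument by contradiction.
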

\begin{proof}
Let $G=G_1\vee\cdots\vee G_k$. Clearly, $\gamma(G)\leq\min\{2,\gamma(G_1),\ldots,\gamma(G_k)\}$ since any dominating set of any of the graphs $G_1,\ldots,G_k$ as well as any set $\{v_1,v_2\}$ where $v_1\in V(G_1)$ and $v_2\in V(G_2)$ are dominating sets of $G_1\vee\cdots\vee G_k$. Hence, if the formula were false, then $\gamma(G)<\min\{2,\gamma(G_1),\ldots,\gamma(G_k)\}$, which means that $\gamma(G)=1$ and $\gamma(G_1),\ldots,\gamma(G_k)$ are greater than $1$ all of them. Therefore, $G$ has a universal vertex but none of $G_1,\ldots,G_k$ has a universal vertex, which contradiction the fact that $G=G_1\vee\cdots\vee G_k$.
\end{proof}

We now give a formula of the neighborhood number for the join of more than two graphs.
\begin{corollary}\label{cor:PnOfMultJoin}
If $G_1,G_2,\ldots,G_k$ are graphs and $k\geq 3$, then
\[ \pn{G_1\vee\cdots\vee G_k} = \min \{3,\gamma(G_1)+1,\cdots,\gamma(G_k)+1,\pn{G_1},\cdots,\pn{G_k}\}.\]
\end{corollary}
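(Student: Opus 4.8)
The plan is to prove the formula by induction on $k$, using Theorem~\ref{teo:PnOfJoin} as the base step and folding $G_1 \vee \cdots \vee G_{k-1}$ into a single graph $H$ so that $G_1 \vee \cdots \vee G_k = H \vee G_k$. Writing $H = G_1 \vee \cdots \vee G_{k-1}$, Theorem~\ref{teo:PnOfJoin} gives $\pn{H \vee G_k} = \min\{\gamma(G_k)+1,\ \gamma(H)+1,\ \pn{G_k},\ \pn{H}\}$. The task is then simply to substitute the inductive expressions for $\gamma(H)$ and $\pn{H}$ and simplify.

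For the inductive values, I would use Lemma~\ref{lem:DomSetofMultJoin} to get $\gamma(H) = \min\{2,\gamma(G_1),\ldots,\gamma(G_{k-1})\}$, so that $\gamma(H)+1 = \min\{3,\gamma(G_1)+1,\ldots,\gamma(G_{k-1})+1\}$. For $\pn{H}$, I would split on whether $k-1 = 2$ or $k-1 \geq 3$: if $k = 3$ then $H = G_1 \vee G_2$ and $\pn{H} = \min\{\gamma(G_2)+1,\gamma(G_1)+1,\pn{G_2},\pn{G_1}\}$ by Theorem~\ref{teo:PnOfJoin}; if $k \geq 4$ then $\pn{H} = \min\{3,\gamma(G_1)+1,\ldots,\gamma(G_{k-1})+1,\pn{G_1},\ldots,\pn{G_{k-1}}\}$ by the induction hypothesis. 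In either case, after substituting these into the expression for $\pn{H \vee G_k}$ and using associativity/commutativity of $\min$, all the terms $\gamma(G_i)+1$ and $\pn{G_i}$ for $i \le k-1$ appear, the term $\gamma(G_k)+1$ and $\pn{G_k}$ appear, and the constant $3$ appears (coming from $\gamma(H)+1$); one checks that the extra terms present only in one of the two cases (e.g. the raw $\pn{H}$-contributed bound when $k=3$) are already dominated or already listed, so the minimum collapses to exactly $\min\{3,\gamma(G_1)+1,\ldots,\gamma(G_k)+1,\pn{G_1},\ldots,\pn{G_k}\}$.

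I expect the only mild subtlety — not really an obstacle — to be bookkeeping the $k=3$ base case separately from $k \geq 4$, since Corollary~\ref{cor:PnOfMultJoin} is only claimed for $k \geq 3$ while the recursion on $H$ bottoms out at a two-fold join governed by Theorem~\ref{teo:PnOfJoin} rather than by the corollary itself; one must verify that the constant $3$ is correctly produced in the $k=3$ case (it comes from $\gamma(G_1 \vee G_2)+1 = \min\{3,\gamma(G_1)+1,\gamma(G_2)+1\}$ via Lemma~\ref{lem:DomSetofMultJoin}, not from $\pn{G_1 \vee G_2}$). Beyond that the argument is a routine manipulation of nested minima, so I would keep the write-up short: state the induction, invoke Theorem~\ref{teo:PnOfJoin} for the split $H \vee G_k$, substitute via Lemma~\ref{lem:DomSetofMultJoin} and the induction hypothesis, and simplify.
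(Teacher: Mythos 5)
Your proposal is correct and follows essentially the same route as the paper: induction on $k$, folding $G_1\vee\cdots\vee G_{k-1}$ into a single graph, applying Theorem~\ref{teo:PnOfJoin} to the two-fold join, and substituting $\gamma$ of the partial join via Lemma~\ref{lem:DomSetofMultJoin} and $\rho_{\mathrm n}$ of the partial join via the inductive hypothesis (or Theorem~\ref{teo:PnOfJoin} in the base case $k=3$). Your observation that the constant $3$ originates from $\gamma(G_1\vee G_2)+1$ rather than from $\rho_{\mathrm n}(G_1\vee G_2)$ matches the paper's computation exactly.
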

\begin{proof}    
The formula is valid when $k=3$ because Theorem~\ref{teo:PnOfJoin} and Theorem~\ref{lem:DomSetofMultJoin} imply
\begin{align*}
   \rho_{\mathrm n}(G_1\vee G_2\vee G_3)
      &=\min\{\gamma(G_1\vee G_2)+1,\gamma(G_3)+1,\rho_{\mathrm n}(G_1\vee G_2),\rho_{\mathrm n}(G_3)\}\\
      &=\min\{\min\{2,\gamma(G_1),\gamma(G_2)\}+1,\gamma(G_3)+1,\\
        &\phantom{=\min\{}\min\{\gamma(G_1)+1,\gamma(G_2)+1,\rho_{\mathrm n}(G_1),\rho_{\mathrm n}(G_2)\},\rho_{\mathrm n}(G_3)\}\\
      &=\min\{3,\gamma(G_1)+1,\gamma(G_2)+1,\gamma(G_3)+1,\rho_{\mathrm n}(G_1),\rho_{\mathrm n}(G_2)\},\rho_{\mathrm n}(G_3)\}
\end{align*}
Moreover, if the formula is valid when $k=t$ for some $t\geq 3$, then it is also valid when $k=t+1$ since Lemma~\ref{lem:DomSetofMultJoin} implies
\begin{align*}
   \rho_{\mathrm n}(G_1\vee\cdots\vee G_{t+1})
      &=\min\{\gamma(G_1\vee\cdots\vee G_t)+1,\gamma(G_{t+1})+1,\\
      &\phantom{=\min\{}\rho_{\mathrm n}(G_1\vee\cdots\vee G_t),\rho_{\mathrm n}(G_{t+1})\}\\
      &=\min\{\min\{2,\gamma(G_1),\cdots,\gamma(G_t)\}+1,\gamma(G_{t+1})+1,\\
      &\phantom{=\min\{}\min\{3,\gamma(G_1)+1,\cdots,\gamma(G_t),\rho_{\mathrm n}(G_1),\cdots,\rho_{\mathrm n}(G_t)\},\\
       &\phantom{=\min\{}\rho_{\mathrm n}(G_{t+1})\}\\
      &=\min\{3,\gamma(G_1)+1,\cdots,\gamma(G_{t+1})+1,\rho_{\mathrm n}(G_1),\ldots,\rho_{\mathrm n}(G_{t+1})\}.
\end{align*}
By induction, the formula is valid for every $k\geq 3$.
\end{proof}

And now we state the following immediate consequence of Theorem~\ref{teo:AnOfJoin}, for future reference.

\begin{corollary}\label{cor:AnOfMultJoin}
If $G_1,\ldots,G_k$ are graphs and $k\geq 3$, then
\[ \alpha_{\mathrm n}(G_1\vee\cdots\vee G_k)=1. \]    
\end{corollary}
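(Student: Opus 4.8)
The plan is to reduce immediately to the two-graph case settled by Theorem~\ref{teo:AnOfJoin}. Since $k\geq 3$, I would regroup the join as
\[ G_1\vee\cdots\vee G_k = (G_1\vee G_2)\vee H, \qquad H := G_3\vee\cdots\vee G_k, \]
so that $H$ is a single graph and Theorem~\ref{teo:AnOfJoin} applies, giving
\[ \an{G_1\vee\cdots\vee G_k} = \min\{\at{G_1\vee G_2},\,\at{H}\} \leq \at{G_1\vee G_2}. \]

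The key step is then to observe that $\at{G_1\vee G_2}=1$. Indeed, any two distinct vertices of $G_1\vee G_2$ are joined by a path of length at most $2$: if they lie in different parts of the join they are adjacent, and if they both lie in, say, $V(G_1)$, then any vertex of $V(G_2)$ is adjacent to both of them. Hence $G_1\vee G_2$ has no $2$-independent set of size greater than $1$, so $\at{G_1\vee G_2}\leq 1$; and since it has at least one vertex, $\at{G_1\vee G_2}=1$. Combining this with the displayed inequality yields $\an{G_1\vee\cdots\vee G_k}\leq 1$, while the reverse bound $\an{G_1\vee\cdots\vee G_k}\geq 1$ holds trivially because any one-element subset of $V(G_1\vee\cdots\vee G_k)$ is vacuously neighborhood-independent. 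Therefore $\an{G_1\vee\cdots\vee G_k}=1$.

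I do not expect any real obstacle here — this is precisely why the statement is phrased as a corollary of Theorem~\ref{teo:AnOfJoin}. The only point worth a line of care is the degenerate behavior of the parameters on graphs with no vertices: the argument above uses that $G_2$ is nonempty to produce the length-$2$ path, and that $G_1\vee G_2$ is nonempty to get $\at{G_1\vee G_2}\geq 1$, so I would make explicit the standing convention that all graphs considered are nonempty (note that on a one-vertex graph already $\alpha_2=1$, consistent with the formula).
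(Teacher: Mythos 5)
Your proposal is correct and follows essentially the same route as the paper: regroup the $k$-fold join as a two-fold join in which one factor is itself a join of at least two graphs, note that such a factor has $\alpha_2=1$ because any two of its vertices are at distance at most two, and then apply Theorem~\ref{teo:AnOfJoin}. The only cosmetic difference is the grouping (the paper uses $(G_1\vee\cdots\vee G_{k-1})\vee G_k$ rather than $(G_1\vee G_2)\vee(G_3\vee\cdots\vee G_k)$), which changes nothing of substance.
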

\begin{proof}
Since every two vertices of $G_1\vee\cdots\vee G_{k-1}$ are at distance at most two, $\alpha_2(G_1\vee\cdots\vee G_{k-1})=1$. Hence, Theorem~\ref{teo:AnOfJoin} implies that $\alpha_{\mathrm n}(G_1\vee\cdots\vee G_k)=\min\{\alpha_2(G_1\vee\cdots\vee G_{k-1}),\alpha_2(G_k)\}=1$.
\end{proof}

 Using the previous two results, we will prove the following theorem, which states which graphs that are formed by the join of two non-null subgraphs are minimally non-neighborhood-perfect or equivalently which are the only minimally non-neighbor\-hood-perfect graphs that have a disconnected complement.

\begin{theorem}\label{theo:CaractOfMinimallyNonNPwithDiscComp}
The only minimally non-neighborhood-perfect graphs with disconnected complement are $C_6 \vee 3K_1$, $P_6 \vee 3K_1$, and $C_4 \vee 2K_1 = \overline{3K_2}$.    
\end{theorem}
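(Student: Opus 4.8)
\section*{Proof proposal}

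The plan is to establish both directions. For the forward direction I would check directly that each of $C_6\vee 3K_1$, $P_6\vee 3K_1$ and $\overline{3K_2}$ is minimally non-neighborhood-perfect. Non-neighborhood-perfectness is read off from the join formulas: $\at{C_6}=\at{P_6}=2$ while $\gamma(C_6)=\gamma(P_6)=2$ and $\pn{C_6}=\pn{P_6}=3$, so Theorems~\ref{teo:PnOfJoin} and~\ref{teo:AnOfJoin} give $\an{C_6\vee 3K_1}=\an{P_6\vee 3K_1}=2<3=\pn{C_6\vee 3K_1}=\pn{P_6\vee 3K_1}$; and since $\overline{3K_2}=2K_1\vee 2K_1\vee 2K_1$, Corollaries~\ref{cor:PnOfMultJoin} and~\ref{cor:AnOfMultJoin} give $\an{\overline{3K_2}}=1<2=\pn{\overline{3K_2}}$. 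Minimality then reduces to verifying that every proper induced subgraph is neighborhood-perfect; since such a subgraph is again a join of induced subgraphs of the factors (or sits inside a single factor), this is a finite computation using the join formulas together with the fact that $C_4$, $C_6$, paths and their small induced subgraphs are neighborhood-perfect (the chordal ones by Theorem~\ref{thm:np-chordal-forb}).

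For the converse, let $G$ be minimally non-neighborhood-perfect with $\overline{G}$ disconnected. If $\an{G}=1$, Theorem~\ref{teo:MinNonNPwithAN1Are3PyrAnd0Pyr} gives that $G$ is a $3$-sun or $\overline{3K_2}$; the complement of the $3$-sun is connected (it is a triangle with one pendant vertex hung on each of its three vertices), so $G=\overline{3K_2}$. Now suppose $\an{G}\geq 2$. By Corollary~\ref{cor:AnOfMultJoin}, $G$ has exactly two anticomponents, so $G=G_1\vee G_2$ with $G_1,G_2$ co-connected; both $G_i$ are proper induced subgraphs of the minimally non-neighborhood-perfect graph $G$, hence neighborhood-perfect, so $\pn{G_i}=\an{G_i}$. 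Assuming $\at{G_1}\leq\at{G_2}$ and writing $a:=\an{G}=\at{G_1}$ (Theorem~\ref{teo:AnOfJoin}), Theorem~\ref{teo:PnOfJoin} together with the elementary bounds $\at{H}\leq\gamma(H)$ and $\at{H}\leq\an{H}$ shows that $G$ is non-neighborhood-perfect if and only if $\an{G_1}\geq a+1$ and $\an{G_2}\geq a+1$.

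I would then extract strong restrictions from minimality by deleting one vertex at a time: for every $v$ the graph $G-v$, which equals $(G_1-v)\vee G_2$ or $G_1\vee(G_2-v)$, is neighborhood-perfect, and substituting into the join formulas (using again $\at\leq\gamma$, $\at\leq\an$, and neighborhood-perfectness of the factors) yields (i) $\at{G_2}\leq a+1$; (ii) if $\at{G_2}=a+1$ then $G_2=(a+1)K_1$ (deleting any vertex of $G_2$ outside a fixed maximum $2$-independent set would otherwise leave a non-neighborhood-perfect proper induced subgraph); and (iii) every proper induced subgraph $H$ of $G_1$ with $\at{H}\leq a$ satisfies $\an{H}=\at{H}$, with the symmetric statement for $G_2$ when $\at{G_2}=a$, in which case one also gets that no proper induced subgraph of $G_1$ or of $G_2$ has $2$-independence number exceeding $a$, so $\alpha(G_1)=\alpha(G_2)=a$. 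Thus $G_1$ (and $G_2$ when $\at{G_2}=a$) satisfies the hypotheses of the following structural claim: a co-connected, neighborhood-perfect graph $H$ with $\at{H}\geq 2$, $\an{H}\geq\at{H}+1$, and no proper induced subgraph $H'$ having $\at{H'}\leq\at{H}$ and $\an{H'}>\at{H'}$, satisfies $\at{H}=2$ and $H\in\{C_6,P_6\}$. Granting this claim, the case $\at{G_2}=a$ cannot occur, since it would give $G_1\in\{C_6,P_6\}$ yet $\alpha(G_1)=\at{G_1}$, contradicting $\alpha(C_6)=\alpha(P_6)=3$; hence $\at{G_2}=a+1$, so $G_2=(a+1)K_1$ and, applying the claim to $G_1$, we get $a=2$, $G_1\in\{C_6,P_6\}$ and $G_2=3K_1$, that is, $G\in\{C_6\vee 3K_1,P_6\vee 3K_1\}$. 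This completes the proof.

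The main obstacle is the structural claim, whose heart is identifying the minimal graphs (in the induced-subgraph order) with $\an>\at$ and deciding which of them are neighborhood-perfect with $\at\geq 2$. What makes it delicate is that $\at$ is \emph{not} monotone under induced subgraphs — for instance $3K_1$ is an induced subgraph of $C_6$ with strictly larger $\at$ — so one cannot simply pass to a vertex-minimal counterexample; condition (iii) is phrased precisely so that only low-$\at$ troublesome subgraphs are excluded. I would prove the claim by analysing a neighborhood-independent set $S$ of $H$ of size $\an{H}$: its edges form a matching and, because $|S|>\at{H}$, at least one edge of $S$ is present, so, restricting attention to the vertices met by $S$ and the few vertices that can distinguish them, and using co-connectedness together with the exclusion — forced by neighborhood-perfectness — of induced odd holes, odd suns and $\overline{3K_2}$, I expect to pin $H$ down to $C_6$ or $P_6$. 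The part I anticipate being the most technical is ruling out $\at{H}\geq 3$, i.e.\ showing that any graph witnessing $\an>\at$ with larger $2$-independence number already contains a smaller such witness of controlled $\at$.
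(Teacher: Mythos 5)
Your overall skeleton is sound and the bookkeeping steps you assert (reduction to exactly two co-connected factors via Corollary~\ref{cor:AnOfMultJoin}, the criterion ``$G$ is non-neighborhood-perfect iff $\an{G_1},\an{G_2}\geq a+1$'', and conditions (i)--(iii) extracted from minimality) can all be verified from Theorems~\ref{teo:PnOfJoin} and~\ref{teo:AnOfJoin}. But the proof has a genuine gap at exactly the point you flag yourself: the ``structural claim'' that a co-connected neighborhood-perfect graph $H$ with $\an{H}>\at{H}$, minimal in your sense, must be $C_6$ or $P_6$. You never prove it --- you say you ``expect to pin $H$ down'' by analysing a maximum neighborhood-independent set and that ruling out $\at{H}\geq 3$ will be ``the most technical'' part. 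That claim is the entire content of the theorem; everything else is routine manipulation of the join formulas. A referee cannot accept ``I expect'' for the one step where all the difficulty lives, especially since your own remark about the non-monotonicity of $\alpha_2$ shows that the naive minimal-counterexample argument does not work.

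The paper closes precisely this gap with a different and cleaner device: Lemma~\ref{lemma:a2EqualAniffP6-FreeChordal} shows (via the $k$-walk shortcut argument of Lemma~\ref{lemma:kWalkinPkfreeChordal}) that $\at{G'}=\an{G'}$ holds hereditarily if and only if $G$ is $P_6$-free chordal; combined with Theorem~\ref{thm:np-chordal-forb} this yields Corollary~\ref{cor:caractSNP}, so a neighborhood-perfect graph with $\at{}<\pn{}$ must contain an induced $C_4$, $C_6$ or $P_6$. Once that is available, the paper does not need your minimality bookkeeping at all: it splits on whether one or both factors fail $\alpha_2=\rho_{\mathrm n}$, plants a $C_4$, $C_6$ or $P_6$ in the offending factor and a $2K_1$ or $3K_1$ in the other, and invokes minimality of $G\vee H$ to conclude $G\vee H$ equals one of the three graphs. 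I recommend you either prove your structural claim in full (your sketch via total analysis of a maximum neighborhood-independent set is essentially what Lemma~\ref{lemma:a2EqualAniffP6-FreeChordal} does, so you would be rediscovering it), or restructure your argument around the hereditary characterization $\at{}=\an{}\iff P_6$-free chordal, which subsumes both your claim and your conditions (i)--(iii). As a minor point, your forward direction (checking that the three graphs are themselves minimally non-neighborhood-perfect) is a reasonable addition that the paper leaves implicit, and your dispatch of the case $\an{G}=1$ via Theorem~\ref{teo:MinNonNPwithAN1Are3PyrAnd0Pyr} is correct.
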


For the purpose of proving Theorem~\ref{theo:CaractOfMinimallyNonNPwithDiscComp} we shall first define a subclass of neighborhood-perfect graphs, the strongly neighborhood-perfect graphs.

\begin{defn}
We shall say that a graph $G$, is \emph{strongly neighborhood-perfect} if $\at{G'} = \pn{G'}$ for every induced subgraph $G'$ of $G$. 
\end{defn}

\begin{defn}
We shall say that a graph $G$, is \emph{minimally non-strongly neigh\-bor\-hood-perfect} when $\at{G} < \pn{G}$, but $\at{G'} = \pn{G'}$ for every proper induced subgraph $G'$, of $G$. That is, it is not strongly neighborhood-perfect, but all its proper induced subgraphs are.   
\end{defn} 

\begin{rem}\label{obs:StronglyNpAreNp}
Clearly all strongly neighborhood-perfect graphs are neighbor\-hood-perfect. It follows from the string of inequalities: $\at{G} \leq \gamma(G) \leq \an{G} \leq \pn{G}$, that holds for every graph $G$, and the equality demanded by the definition of strongly neighborhood-perfect graphs. Moreover it is also true that if $G$ is neighborhood-perfect, then it is strongly neighborhood-perfect if and only if $\at{G'} = \an{G'}$ for every induced subgraph $G'$. 
\end{rem}

We shall see which graphs satisfy that $\at{G'} = \an{G'}$ for every induced subgraph $G'$. But before giving this characterization we shall prove a useful general property of chordal $P_k$-free graphs.

\begin{lemma}\label{lemma:kWalkinPkfreeChordal}
Any $k$-walk $W$ in a $P_k$-free chordal graph must have at least two vertices that are $2$ steps from each other in $W$ and either are adjacent or the same vertex.   
\end{lemma}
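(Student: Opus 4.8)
Here is the plan.

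\medskip

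\noindent\textbf{Proof plan.} I would argue by contradiction. Suppose $W=w_1w_2\cdots w_k$ is a $k$-walk in a $P_k$-free chordal graph $G$ violating the conclusion, i.e.\ for every $i\in\{1,\dots,k-2\}$ the vertices $w_i$ and $w_{i+2}$ are distinct and non-adjacent. The goal is to show that then $G[\{w_1,\dots,w_k\}]$ is an induced $P_k$, contradicting $P_k$-freeness. Since consecutive vertices of $W$ are adjacent and, by hypothesis, vertices two steps apart are neither equal nor adjacent, the only way $G[\{w_1,\dots,w_k\}]$ can fail to be a chordless path is the presence of what I will call a \emph{shortcut}: a pair of indices $i<j$ with $j-i\ge 3$ such that $w_i=w_j$ or $w_iw_j\in E(G)$. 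So the whole proof reduces to ruling out shortcuts: once that is done, the $w_a$ are pairwise distinct (a repetition $w_a=w_b$ with $b-a\ge 3$ would be a shortcut, while $b-a\in\{1,2\}$ is impossible by the edge relation and by hypothesis) and the induced subgraph on them has precisely the edges $w_aw_{a+1}$, hence is a $P_k$.

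\medskip

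\noindent To rule out shortcuts I would assume one exists and choose a shortcut $(i,j)$ with $j-i$ minimum. The sub-walk $w_iw_{i+1}\cdots w_j$ then closes into a cycle $C$: of length $j-i$ when $w_i=w_j$ (using that $w_{j-1}w_j=w_{j-1}w_i$ is an edge of $W$), and of length $j-i+1$ when $w_iw_j\in E(G)$. First I would check that the vertices traversed by $C$ are pairwise distinct: an internal repetition would itself be a shortcut strictly shorter than $(i,j)$ — here one uses minimality, together with the hypothesis to discard repetitions within distance two — contradicting the choice of $(i,j)$. So $C$ is a genuine cycle. If $C$ is a triangle, then necessarily we are in the case $w_i=w_j$ with $j=i+3$, and one of the three edges of $C$ is $w_iw_{i+2}$, contradicting the hypothesis. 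Otherwise $C$ has length at least $4$, so by chordality it has a chord $w_aw_b$ with $b-a\ge 2$; if $b-a=2$ this contradicts the hypothesis, and if $b-a\ge 3$ then $(a,b)$ is a shortcut which is strictly shorter than $(i,j)$ — the only way $b-a$ could equal $j-i$ would be if $w_aw_b$ were the closing edge of $C$, which does not count as a chord — again contradicting minimality. Every branch ends in a contradiction, so no shortcut exists, and the proof concludes as described above.

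\medskip

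\noindent The index bookkeeping is routine; the one place that needs real care is the minimality step inside the cycle $C$, where one must verify that a chord — or an internal repetition — really produces a \emph{strictly} shorter shortcut, which depends on correctly identifying which pairs of vertices of $C$ are consecutive along $C$ (and hence excluded as chords). Keeping the two flavours of shortcut ($w_i=w_j$ versus $w_iw_j\in E(G)$) and the degenerate triangle case straight is the main source of case analysis, but nothing goes deeper than the standard fact that a chordal graph contains no chordless cycle of length at least four.
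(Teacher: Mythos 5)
Your proposal is correct and takes essentially the same route as the paper's proof: both select a closest pair of walk-positions whose vertices coincide or are adjacent (existence guaranteed by $P_k$-freeness), and use chordality of the resulting cycle to force that pair to be exactly two steps apart. Your write-up is if anything slightly more careful, since you explicitly dispose of the degenerate triangle case ($w_i=w_j$ with $j=i+3$), which the paper's blanket appeal to chordality glosses over.
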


\begin{proof}
Let $G$ be a $P_k$-free chordal graph and $W$ be a $k$-walk in $G$. Since $G$ is $P_k$-free, then $W$ cannot be an induced path. This means that there must exist an integer $p$ such that $p \geq 2$ and there are at least two vertices of $W$ which are $p$ steps from each other in $W$ and are either adjacent in $G$ or the same vertex in $G$. We choose $p$ as small as possible. If we show that $p=2$, then the assertion of the lemma follows.

Let us suppose by contradiction that $p$ is greater that $2$. We take two vertices in $W$ that are $p$ steps from each other and either are adjacent or the same vertex in $G$, and consider the sub-walk of $W$ of length $p$ joining them. As the minimality of $p$ implies that the vertices that are at fewer than $p$ steps from each other in $W$ are different and nonadjacent, this sub-walk must induce $C_p$ or $C_{p+1}$ in $G$, depending on whether the two vertices are the same or adjacent. But as the graph was chordal and $p$ was greater than $2$, this results in a contradiction, proving the lemma.
\end{proof}

\begin{lemma}\label{lemma:a2EqualAniffP6-FreeChordal}
A graph $G$ satisfies $\at{G'} = \an{G'}$ for every induced subgraph $G'$ of $G$ if and only if $G$ is $P_6$-free chordal. 
\end{lemma}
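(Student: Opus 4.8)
The plan is to prove both implications separately, using Lemma~\ref{lemma:kWalkinPkfreeChordal} for the harder ($\Leftarrow$) direction and exhibiting small obstructions for the ($\Rightarrow$) direction. For the forward implication, I would argue by contrapositive: if $G$ is not $P_6$-free chordal, then either $G$ contains an induced $P_6$ or it contains an induced cycle $C_k$ with $k\geq 4$. In each case I would pin down a small induced subgraph $G'$ witnessing $\at{G'}<\an{G'}$. For an induced $P_6=v_1v_2v_3v_4v_5v_6$, note that $v_1$ and $v_6$ are at distance $5>2$, so $\{v_1,v_6\}$ is a $2$-independent set and $\at{P_6}\geq 2$; on the other hand one checks directly that $\an{P_6}=1$ because any two vertices or edges of $P_6$ lie in a common closed neighborhood only if\ldots\ actually $\an{P_6}=2$, so instead the right witness is different: the correct reading is that for $P_6$ one has $\at{P_6}=2$ while $\an{P_6}=2$ as well, so I must be more careful. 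Let me restate: the obstructions are $C_4$, $C_5$, $C_k$ for $k\geq 6$, and $P_6$, and for each I would compute $\at{}$ and $\an{}$ by hand and check the strict inequality (e.g.\ $\at{C_4}=1<2=\an{C_4}$ via two opposite edges; $\at{C_5}=1<2=\an{C_5}$; $\at{C_6}=2$ but $\an{C_6}=3$; for longer cycles the gap only grows; and for $P_6$, $\at{P_6}=2$ while $\an{P_6}=3$ using the three edges $v_1v_2,v_3v_4,v_5v_6$). So the forward direction reduces to this finite case analysis of small graphs.

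For the reverse implication, assume $G$ is $P_6$-free chordal; I must show $\at{G'}=\an{G'}$ for every induced subgraph $G'$. Since the class of $P_6$-free chordal graphs is hereditary, it suffices to prove $\at{G}=\an{G}$ for every $P_6$-free chordal $G$. The inequality $\at{G}\leq\an{G}$ is part of the chain in Remark~\ref{obs:StronglyNpAreNp}, so the work is to show $\an{G}\leq\at{G}$, i.e.\ from any neighborhood-independent set $S$ of $G$ I can extract a $2$-independent vertex set of size $|S|$. The natural idea is: for each element of $S$ (a vertex or an edge) pick a representative vertex — the vertex itself, or one endpoint of the edge, chosen cleverly — and argue that the resulting multiset of $|S|$ vertices is in fact a set and is $2$-independent. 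If two representatives $x,y$ coincided or were at distance $\leq 2$, there would be a vertex $u$ with $x,y\in N[u]$; I would then try to derive that the two elements of $S$ they represent both lie in $N[u']$ for some $u'$, contradicting neighborhood-independence of $S$. When distances are exactly $2$ this requires exhibiting the right common-neighbor vertex, and here is where $P_6$-freeness and chordality enter: Lemma~\ref{lemma:kWalkinPkfreeChordal} applied to a walk built from the short paths between representatives forces a "shortcut" (adjacency at distance $2$ along the walk), which I would use to relocate the covering vertex and complete the contradiction.

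I expect the main obstacle to be the reverse direction, specifically the bookkeeping of choosing representatives for edges and handling the case where $S$ contains both vertices and edges: one has to ensure the representative of an edge $e\in S$ is not forced to collide with the representative of another element, and this is exactly the point where a $P_6$ would let the argument fail, so the use of Lemma~\ref{lemma:kWalkinPkfreeChordal} with $k=6$ must be set up on a carefully chosen closed walk through the chosen vertices and the covering vertices. A secondary subtlety is that $\at{}$ counts vertices only while $\an{}$ counts a mixed edge/vertex set, so the extraction map must be shown to be injective; I would handle this by treating a maximum neighborhood-independent set and noting that at most one of its elements can be a vertex (otherwise two vertices or a vertex and an incident edge would fail independence), reducing to the case where $S$ is essentially a matching, which simplifies the representative-choosing step considerably.
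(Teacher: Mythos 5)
Your forward direction is fine once you settle on the correct values: the witnesses $\at{C_k}<\an{C_k}$ for $k=4,5,6$ and $\at{P_6}=2<3=\an{P_6}$ are all correct (and for $k\geq 7$ the cycle $C_k$ already contains an induced $P_6$, so no further computation is needed); the paper in fact leaves this direction implicit. The reverse direction, however, has a genuine gap. The reduction you propose --- ``at most one element of $S$ can be a vertex, so $S$ is essentially a matching'' --- is false: two vertices at distance at least $3$ are neighborhood-independent, and any $2$-independent set is a neighborhood-independent set consisting entirely of vertices. The whole content of the lemma is that some \emph{maximum} neighborhood-independent set can be taken of exactly that form, which is the opposite of what you assert. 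Beyond that, the crucial step of your plan --- showing that the chosen representatives are pairwise at distance greater than $2$, and that the extraction is injective --- is only gestured at (``relocate the covering vertex'', ``a carefully chosen closed walk''), and this is precisely where all the work lies.

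The paper's proof takes a cleaner route worth comparing with yours: take a maximum neighborhood-independent set $S$ with the \emph{minimum number of edges} and show it contains no edge at all. If $e=xy\in S$, minimality implies that neither $x$ nor $y$ can replace $e$ in $S$, so some element of $S$ conflicts with $x$ and some element conflicts with $y$; unwinding what ``conflict'' means, while using that $e$ is neighborhood-independent from those elements, produces vertices $x''$, $x'$, $y'$, $y''$ such that $x''\,x'\,x\,y\,y'\,y''$ is a $6$-walk in which no two vertices two steps apart are equal or adjacent, contradicting Lemma~\ref{lemma:kWalkinPkfreeChordal}. Hence $S$ is a $2$-independent set of size $\an{G}$, and $\at{G}\leq\an{G}$ gives equality. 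Your representative-extraction scheme might be repairable along these lines, but as written it neither constructs the required walk nor justifies injectivity, and the one concrete simplification you do offer is based on a false claim.
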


\begin{proof}
Let $S \subseteq V(G) \cup E(G)$ be a neighborhood-independent set of size $\an{G}$ and of minimum number of edges. We shall show that $S$ must contain only vertices and therefore be a 2-stable set of $G$, proving the lemma (for $\at{G} \leq \an{G}$ is clearly true for all graphs). 

Assume to the contrary that there is an edge $e = x y \in S$. As $e$ cannot be replaced by $x$ in $S$, maintaining the neighborhood-independence (for $S$ had minimum number of edges), then there must exist an $s\in S$ (an edge or vertex), such that $N[x] \cap N[s] \neq \emptyset$. But as $e, s\in S$, then $N[x]$ and $N[y]$ and $N[s] = \emptyset$, which means that there is a vertex $x'\in N[x] \cap N[s]$ such that $x' \notin N[y]$. Moreover $x\in N[x] \cap N[y]$, which implies that $x\notin N[s]$, meaning that there must be a vertex $x''$, such that $x'' \notin N[x]$ and either $x''\in s$ if $s$ is an edge or $x''=s$ if $s$ is a vertex. But as $x'' \in s$ (or $x''=s$), and $x' \in N[s]$, then $x'' \in N[x']$ and $x'' \notin N[x]$. By a symmetry argument, there must be vertices $y'$ and $y''$, such that $y'\in N[y] - N[x]$ and $y'' \in N[y'] - N[y]$. But then $x''$ $x'$ $x$ $y$ $y'$ $y''$ form a 6-walk where no two vertices that are two steps from each other are adjacent or the same. This together with Lemma~\ref{lemma:kWalkinPkfreeChordal}, results in a contradiction, proving that no edge can belong to $S$ and therefore $S$ must be a 2-independent set of size $\an{G}$. 
\end{proof}

Using the previous characterization, we shall state the following corollary, fully characterizing strongly neighborhood-perfect graphs by forbidden induced subgraphs.

\begin{corollary}\label{cor:caractSNP}
If $G$ is a graph, the following statements are equivalent:

\begin{enumerate}
\item $G$ is strongly neighborhood-perfect
\item $G$ is neighborhood-perfect $\cap$ $\{ C_4, C_6, P_6 \}$-free
\item $G$ is odd-sun-free $\cap$ $P_6$-free chordal
\end{enumerate}
\end{corollary}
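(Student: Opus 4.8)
The plan is to establish the cycle of implications $(1)\Rightarrow(2)\Rightarrow(3)\Rightarrow(1)$, leaning heavily on the results already assembled in this section.

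For $(1)\Rightarrow(2)$: if $G$ is strongly neighborhood-perfect, then by Remark~\ref{obs:StronglyNpAreNp} it is neighborhood-perfect, so it remains to check that $G$ is $\{C_4,C_6,P_6\}$-free. For each of these three graphs $H$ one simply exhibits that $\at{H}<\pn{H}$ (a direct finite computation: e.g.\ $\at{C_6}=1$ while $\pn{C_6}=2$, and similarly $\at{P_6}=1<2=\pn{P_6}$, $\at{C_4}=1<2=\pn{C_4}$), so no induced copy of $H$ can occur in a strongly neighborhood-perfect graph. For $(2)\Rightarrow(3)$: assume $G$ is neighborhood-perfect and $\{C_4,C_6,P_6\}$-free. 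Being $C_4$-free is not yet chordality, but I would invoke the structure available: a neighborhood-perfect graph is perfect (stated in the introduction, via Theorem~\ref{thm:np-chordal-forb}'s ambient remarks), hence contains no odd hole; together with $C_4$-freeness and $C_6$-freeness one needs to rule out all holes $C_k$, $k\ge 4$. The odd holes are excluded by perfection; $C_4$ and $C_6$ are excluded by hypothesis; and any even hole $C_{2k}$ with $k\ge 4$ contains an induced $P_6$, so is excluded as well — hence $G$ is chordal. Being chordal and neighborhood-perfect, Theorem~\ref{thm:np-chordal-forb} gives that $G$ is odd-sun-free, and $P_6$-freeness is carried over directly from $P_6\in\{C_4,C_6,P_6\}$. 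So $G$ is odd-sun-free and $P_6$-free chordal.

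For $(3)\Rightarrow(1)$: assume $G$ is odd-sun-free and $P_6$-free chordal. Since $G$ is chordal and odd-sun-free, Theorem~\ref{thm:np-chordal-forb} says $G$ is neighborhood-perfect, so $\pn{G'}=\an{G'}$ for every induced subgraph $G'$. On the other hand, $G$ is $P_6$-free chordal, and every induced subgraph of a $P_6$-free chordal graph is again $P_6$-free chordal, so Lemma~\ref{lemma:a2EqualAniffP6-FreeChordal} applies and gives $\at{G'}=\an{G'}$ for every induced subgraph $G'$. Chaining the two equalities yields $\at{G'}=\pn{G'}$ for every induced subgraph $G'$, i.e.\ $G$ is strongly neighborhood-perfect.

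The only step requiring genuine care is $(2)\Rightarrow(3)$, specifically the passage from "$\{C_4,C_6\}$-free and perfect" to "chordal": one must be sure that excluding $C_4$, $C_6$, all odd holes, and (via $P_6$) all longer even holes really leaves no chordless cycle of length $\ge 4$, which it does since $C_8,C_{10},\dots$ each contain an induced $P_6$. The remaining obstacle is purely bookkeeping: confirming the small inequalities $\at{H}<\pn{H}$ for $H\in\{C_4,C_6,P_6\}$, which are immediate because each such $H$ has $\at{H}=1$ (every two vertices are within distance two, or the graph is connected of small diameter in the relevant sense) while a single closed neighborhood fails to cover all edges.
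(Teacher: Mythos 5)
Your proof is correct and uses the same ingredients as the paper (Remark~\ref{obs:StronglyNpAreNp}, Lemma~\ref{lemma:a2EqualAniffP6-FreeChordal}, Theorem~\ref{thm:np-chordal-forb}, the exclusion of odd holes from neighborhood-perfect graphs, and the observation that every hole of length at least $7$ contains an induced $P_6$); you merely arrange them as a cycle of implications where the paper writes a chain of equivalences, and you handle $(1)\Rightarrow(2)$ by direct computation on the three forbidden graphs instead of routing that direction through the lemma. One caveat: your stated values for $C_6$ and $P_6$ are wrong. In $C_6$ the two antipodal vertices are at distance $3$, and in $P_6$ the endpoints of the path are at distance $5$, so $\at{C_6}=\at{P_6}=2$ (not $1$), and since these graphs are triangle-free each closed neighborhood covers only two edges, giving $\pn{C_6}=\pn{P_6}=3$ (not $2$); likewise your closing parenthetical ``every two vertices are within distance two'' is false for these two graphs. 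The inequality $\at{H}<\pn{H}$ that you actually need still holds for all three graphs ($1<2$ for $C_4$ and $2<3$ for $C_6$ and $P_6$), so the argument survives once the arithmetic is corrected.
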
 

\begin{proof}
Clearly, by Remark~\ref{obs:StronglyNpAreNp}, $G$ is strongly neighborhood-perfect if and only if $G$ is neighborhood-perfect and $\at{G'} = \an{G'}$ for every induced subgraph $G'$, which, by Lemma~\ref{lemma:a2EqualAniffP6-FreeChordal}, holds if and only if $G$ is neighbor\-hood-perfect and $P_6$-free chordal. But, as all odd holes are forbidden induced subgraphs of neighborhood-perfect graphs \cite{lehel_neighborhood_1986}, then $G$ is neighbor\-hood-perfect and $P_6$-free chordal if and only if it is neighborhood-perfect and $\{C_4, C_6, P_6\}$-free, proving (1) if and only if (2). Moreover, by Theorem~\ref{thm:np-chordal-forb} a chordal graph is neighborhood-perfect if and only if it is odd-sun-free, clearly implying (2) if and only if (3).
\end{proof}

We shall now give the proof of Theorem~\ref{theo:CaractOfMinimallyNonNPwithDiscComp}, that states that the only minimally non-neighborhood-perfect graphs with disconnected complement (which means that they are join of two non-null subgraphs) are $C_4 \vee 2K_1$, $C_6 \vee 3K_1$ and $P_6 \vee 3K_1$. 

\begin{proof}[Proof of Theorem~\ref{theo:CaractOfMinimallyNonNPwithDiscComp}]

Clearly a graph with disconnected complement can be thought of as the join of two non-null graphs. Let us consider we have a minimally non-neighbor\-hood-perfect graph $G \vee H$. By minimality $G$ and $H$ must be neighborhood-perfect, but as {$G\vee H$} is minimally non-neighborhood-perfect, $\pn{G \vee H} \neq \an{ G \vee H } $ which, by Theorem~\ref{teo:PnOfJoin} and Theorem~\ref{teo:AnOfJoin}, implies that $G$ or $H$ must satisfy $\alpha_2 \neq \rho_{\mathrm n}$ (because $\at{W} < \gamma(W)+1$ is true for all graphs $W$). 

We shall note that if a graph is neighborhood-perfect but does not satisfy $\alpha_2 = \rho_{\mathrm n}$, then it is neighborhood-perfect but not \textbf{strongly} neighborhood-perfect, which, by (1) if and only if (2) in Corollary~\ref{cor:caractSNP}, means that the graph must contain a $C_4$, $C_6$ or $P_6$ as induced subgraph.

Let us then suppose that both $G$ and $H$ do not satisfy $\alpha_{\mathrm n}=\rho_{\mathrm n}$. This means that both must contain a $C_4$, $C_6$ or $P_6$ as induced subgraph. If one of them contains an induced $C_4$, then $G \vee H$ must have $C_4 \vee 2K_1 = \overline{3K_2}$ as a proper induced subgraph, contradicting the minimality of $G\vee H$. On the other hand if none contain an induced $C_4$, then they must contain an induced $C_6$ or $P_6$, meaning that both must have at least an independent set of size 3. Hence $G \vee H$ must contain a $P_6 \vee 3K_1$ or $C_6 \vee 3K_1$ as a proper induced subgraph, but by Theorem~\ref{teo:PnOfJoin} and Theorem~\ref{teo:AnOfJoin}, both have $\rho_{\mathrm n} = 3 \neq 2 = \alpha_{\mathrm n}$, meaning that they are not neighborhood-perfect. In both cases we have found a contradiction, therefore it cannot occur that both $G$ and $H$ do not satisfy $\alpha_{\mathrm n} = \rho_{\mathrm n}$.

We need only to consider the case where one of the graphs does not satisfy $\alpha_2 = \rho_{\mathrm n}$. Let us say that $G$ does not satisfy $\at(G) =\pn(G)$ and, consequently, $G$ has an induced subgraph $G'$ isomorphic to $C_4$, $C_6$ or $P_6$. Now, as only $G$ does not satisfy $\at{G} = \pn{G}$, then $\at{G} < \pn{G}$ and $\at{H} = \pn{H}$. Hence $\at{G} < \at{H} \leq \alpha(H)$ because, if not $G\vee H$ would be neighborhood-perfect. Thus we can take $H' = (\at{G'}+1)K_1$ as an induced subgraph of $H$, for $\at{G'} \leq \at{G} < \alpha(H)$. Then once again $G' \vee H'$ must be a $\overline{3K_2}$, $C_6 \vee 3K_1$ or $P_6 \vee 3K_1$. But now as $G' \vee H'$ is an induced subgraph of $G \vee H$, by minimality $G \vee H$ = $G' \vee H'$, proving the theorem.\end{proof}

\section{Structural Characterizations}\label{sec:structural_characterizations}

In this section we shall characterize by minimal forbidden induced subgraphs the class of neighborhood-perfect graphs, restricted to the classes of $P_4$-tidy graphs and tree-cographs. For this we will strongly rely on the characterization of minimally non-neighborhood-perfect graphs with disconnected complement shown in the previous section.

Using Theorem~\ref{theo:CaractOfMinimallyNonNPwithDiscComp} together with the fact that every disjoint union of neighbor\-hood-perfect graphs is neighborhood-perfect and the structures of $P_4$-tidy and tree-cographs given by Theorem~\ref{theo:StructureOfP4Tidy} and Definition~\ref{def:treecograph}, respectively, the following two forbidden induced subgraph characterizations can be proven.

\begin{theorem} \label{teo:CaractOfP4TidyNP}
If $G$ is a $P_4$-tidy graph, then it is neighborhood-perfect if and only if it is $\{\overline{3K_2}, 3-$sun$, C_5 \}$-free.
\end{theorem}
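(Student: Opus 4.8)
The plan is to prove both directions by exploiting the structure theorem for $P_4$-tidy graphs (Theorem~\ref{theo:StructureOfP4Tidy}) together with the characterization of minimally non-neighborhood-perfect graphs with disconnected complement (Theorem~\ref{theo:CaractOfMinimallyNonNPwithDiscComp}). The forward direction is immediate: $C_5$, $3$-sun and $\overline{3K_2}$ are all minimally non-neighborhood-perfect ($C_5$ is an odd hole, the $3$-sun is an odd sun by Theorem~\ref{thm:np-chordal-forb}, and $\overline{3K_2}$ appears in Theorem~\ref{teo:MinNonNPwithAN1Are3PyrAnd0Pyr} and Theorem~\ref{theo:CaractOfMinimallyNonNPwithDiscComp}), so any neighborhood-perfect graph must avoid all three as induced subgraphs.

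For the converse, I would argue by induction on $n_G$ that every $\{\overline{3K_2}, 3\text{-sun}, C_5\}$-free $P_4$-tidy graph $G$ is neighborhood-perfect; equivalently, no such $G$ is minimally non-neighborhood-perfect. Suppose for contradiction $G$ is minimally non-neighborhood-perfect, $P_4$-tidy, and contains none of the three forbidden subgraphs. Apply Theorem~\ref{theo:StructureOfP4Tidy} to $G$. If $\overline G$ is disconnected, then $G$ is a join of two nonempty graphs, so by Theorem~\ref{theo:CaractOfMinimallyNonNPwithDiscComp} it is $\overline{3K_2}$, $C_6\vee 3K_1$, or $P_6\vee 3K_1$; the first is forbidden outright, and the latter two must be ruled out — $C_6\vee 3K_1$ contains $\overline{3K_2}$ (any two nonadjacent vertices of the $C_6$ together with the three added vertices... more carefully, $C_6$ contains $P_4$ but we need $3K_2$: taking alternate edges of $C_6$ gives $3K_2$, hence $C_6$ contains $\overline{3K_2}$'s complement as... ) — here I would simply check directly that $C_6$ and $P_6$ each contain an induced $3K_2$-free obstruction is not quite it; instead note $C_6 \vee 3K_1 \supseteq \overline{3K_2}$ fails, so the cleanest route is: $C_6$ and $P_6$ are not $P_4$-tidy-compatible with the forbidden list — actually $C_6$ itself contains an induced $\overline{3K_2}$? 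No. So the genuine point is that $C_6 \vee 3K_1$ and $P_6\vee 3K_1$ are themselves not $P_4$-tidy (both contain two $P_4$'s extendable by two vertices), so this subcase cannot arise for $P_4$-tidy $G$. If $G$ itself is disconnected, then since a disjoint union of neighborhood-perfect graphs is neighborhood-perfect and $G$ is minimally non-neighborhood-perfect, $G$ must be connected — contradiction. The remaining case is that $G$ is isomorphic to $C_5$, $P_5$, $\overline{P_5}$, a (fat) starfish, or a (fat) urchin.

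It remains to show that none of $P_5$, $\overline{P_5}$, nor any ($C_5$/$\overline{3K_2}$/$3$-sun)-free (fat) starfish or (fat) urchin is minimally non-neighborhood-perfect. For $P_5$ and $\overline{P_5}$ one checks directly that $\pn{} = \an{}$ holds for them and all induced subgraphs (they are small). For the starfish/urchin cases, the main work is to verify that these graphs, when they avoid the three forbidden subgraphs, are in fact neighborhood-perfect — I expect this to be the principal obstacle. The approach here is to use the explicit $(S,C,R)$ partition: a starfish with $|S|=t$ contains an induced $t$-sun-like structure, and an induced $3$-sun precisely when $t\ge 3$ and the body behaves like a triangle, so forbidding the $3$-sun forces $t=2$ (or degenerate bodies); an urchin's complement is a starfish, and $\overline{3K_2}$ is itself a small urchin/starfish, so forbidding it controls the other extreme. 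Once $t=2$ (the minimal starfish/urchin size), the graph is small enough — or has enough universal-like structure after removing the head $R$ via the join formulas of Theorem~\ref{teo:PnOfJoin} and Theorem~\ref{teo:AnOfJoin} — that neighborhood-perfectness can be verified by computing $\pn{}$ and $\an{}$ directly, using Theorem~\ref{teo:MinNonNPwithAN1Are3PyrAnd0Pyr} to handle the $\an{}=1$ cases. The fat versions add a substituted $K_2$ or $2K_1$ in place of one vertex of $S\cup C$; I would handle these by noting the substitution either preserves the relevant count or creates one of the forbidden subgraphs. Assembling these case checks gives the contradiction and completes the induction.
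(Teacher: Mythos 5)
Your overall strategy --- reduce to a minimally non-neighborhood-perfect induced subgraph, split on whether its complement is disconnected, use Theorem~\ref{theo:CaractOfMinimallyNonNPwithDiscComp} for the disconnected-complement case (correctly discarding $C_6\vee 3K_1$ and $P_6\vee 3K_1$ because they are not $P_4$-tidy, after abandoning the false start about finding $\overline{3K_2}$ inside them), and handle the connected, co-connected case via Theorem~\ref{theo:StructureOfP4Tidy} --- is exactly the paper's. However, your treatment of the decisive case, the starfishes and urchins, contains a genuine error and leaves the real work undone. A starfish with $t\geq 3$ ends does \emph{not} in general contain an induced $3$-sun: in a starfish each end $s_i$ has exactly one neighbor $c_i$ in the body, whereas in a $3$-sun each degree-two vertex has two neighbors on the triangle. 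It is the \emph{urchins} with at least three ends that contain induced $3$-suns (the $3$-sun is precisely the urchin with three ends and empty head). Consequently ``forbidding the $3$-sun forces $t=2$'' is false for starfishes, and your fallback --- that the $t=2$ case is small enough to check directly --- also fails, because the head $R$ of a starfish can be an arbitrary $P_4$-tidy graph, and a starfish with nonempty $S$ is not a join, so Theorems~\ref{teo:PnOfJoin} and~\ref{teo:AnOfJoin} do not apply to it.

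What is actually needed (and what the paper supplies as Theorem~\ref{teo:ConnectedAndCoConnectedP4TidyGraphs}) is a direct computation of $\an{H}$ and $\pn{H}$ for every nontrivial connected and co-connected $P_4$-tidy graph $H$: starfishes and fat starfishes with $t$ ends satisfy $\an{H}=\pn{H}=t$ regardless of the head, so they can never be minimally non-neighborhood-perfect; urchins and fat urchins with at least three ends satisfy $\an{H}=1$ and $\pn{H}=2$, and then Theorem~\ref{teo:MinNonNPwithAN1Are3PyrAnd0Pyr} (or the observation that such urchins contain an induced $3$-sun) finishes that case. Your proposal gestures at these facts but does not establish them, and the one structural claim it does make is the wrong way around. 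Until the starfish/urchin parameter computation is carried out, the converse direction is not proved.
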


\begin{theorem}\label{teo:CaractNPTreeCograph}
If $G$ is a tree-cograph, then $G$ is neighborhood-perfect if and only if $G$ is $\{\overline{3K_2}, P_6 \vee 3K_1 \}$-free.
\end{theorem}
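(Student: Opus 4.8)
The plan is to argue both implications by induction on the number of vertices, using the recursive structure of tree-cographs. For the forward implication, note that $\overline{3K_2} = C_4 \vee 2K_1$ and $P_6 \vee 3K_1$ both have disconnected complement and are minimally non-neighborhood-perfect by Theorem~\ref{theo:CaractOfMinimallyNonNPwithDiscComp}; hence any neighborhood-perfect graph, being hereditary with respect to the defining equality, must be $\{\overline{3K_2}, P_6 \vee 3K_1\}$-free. (Note $C_6 \vee 3K_1$ need not be listed separately: a tree-cograph that is the join of two nontrivial parts has each part a smaller tree-cograph, and a tree-cograph containing an induced $C_6$ would have to contain one inside a tree or the complement of a tree; since $C_6$ is neither, such a graph must already be a join, and then an induced $P_6 \subseteq C_6$ together with the opposite side gives $P_6 \vee 3K_1$ — I will need to check this reduction carefully, or simply observe $C_6$ is not a tree-cograph at all, so $C_6 \vee 3K_1$ cannot occur as an induced subgraph of a tree-cograph without the other forbidden graphs appearing.)

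For the converse, assume $G$ is a tree-cograph that is $\{\overline{3K_2}, P_6 \vee 3K_1\}$-free; I want to show $G$ is neighborhood-perfect. Suppose not, and take $G$ a minimal counterexample, so $G$ is minimally non-neighborhood-perfect. By Definition~\ref{def:treecograph}, one of the following holds: (i) $G$ is disconnected, (ii) $\overline{G}$ is disconnected, (iii) $G$ is a tree, or (iv) $G$ is the complement of a tree. Case (i) is impossible since a disjoint union of neighborhood-perfect graphs is neighborhood-perfect and all proper induced subgraphs of $G$ are neighborhood-perfect. Case (ii): $G$ has disconnected complement, so by Theorem~\ref{theo:CaractOfMinimallyNonNPwithDiscComp} $G$ is isomorphic to $\overline{3K_2}$, $C_6 \vee 3K_1$, or $P_6 \vee 3K_1$; the first and third are forbidden by hypothesis, and $C_6 \vee 3K_1$ contains an induced $P_6 \vee 3K_1$, contradiction in every subcase.

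Case (iii): $G$ is a tree, hence chordal, hence by Theorem~\ref{thm:np-chordal-forb} it is neighborhood-perfect if and only if it has no induced odd sun; but a tree has no cycles at all, so it contains no sun, so it is neighborhood-perfect — contradicting that $G$ is a counterexample. Case (iv): $G = \overline{T}$ for a tree $T$. This is the one genuinely delicate case. Here I would use the fact that a tree on at least, say, $7$ vertices necessarily contains either a long induced path or a "spread-out" independent set in a way that forces, in the complement, an induced $P_6 \vee 3K_1$ or $\overline{3K_2}$; concretely, I expect the argument to run: if $T$ has at most some bounded number of vertices one checks the finitely many complements of small trees directly (they are all neighborhood-perfect), and if $T$ is large then either $T$ has an induced $P_6$ (three independent edges among the remaining structure give $\overline{3K_2}$ in $\overline{T}$, or one extracts $P_6 \vee 3K_1$), or $T$ has bounded diameter and large maximum degree, i.e. contains a large star $K_{1,k}$, whose complement contains $K_k + K_1 \supseteq \overline{3K_2}$ once $k$ is big enough — in either regime a forbidden subgraph appears, contradiction. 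The main obstacle is making this case analysis for $\overline{T}$ clean and exhaustive: I need to pin down exactly which complements-of-trees are neighborhood-perfect (equivalently, $\{\overline{3K_2}, P_6 \vee 3K_1\}$-free) and verify that the forbidden pair really does capture all of them, presumably by showing any sufficiently structured tree produces one of the two obstructions in its complement while the remaining small trees are handled by inspection.
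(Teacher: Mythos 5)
Your overall skeleton matches the paper's: reduce to a minimally non-neighborhood-perfect induced subgraph and split according to whether it is disconnected, has disconnected complement, is a tree, or is a co-tree. Cases (i)--(iii) are essentially right (your route through chordality and Theorem~\ref{thm:np-chordal-forb} for trees is a valid alternative to the paper's K\"onig--Egerv\'ary argument), but two points need attention. First, a small repairable error in case (ii): $C_6\vee 3K_1$ does \emph{not} contain an induced $P_6\vee 3K_1$ --- both graphs have nine vertices and are not isomorphic, and $C_6$ has no induced $P_6$. The correct way to discard $C_6\vee 3K_1$ is the one you yourself mention parenthetically: $C_6$ is neither a tree, a co-tree, nor decomposable, so $C_6\vee 3K_1$ is not a tree-cograph, and tree-cographs are hereditary.

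The genuine gap is case (iv). You explicitly leave the co-tree case unfinished, and the sketch you do give would not close it: for instance, the claim that the complement of a large star $K_{1,k}$ contains $\overline{3K_2}$ is false ($\overline{K_{1,k}}=K_k+K_1$, and $\overline{3K_2}$ would require an induced $3K_2$ in $K_{1,k}$, which has no two disjoint edges). More importantly, the direct structural classification of which co-trees contain the two forbidden graphs is unnecessary. The paper's argument is short: if $G=\overline{T}$ with $T$ a tree, take a leaf $v$ of $T$ and its neighbor $u$; then $\{u,v\}$ is a neighborhood-covering set of $G$, so $\pn{G}=2$ (and $\pn{G}>1$ since $\overline{G}$ is connected). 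Hence a minimally non-neighborhood-perfect co-tree would have $\an{G}=1$, and by Theorem~\ref{teo:MinNonNPwithAN1Are3PyrAnd0Pyr} the only minimally non-neighborhood-perfect graphs with $\alpha_{\mathrm n}=1$ are the $3$-sun and $\overline{3K_2}$, neither of which is a co-tree. So no minimally non-neighborhood-perfect connected, co-connected tree-cograph exists at all, and case (iv) yields an immediate contradiction. Without this (or some completed substitute), your proof is incomplete precisely in the case you identify as delicate.
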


We shall first prove Theorem~\ref{teo:CaractOfP4TidyNP}. Let us then begin by determining the values of $\an{G}$ and $\pn{G}$ for any connected and co-connected $P_4$-tidy graph $G$.

\begin{theorem} \label{teo:ConnectedAndCoConnectedP4TidyGraphs}
If $G$ is a nontrivial connected and co-connected $P_4$-tidy graph, then one of the following statements holds:
\begin{enumerate}
    \item $G$ is isomorphic to $C_5$, $\pn{G} = 3$ and $\an{G}= 2$.
    \item $G$ is isomorphic to $P_5$ or $\overline{P_5}$ and $\an{G} = \pn{G} = 2$.
    \item $G$ is a starfish with $t$ ends or a fat starfish arising from one, and \\$\an{G} = \pn{G} = t$.
    \item $G$ is an urchin or a fat urchin with at least $3$ ends, and \\ $\pn{G}= 2$, $\an{G}=1$.
\end{enumerate}    
\end{theorem}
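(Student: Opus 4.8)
The plan is to proceed case by case according to the structure theorem for connected, co-connected $P_4$-tidy graphs (Theorem~\ref{theo:StructureOfP4Tidy}), which tells us that $G$ is one of $C_5$, $P_5$, $\overline{P_5}$, a starfish, a fat starfish, an urchin, or a fat urchin. For the three sporadic graphs $C_5$, $P_5$, $\overline{P_5}$ one simply exhibits explicit optimal neighborhood covers and neighborhood-independent sets: for $C_5$, any single closed neighborhood misses an edge, any two closed neighborhoods miss an edge as well (each $G[v]$ is a $P_3$ covering only $3$ of the $5$ vertices and $2$ of the $5$ edges), so $\pn{C_5}=3$, while two opposite edges form a neighborhood-independent set and no three elements do, giving $\an{C_5}=2$; for $P_5=v_1\cdots v_5$ the set $\{v_2,v_4\}$ covers everything and the two end-edges $v_1v_2$, $v_4v_5$ are neighborhood-independent, with neither parameter reaching $3$; and $\overline{P_5}$ is handled by a symmetric small check.

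The substantive work is the starfish/urchin cases, which I would organize around the partition $(S,C,R)$ with $S=\{s_1,\dots,s_t\}$, $C=\{c_1,\dots,c_t\}$. For a starfish, I claim $\{c_1,\dots,c_t\}$ is a neighborhood cover: $c_i$ dominates $s_i$, all of $C$, all of $R$, and every edge incident to these, and since $C$ is a clique and $R$ is complete to $C$, every edge lies in some $G[c_i]$; this gives $\pn{G}\le t$. Conversely the $t$ edges $s_ic_i$ are pairwise neighborhood-independent — a vertex whose closed neighborhood contains both $s_ic_i$ and $s_jc_j$ would have to be adjacent to $s_i$ and $s_j$, but the only neighbor of $s_i$ is $c_i$ — so $\an{G}\ge t$; with the universal bound $\an{G}\le\pn{G}$ this pins both parameters to $t$. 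The fat starfish is the same argument after checking that substituting one vertex of $S\cup C$ by $K_2$ or $2K_1$ changes neither the cover (use the appropriate vertex of the substituted module, noting $t\ge 2$) nor the packing of the $s_ic_i$-type edges; one must be slightly careful when the substituted vertex is some $c_i$, but a $K_2$ or $2K_1$ still yields a vertex dominating $s_i$ and complete to the rest. For the urchin, $s_i$ is adjacent to every $c_j$ with $j\ne i$; with $t\ge 3$ any two vertices are at distance at most $2$ (two ends $s_i,s_j$ share a common neighbor $c_k$ with $k\notin\{i,j\}$, which exists since $t\ge 3$; an end and a body vertex are adjacent or share a neighbor; etc.), so $\at{G}=1$, and since no single closed neighborhood covers the whole graph while $\{c_1,c_2\}$ does (together they dominate all of $S$, $C$, $R$ and all edges), we get $\pn{G}=2$; finally $\an{G}=1$ follows because a neighborhood-independent set of size $\ge 2$ would have to be an induced matching-like structure that $\at{G}=1$ together with the diameter bound rules out — more directly, any edge or vertex of $G$ together with any other element shares a common dominating vertex, so $\an{G}=1$. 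The fat urchin is again the same after absorbing the substituted $K_2$ or $2K_1$.

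The main obstacle I anticipate is the bookkeeping in the fat cases: verifying that replacing a single vertex of $S\cup C$ by $K_2$ or $2K_1$ does not create a new neighborhood-independent pair (which could in principle push $\an{}$ up in the fat starfish, or push $\an{}$ from $1$ to $2$ in the fat urchin) and does not destroy the small cover. The cleanest way to handle this is to observe that the two vertices of the substituted module form a module, hence are neighborhood-equivalent up to adjacency between themselves, so any neighborhood-independent set can be assumed to use at most one of them; then the fat graph's relevant parameters are controlled by those of the underlying starfish/urchin with possibly one extra isolated-within-module vertex, and the bounds go through unchanged because $t\ge 2$ leaves room. A secondary subtlety is confirming that the listed graphs are pairwise non-isomorphic and exhaust the connected co-connected case with the stated partition sizes (e.g.\ an urchin with $t=2$ degenerates and is excluded, explaining the ``at least $3$ ends'' hypothesis in item~4), which is immediate from Theorem~\ref{theo:StructureOfP4Tidy} but worth stating.
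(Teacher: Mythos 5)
Your overall strategy is the same as the paper's: run through the cases of Theorem~\ref{theo:StructureOfP4Tidy}, check $C_5$, $P_5$, $\overline{P_5}$ by inspection, exhibit the body as a cover and the legs as a neighborhood-independent set for (fat) starfishes, and for urchins with at least $3$ ends show $\pn{G}=2$ via two body vertices and $\an{G}=1$ via a common covering vertex. For that last step the paper makes the claim precise by observing that every vertex and edge of an urchin or fat urchin lies in $G[c]$ for at least $t-1$ of the $t$ body vertices $c$, so any two elements share one when $t\ge 3$; your ``shares a common dominating vertex'' is the same assertion and should be argued this way rather than deduced from $\alpha_2(G)=1$, which by itself only controls pairs of vertices, not pairs of edges. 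You also correctly note that urchins with fewer than $3$ ends are starfishes, which is exactly how the paper disposes of that degenerate case.

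There is one concrete slip, in the fat starfish where a body vertex $c_i$ is replaced by $2K_1=\{c_i',c_i''\}$. You claim the substitution ``still yields a vertex dominating $s_i$ and complete to the rest,'' but $c_i'$ is not adjacent to $c_i''$, so the edge $s_ic_i''$ lies in neither $G[c_i']$ nor in $G[c_j]$ for any $j\ne i$ (no such $c_j$ is adjacent to $s_i$); the cover $\{c_1,\dots,c_{i-1},c_i',c_{i+1},\dots,c_t\}$ therefore misses that edge. The paper's fix is to put $s_i$ into the cover in place of any vertex of the substituted module: $G[s_i]$ then covers $s_i$, $s_ic_i'$ and $s_ic_i''$, while the remaining body vertices cover everything else, so $\pn{G}\le t$ still holds. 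With that correction (and taking only one of the two parallel leg edges into the neighborhood-independent set, as you and the paper both do), the rest of your argument goes through.
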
  
\begin{proof}
Since $G$ is $P_4$-tidy, connected and co-connected, it follows by Theorem~\ref{theo:StructureOfP4Tidy} that $G$ is isomorphic to $C_5$, $P_5$, $\overline{P_5}$, a starfish, a fat starfish, an urchin or a fat urchin. The values of $\rho_{\mathrm n}$ and $\alpha_{\mathrm n}$ for $C_5$, $P_5$ and $\overline{P_5}$ can be easily checked by simple inspection. 

We shall then consider first the case where $G$ is a starfish with partition $(S,C,R)$, such that $|S|=t$, or a fat starfish arising from such a starfish by the substitution of a vertex $c$ of $C$ by a $K_2$, or by the substitution of a vertex $s$ from $S$ by a $K_2$ or $2K_1$. In all cases there is a neighborhood set of size $t$ formed by taking $t$ vertices from $C$. If $G$ is a starfish without substitution of a vertex of $C$ then we take all $C$, if on the other hand it is a starfish where a vertex $c$ of $C$ has been substituted by a $K_2$ or $2K_1$, we take only one of the vertices by which $c$ has been substituted. If $G$ is a fat starfish arising by substituting a vertex $c$ of $C$ by a $2K_1$, then $C-\{c\} \cup \{s\}$, where $s$ was the only neighbor of $c$ in $S$, is a neighborhood set of size $t$ of $G$. Thus $\pn{G} \leq t$. Now in all previous cases, if we take $t$ edges that connect $S$ to $C$, we get a neighborhood-independent set of size $t$. In the cases where a vertex has been substituted by a $K_2$ or $2K_1$, we choose only one of the two edges from $S$ to $C$ involved and all the other edges from $S$ to $C$. In the case of a starfish that is not fat, we take all edges from $S$ to $C$. Thus we have found in all cases a neighborhood-independent set of size $t$, implying that $\an{G} \geq t$. And as $\an{G} \leq \pn{G}$, we have that $\an{G} = \pn{G} = t$. 

Let us now note that an urchin (or fat urchin) of less than $3$ ends is also a starfish (or fat starfish). Therefore if we assume without loss of generality that $G$ is not a starfish, the only possibility remaining is that $G$ is an urchin with at least $3$ ends. 

If $G$ is an urchin or fat urchin with partition $(S,C,R)$, and $|S| = t \geq 3$, we shall see that $\an{G} = 1$ and $\pn{G} = 2$. As there is no universal vertex in $G$, then $\pn{G} \geq 2$. Moreover, if we take two vertices of $C$, taking care of not taking any vertex from the substituting $K_2$ or $2K_1$ in case $G$ is a fat urchin, we clearly obtain a neighborhood set. Hence clearly $\pn{G} = 2$. Now let us see that indeed we cannot have a neighborhood independent set of size $2$. This becomes clear if we observe that in all cases, if $G$ is an urchin or a fat urchin, all vertices and edges are in at least the neighborhood of $t-1$ vertices of $C$. That is, except for the vertices in $S$ (or, eventually, of the $K_2$ or $2K_1$ substituting a vertex of $S$), all the rest of the vertices are adjacent to all vertices in $C$, and these are adjacent to $t-1$ vertices of $C$. Moreover all edges between vertices of $R \cup C$ are in the neighborhood of $t$ vertices of $C$, and all edges between vertices of $S$ and $C$ are in the neighborhood of $t-1$ vertices of $C$. Thus, if we take any two edges or vertices of $G$, as $t \geq 3$, then there must at least be one vertex of $C$ that includes them both in its neighborhood. Therefore, $\an{G} = 1$.  
\end{proof}

We shall then proceed to prove Theorem~\ref{teo:CaractOfP4TidyNP}.

\begin{proof}[Proof of Theorem~\ref{teo:CaractOfP4TidyNP}]

If $G$ is neighborhood-perfect, then it cannot contain as induced subgraph a $\overline{3K_2}$, $3$-sun or $C_5$ because none of these graphs are neigh\-bor\-hood-perfect and the class of neighborhood-perfect graphs is hereditary. We must then only prove that if $G$ is not neighborhood-perfect then it must contain $\overline{3K_2}$, $3$-sun, or $C_5$ as an induced subgraph. 

Suppose that $G$ is a $P_4$-tidy graph which is not neighborhood-perfect. Then it must contain a minimally non-neighborhood-perfect graph as induced subgraph; let $H$ be any such subgraph. The minimality of $H$ implies that it must be connected. If $\overline{H}$ is disconnected, then $H$ is a minimally non-neighborhood-perfect graph with disconnected complement, which by Theorem~\ref{theo:CaractOfMinimallyNonNPwithDiscComp} means that it must be $C_4 \vee 2K_1 = \overline{3K_2}$ , $C_6 \vee 3K_1$ or $P_6 \vee 3K_1$.  But as the class of $P_4$-tidy graphs is hereditary, $H$ must be $P_4$-tidy, which implies that it cannot be $C_6 \vee 3K_1$ or $P_6 \vee 3K_1$. This is because both graphs contain four vertices with at least two companion vertices, namely any consecutive four vertices of the $C_6$ or the center vertices of the $P_6$, and therefore are not $P_4$-tidy. Hence if $\overline{H}$ is disconnected, then $H$ can only be $\overline{3K_2}$. 

Let us suppose now that both $H$ and $\overline{H}$ are connected. As $H$ is minimally non-neighborhood-perfect, then $\an{H}$ must be different from $\pn{H}$. Which means, by Theorem~\ref{teo:ConnectedAndCoConnectedP4TidyGraphs}, that $H$ must be a $C_5$ or an urchin or fat urchin with at least 3 ends. Lastly, if $H$ is an urchin or fat urchin with at least 3 ends, then it must have $\an{H}=1$ and $\pn{H}=2$. But by Theorem~\ref{teo:MinNonNPwithAN1Are3PyrAnd0Pyr}, the only minimally non-neighborhood-perfect graphs with $\an{H}=1$ are the $\overline{3K_2}$ and $3$-sun and the only one of these that is an urchin is the $3$-sun. Therefore, as $H$ is connected and co-connected, it must be a $C_5$ or a $3$-sun. 

We conclude that $H$ must be isomorphic to $\overline{3K_2}$, $C_5$ or $3$-sun and since, by construction, $H$ is an induced subgraph of $G$, this proves the theorem.\end{proof}

Having proved Theorem~\ref{teo:CaractOfP4TidyNP}, we proceed to prove the characterization of neighborhood-perfect graphs restricted to the class of tree-cographs. We shall work with the structural definition of a tree-cograph and strongly rely on the characterization of minimally non-neighborhood-perfect graphs with disconnected complement given in Theorem~\ref{theo:CaractOfMinimallyNonNPwithDiscComp}. 

\begin{theorem} \label{teo:ConCoConTreeCograph}
    If $G$ is a connected and co-connected tree-cograph, then one of the following statements holds:
    \begin{enumerate}
             \item{ $G$ is a tree and $\pn{G} = \an{G} = \nu(G) = \tau(G)$,}
             \item{ $G$ is a connected co-tree and $\pn{G}=2$.}
    \end{enumerate}     
\end{theorem}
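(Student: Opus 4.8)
The plan is to use the structure theorem for tree-cographs to reduce to the two stated cases and then treat each by elementary counting. Since $G$ and $\overline G$ are both connected, Definition~\ref{def:treecograph} (as observed immediately after it in the excerpt) forces $G$ to be a tree or the complement of a tree; a graph may of course be both (for instance $P_4$), in which case both conclusions will hold. Assume first that $G$ is a tree on at least two vertices. The key remark is that in a tree $G[v]=G[N_G[v]]$ is the star with centre $v$, so the edges of $G$ lying inside $G[v]$ are exactly those incident with $v$. It follows that $C\subseteq V(G)$ is a neighborhood-covering set of $G$ if and only if it is both a vertex cover and a dominating set of $G$; but a tree with at least two vertices has no isolated vertex, so every vertex cover is automatically dominating, and hence $\pn{G}=\tau(G)$. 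On the other side, any matching is a neighborhood-independent set, because two disjoint edges cannot both be incident with a common vertex and so cannot both lie in a single star $G[v]$; therefore $\an{G}\ge\nu(G)$. Since trees are bipartite, K\"onig's theorem gives $\tau(G)=\nu(G)$, and combined with the general inequality $\an{G}\le\pn{G}$ this yields $\nu(G)\le\an{G}\le\pn{G}=\tau(G)=\nu(G)$, so all four quantities coincide.

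Now suppose $G$ is the complement of a tree; write $T=\overline G$, a tree on at least two vertices (in fact at least four, since $\overline{K_2}$ and $\overline{P_3}$ are disconnected, but two suffice for the argument). First, $G$ has no universal vertex, as such a vertex would be isolated in $T$, impossible since $T$ is connected on at least two vertices; hence $\pn{G}\ge 2$. For the reverse inequality, note that $N_G[v]=V(G)\setminus N_T(v)$, so $G[v]$ is obtained from $G$ by deleting the $T$-neighbors of $v$, and an edge $xy$ of $G$ lies in $G[v]$ precisely when neither $x$ nor $y$ belongs to $N_T(v)$. Pick a leaf $u$ of $T$ and let $w$ be its unique $T$-neighbor; I claim $C=\{u,w\}$ is a neighborhood-covering set of $G$. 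Since $d_T(u,w)=1$ and $T$ is triangle-free, $u$ and $w$ have no common $T$-neighbor, so $N_T(u)\cap N_T(w)=\emptyset$ and every vertex of $G$ lies in $N_G[u]\cup N_G[w]$; and if some edge $xy$ of $G$ lay in neither $G[u]$ nor $G[w]$, then $\{x,y\}$ would meet both $N_T(u)=\{w\}$ and $N_T(w)$, forcing say $x=w$ and $y\in N_T(w)$, i.e.\ $xy\in E(T)$, contradicting $xy\in E(G)$. Hence $\pn{G}=2$.

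The only real obstacle is this upper bound $\pn{G}\le 2$ in the co-tree case: one must exhibit the correct pair of vertices, and the natural symmetric-looking choices (two leaves of $T$, or the two ends of an arbitrary edge of $T$) fail in general — the reformulation $G[v]=G-N_T(v)$ is what makes it transparent that a leaf of $T$ together with its neighbor works. Everything else reduces to the identifications ``neighborhood-covering set $=$ vertex cover'' and ``matching $\subseteq$ neighborhood-independent set'' in trees, together with K\"onig's theorem. (Throughout, $G$ is assumed nontrivial, as is implicit in the statement; for a single vertex the tree-case equalities degenerate.)
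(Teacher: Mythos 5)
Your proof is correct and follows essentially the same route as the paper's: the same reduction to the tree and co-tree cases, the same use of K\"onig's theorem for trees, and the same choice of a leaf of $\overline G$ together with its unique neighbor as the size-two neighborhood-covering set in the co-tree case. The only difference is that you prove the identities $\pn{G}=\tau(G)$ and $\an{G}\geq\nu(G)$ for trees from scratch (via the observation that $G[v]$ is a star), whereas the paper simply cites these facts for bipartite graphs; your explicit handling of the trivial one-vertex case is a small bonus.
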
 

\begin{proof}
By the definition of tree-cographs, if $G$ is connected and co-connected, then $G$ must be a tree with connected complement or a connected co-tree. 

If $G$ is a tree then it is bipartite. It was already noted in \cite{lehel_neighborhood_1986} and \cite{Sampathkumar_Neeralagi_NNumber_1985} that for any bipartite graph $G$, $\an{G} = \nu(G)$ and $\pn{G} = \tau(G)$, which by the K\"onig-Egerv\'ary theorem implies that $\an{G} = \nu(G) = \tau(G) = \pn{G} $.

If $G$ is a connected co-tree, then $\overline{G}$ has at least one leaf; that leaf and its only neighbor in $\overline{G}$ clearly form a neighborhood set of $G$ of size $2$. Moreover as $G$ has connected complement, there cannot be a neighborhood set of size $1$, for this would imply the existence of a universal vertex in $G$ and an isolated vertex in $\overline{G}$. Hence $\pn{G}=2$, proving the theorem. 
\end{proof}

\begin{corollary} \label{cor:ThereAreNoConCoConTreeCogMinNonNP}
    There are no connected and co-connected tree-cographs that are minimally non-neighborhood-perfect. 
\end{corollary}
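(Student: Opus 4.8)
The plan is to reduce to the two shapes of a connected and co-connected tree-cograph supplied by Theorem~\ref{teo:ConCoConTreeCograph} and dispatch each one. First I would record the general fact that a minimally non-neighborhood-perfect graph $G$ must itself fail the defining equality, i.e.\ $\an{G} < \pn{G}$: all proper induced subgraphs of $G$ are neighborhood-perfect, while the equality $\an{\cdot} = \pn{\cdot}$ is required on \emph{every} induced subgraph, $G$ included, so the only place it can break is at $G$.

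So suppose for contradiction that $G$ is a connected and co-connected tree-cograph that is minimally non-neighborhood-perfect, and hence $\an{G} < \pn{G}$. By Theorem~\ref{teo:ConCoConTreeCograph}, either $G$ is a tree, in which case that theorem already yields $\an{G} = \pn{G}$ and we are done, or $G$ is a connected co-tree with $\pn{G} = 2$. In the latter case, from $1 \le \an{G} \le \pn{G} = 2$ together with $\an{G} \neq \pn{G}$ we obtain $\an{G} = 1$, so Theorem~\ref{teo:MinNonNPwithAN1Are3PyrAnd0Pyr} applies and forces $G$ to be a $3$-sun or $\overline{3K_2}$. But $\overline{3K_2}$ is not co-connected, since its complement $3K_2$ is disconnected, contradicting our assumption on $G$; and the $3$-sun is not a co-tree, since the three degree-two vertices of a $3$-sun are pairwise nonadjacent and hence induce a triangle in its complement, whereas the complement of a tree must be triangle-free. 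Both possibilities are absurd, which proves the corollary.

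There is no real obstacle here: the statement is a short corollary of Theorems~\ref{teo:ConCoConTreeCograph} and~\ref{teo:MinNonNPwithAN1Are3PyrAnd0Pyr}. The only two points that need a line of justification are the observation that a minimally non-neighborhood-perfect graph violates the equality at itself, and the verification that neither the $3$-sun nor $\overline{3K_2}$ is a connected co-tree, both of which are handled above.
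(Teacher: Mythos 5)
Your proof is correct and follows essentially the same route as the paper's: split via Theorem~\ref{teo:ConCoConTreeCograph} into the tree and co-tree cases, deduce $\an{G}=1$ for a connected co-tree from $\pn{G}=2$, and invoke Theorem~\ref{teo:MinNonNPwithAN1Are3PyrAnd0Pyr}; you merely spell out explicitly why the $3$-sun and $\overline{3K_2}$ cannot occur, which the paper only asserts. One wording slip worth fixing: to exclude the $3$-sun you should say that a \emph{tree} is triangle-free (the complement of the $3$-sun contains a triangle on its three degree-two vertices and would have to be a tree), rather than that ``the complement of a tree must be triangle-free,'' which is false in general.
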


\begin{proof}
If a graph $G$ is a connected and co-connected tree-cograph, then, by definition, $G$ is a tree or a co-tree. By Theorem~\ref{teo:ConCoConTreeCograph}, a tree cannot be non-neighborhood-perfect. Moreoveer, if $G$ is a co-tree, then $\pn{G}=2$, which means that if $G$ is minimally non-neighborhood-perfect, then $\an{G}$ must be $1$. But by Theorem~\ref{teo:MinNonNPwithAN1Are3PyrAnd0Pyr}, the only minimally non-neighborhood-perfect graphs with $\an{G}=1$ are the $3$-sun and the $\overline{3K_2}$, none of which are co-trees. Hence if $G$ is a minimally non-neighborhood-perfect graph, $G$ cannot be a connected and co-connected tree-cograph.   
\end{proof}

Having proved this previous results, we give the proof of Theorem~\ref{teo:CaractNPTreeCograph}, that states that if $G$ is a tree-cograph, then $G$ is neighborhood-perfect if and only if $G$ is $\{\overline{3K_2}, P_6 \vee 3K_1 \}$-free.

\begin{proof}[Proof of Theorem~\ref{teo:CaractNPTreeCograph}]
It is clear that if $G$ is neighborhood-perfect, it cannot have $\overline{3K_2}$ or $P_6 \vee 3K_1$ as induced subgraphs, for they are both minimally non-neighborhood-perfect graphs. We shall now prove that if it does not have those graphs as subgraphs, then it is neighborhood-perfect.

Suppose that $G$ is not neighborhood-perfect. Hence $G$ must contain an induced subgraph $H$ that is minimally non-neighborhood-perfect. Clearly by minimality, $H$ cannot be disconnected. If $H$ has disconnected complement, then it is a minimally non-neighborhood-perfect graph with disconnected complement, and by Theorem~\ref{theo:CaractOfMinimallyNonNPwithDiscComp}, it must be $C_4 \vee 2K_1 = \overline{3K_2}$, $C_6 \vee 3K_1$ or $P_6 \vee 3K_1$. But $C_6 \vee 3K_1$ is not a tree-cograph, because it is clearly neither a tree, nor a co-tree, nor the disjoint union of two tree-cographs nor the join of two tree-cographs. Thus if $H$ has disconnected complement, it must be  $\overline{3K_2}$ or $P_6 \vee 3K_1$. 

On the other hand if $H$ has a connected complement, then it will be a connected and co-connected tree-cograph. However by Corollary~\ref{cor:ThereAreNoConCoConTreeCogMinNonNP}, if $H$ is minimally non-neighborhood-perfect, then it cannot be a connected and co-connected tree-cograph. Hence $H$ can only be $\overline{3K_2}$ or $P_6 \vee 3K_1$, and as $H$ was by construction an induced subgraph of $G$, this proves the theorem.  
\end{proof}

\section{Algorithms and Complexity Results} 
\label{sec:algorithms_and_complexity_results}

In this section we shall present linear-time algorithms to solve the recognition problems of neighborhood-perfect graphs, as well the problems of finding an optimal neighborhood-independent set and neighborhood-covering set, restricted to the classes of $P_4$-tidy graphs and tree-cographs. Moreover, we shall prove that although the problem of determining $\an(G)$ and $\pn(G)$ can be solved in linear time for complements of trees, it becomes $\mathcal{NP}$-hard for complements of bipartite graphs.

\subsection{Recognition Algorithms}

By using the particular structure of the modular decomposition trees of $P_4$-tidy graphs and tree-cographs, and Theorems~\ref{teo:CaractOfP4TidyNP} and \ref{teo:CaractNPTreeCograph}, we will show two linear-time algorithms that solve the recognition problem of neighborhood-perfect graphs restricted to these two classes. Both of these algorithms work on the modular decomposition tree of the input graph.

We shall begin by describing a linear-time algorithm that decides whether or not a $P_4$-tidy graph is neighborhood-perfect. Let us first remember that, as was said in section~\ref{sec:preliminaries}, it follows from Theorem~\ref{theo:StructureOfP4Tidy} that if $h$ is an N-node of the modular decomposition tree of a $P_4$-tidy graph $G$, then $\pi(h)$ must be isomorphic to $C_5$, $P_5$, $\overline{P_5}$, a prime starfish or a prime urchin. Moreover in $\BigO(n_{\pi}(h))$ time it can be decided whether or not $\pi(h)$ is a starfish (resp.\ urchin) and, if affirmative, its partition can be found within the same time bound.

Our recognition algorithm for neighborhood-perfect graphs, restricted to the class of $P_4$-tidy graphs, performs a simple traversal of the modular decomposition tree of the input graph, which, we shall show, makes the algorithm terminate in $\BigO(n)$ time provided the modular decomposition tree is given as an input. The algorithm will strongly rely on the characterization by forbidden induced subgraphs proven in Theorem~\ref{teo:CaractOfP4TidyNP}. 

In order to simplify the recognition algorithm, we shall first define a boolean function $C: V(T(G)) \to \{True, False\}$, where $T(G)$ is the modular decomposition tree of $G$, and, for each node $h$ of $T(G)$, $C(h) = True$ if and only if $G[h]$ contains an induced $C_4$. We shall prove that, given as input the modular decomposition tree $T(G)$ of any $P_4$-tidy graph $G$, Algorithm~\ref{algo:C4Test} can be implemented so as to compute $C(h)$ for each node $h$ of $T(G)$ in $\mathcal O(n)$ overall time.  Once we have proved so, we shall use Algorithm~\ref{algo:C4Test} as a subroutine in Algorithm~\ref{algo:RecP4TidyNP}, which recognizes neighborhood-perfect graphs in the class of $P_4$-tidy graphs.

\begin{algorithm2e}[ht!]
\small
\DontPrintSemicolon
\SetKwInput{Input}{Input}
\SetKwInput{Output}{Output}
\SetKwBlock{StepOne}{Step 1: \; Traverse the nodes of $T(G)$ in post-order, and in each node $h$ \textbf{do}:}{end}
\SetKwBlock{StepTwo}{Step 2:}{end}

\Input{A $P_4$-tidy graph $G$ and its modular decomposition tree $T(G)$}
\Output{The modular decomposition tree $T(G)$ with the value $C(h)$ attached to each node $h$ of it, where $C(h) = True$ if and only if $G[h]$ contains an induced $C_4$}    

\StepOne{   
    \lIf{$h$ is a leaf}{$C(h) := False$} 
    \ElseIf{$C(h')$ is True for any child $h'$ of $h$ \textbf{\textup{or}}\\
            $h$ is a P-node having at least two nonleaf children \textbf{\textup{or}}\\
            $\pi(h)$ is $\overline{P_5}$ \textbf{\textup{or}}\\
            $\pi(h)$ is a starfish or an urchin and any vertex of its body represents $2K_1$}
            {$C(h) := True$}
    \lElse{$C(h) := False$}
}
\StepTwo{Output $C(h)$ for every node $h$ of $T(G)$}

\caption{Computes $C(h)$ for every node of $T(G)$, with $G$ a $P_4$-tidy graph \label{algo:C4Test}}
\end{algorithm2e}

\begin{algorithm2e}[ht!]
\small
\DontPrintSemicolon
\SetKwInput{Input}{Input}
\SetKwInput{Initialization}{Initialization}
\SetKwInput{Output}{Output}
\SetKwBlock{StepOne}{Step 1: \; Traverse every node $h$ of $T(G)$ in any order and \textbf{do}:}{end}
\SetKwBlock{StepTwo}{Step 2:}{end}

    \Input{A $P_4$-tidy graph $G$}
    \Output{Determines whether or not $G$ is a neighborhood-perfect graph}
    \Initialization{Build the modular decomposition tree $T(G)$ of $G$ and compute $C(h)$ for every node $h$ of $T(G)$ using Algorithm~\ref{algo:C4Test} }
    
\StepOne{
        \If{$h$ is an N-node}{
            \If{$\pi(h)$ is a $C_5$ or an urchin with at least 3 ends}{
                \textbf{output} ``$G$ is not neighborhood-perfect'' and \textbf{stop}
                \label{ln:2C5}
            }
            \ElseIf{$\pi(h)$ is a fat starfish such that a vertex of its body represents $2K_1$ and $C(h_r)$ is True where $h_r$ is the only vertex of its head}{
                \lIf{$C(h_r)$ is True for $h_r$ the child representing the head of $\pi(h)$}{ \textbf{output} ``$G$ is not neighborhood-perfect'' and \textbf{stop} \label{ln:2Starfish}}
            }
        }

        \ElseIf{$h$ is an S-node}{
            \If{$h$ has at least three nonleaf children}{
                \textbf{output} ``$G$ is not neighborhood-perfect'' and \textbf{stop} \label{ln:2SnodeThreeChildren}
            }
            \ElseIf{$h$ has exactly two nonleaf children $h_1$ and $h_2$ and at least one of $C(h_1)$ and $C(h_2)$ is True}{\textbf{output} ``$G$ is not neighborhood-perfect'' and \textbf{stop} \label{ln:2SnodeTwoChildren}}
        }
    
}
\StepTwo{ \textbf{output} ``$G$ is neighborhood-perfect''}
\caption{Recognition of neighborhood-perfectness of $P_4$-tidy\label{algo:RecP4TidyNP}}
\end{algorithm2e}

Below, we prove that these two algorithms are indeed correct and run in linear-time.

\begin{theorem} \label{teo:C4TestCorrectness}
    Algorithm \ref{algo:C4Test} correctly computes $C(h)$ for every node $h$ of any given modular decomposition tree $T(G)$ in $\BigO(n)$ time, whenever $G$ is a $P_4$-tidy graph.
\end{theorem}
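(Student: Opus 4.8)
The plan is to establish both correctness and the $\BigO(n)$ bound together, by a single post-order induction on $T(G)$. Since Algorithm~\ref{algo:C4Test} processes the nodes in post-order, when it reaches a node $h$ all the values $C(h')$ at proper descendants $h'$ are already fixed, so it suffices to show that, granting the inductive hypothesis ``$C(h')=True$ if and only if $G[h']$ has an induced $C_4$'' for every child $h'$ of $h$, the rule applied at $h$ returns the correct value. The base case, $h$ a leaf, is immediate: $G[h]=K_1$ is $C_4$-free and the algorithm outputs $False$. For the inductive step I would prove two implications: \emph{soundness}, that each condition making the algorithm set $C(h):=True$ forces an induced $C_4$ in $G[h]$; and \emph{completeness}, that if $G[h]$ has an induced $C_4$ then one of those conditions holds (so that the ``else'' branch is taken only when $G[h]$ is $C_4$-free).

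Soundness is the routine direction. A child $h'$ with $C(h')=True$ gives, by induction, an induced $C_4$ in $G[h']$, which is inherited by $G[h]$ since $M(h')\subseteq M(h)$. If $h$ is a series node with two nonleaf children $h_1,h_2$, then each $G[h_i]$ is co-connected on at least two vertices, hence has a nonadjacent pair $\{x_i,y_i\}$, and since $h$ makes $M(h_1)$ completely adjacent to $M(h_2)$ the set $\{x_1,y_1,x_2,y_2\}$ induces a $C_4$. If $\pi(h)\cong\overline{P_5}$, then $\pi(h)$ occurs as an induced subgraph of $G[h]$ (one vertex per child module) and $\overline{P_5}$ contains an induced $C_4$, so $G[h]$ does too. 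Finally, if $\pi(h)$ is a prime starfish or urchin and a body vertex is represented by a child inducing $2K_1=\{c_i',c_i''\}$, then $\{c_i',c_i'',s_i,c_j\}$ with $j\ne i$ (starfish case), resp.\ $\{c_i',c_i'',s_j,c_j\}$ with $j\ne i$ (urchin case), induces a $C_4$; such a $c_j$ exists because there are at least two ends.

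Completeness is the heart of the proof, and I expect the neighbourhood-node case to be the main obstacle. Fix an induced $C_4$ of $G[h]$ on $W=\{a,b,c,d\}$ and let $f(w)$ be the child module of $h$ containing $w$. The crucial observation is that two distinct vertices of $W$ can share a child module only if they form a diagonal (a nonadjacent pair) of the $C_4$: otherwise the module property, applied to the two $C_4$-neighbours of that adjacent pair, would force all four vertices into one module. Hence $|f(W)|\in\{1,2,3,4\}$. If $|f(W)|=1$, the $C_4$ sits inside a single child and the inductive hypothesis makes the first condition fire. If $|f(W)|=4$, the four modules induce exactly a $C_4$ in $\pi(h)$; since among $C_5$, $P_5$, $\overline{P_5}$ and the prime starfishes and urchins only $\overline{P_5}$ contains an induced $C_4$ (a routine structural check, e.g.\ via $2K_2$-freeness of prime starfishes and urchins), we get $\pi(h)\cong\overline{P_5}$ and the corresponding condition fires. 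If $|f(W)|\in\{2,3\}$, some child $h_p$ holds a diagonal, so $G[h_p]$ has a nonadjacent pair, $h_p$ is nonleaf, and $h_p$ is adjacent in $\pi(h)$ to the modules carrying the other vertices of $W$. A parallel node cannot occur, since $C_4$ is connected and hence confined to one component. For a series node, $|f(W)|=3$ is impossible (distinct children are mutually adjacent, contradicting $b\not\sim d$), so $|f(W)|=2$ with two nonleaf children and the series-node condition fires. For a neighbourhood node: if $\pi(h)\cong\overline{P_5}$ the $\overline{P_5}$-condition fires; if $\pi(h)\in\{C_5,P_5\}$ there is no nonleaf child at all (a short $P_4$-tidiness check), so this case cannot arise; and if $\pi(h)$ is a prime starfish or urchin, then Theorem~\ref{theo:StructureOfP4Tidy} and the remarks after it confine the nonleaf children to the head $h_R$ and at most one child inducing $K_2$ or $2K_1$, the unique $K_2/2K_1$-child is present and is a $2K_1$ (it holds a diagonal of the $C_4$), and in $\pi(h)$ it is adjacent either to two mutually nonadjacent vertices (when $|f(W)|=3$) or to the head (when $|f(W)|=2$, where its partner among the nonleaf children must be $h_R$), each of which forces the substituted vertex into the body, since the $\pi(h)$-neighbourhood of an end is a clique not containing the head; hence the body-$2K_1$ condition fires. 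The companion task — checking that the configurations the algorithm declines to flag, namely an end blown up or any vertex blown up to $K_2$, create no induced $C_4$ — is short but essential and completes the ``else'' direction.

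For the running time, recall that $T(G)$ has fewer than $2n$ nodes and is traversed once. At a node that is not a neighbourhood node the work is $\BigO$ of its number of children (scanning the values $C(h')$ and, for a series node, counting nonleaf children), and $\sum_h(\text{number of children of }h)=|V(T(G))|-1<2n$. At a neighbourhood node $h$ the work is dominated by inspecting $\pi(h)$: testing $\pi(h)\cong\overline{P_5}$, and deciding whether $\pi(h)$ is a starfish or an urchin and computing its partition by the $\BigO(n_\pi(h))$-time procedure recalled in Section~\ref{sec:preliminaries}, cost $\BigO(n_\pi(h))$, after which scanning the body for a $2K_1$-child also costs $\BigO(n_\pi(h))$, where ``$h_i$ induces $2K_1$'' is decided in $\BigO(1)$ from the node type stored in $T(G)$ together with the subtree size $n(h_i)$, all sizes $n(\cdot)$ being precomputed once in $\BigO(n)$ by a preliminary post-order pass. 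Since $\sum_{h\text{ a neighbourhood node}}n_\pi(h)\le 2n$, the total is $\BigO(n)$, as claimed.
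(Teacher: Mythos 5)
Your proof is correct and follows essentially the same strategy as the paper's: a post-order induction in which soundness is routine and completeness reduces, via Theorem~\ref{theo:StructureOfP4Tidy}, to a case analysis on the quotient $\pi(h)$ --- the paper organizes that analysis by successively eliminating simplicial and nonsimplicial vertices of $\pi(h)$, while you track how the $C_4$ distributes over the child modules, but the underlying content is the same. One remark: like the paper's own proof (which at one point speaks of two \emph{adjacent} vertices of $\pi(h)$ at a ``P-node''), you silently read the algorithm's condition ``$h$ is a P-node having at least two nonleaf children'' as a condition on S-nodes; this is clearly what is intended, since for a P-node two nonleaf children create no induced $C_4$ whereas for an S-node they always do, so your argument in fact establishes correctness of the (corrected) algorithm.
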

\begin{proof}
   Clearly the algorithm sets $C(h)$ correctly for each leaf $h$ of $T(G)$. Let $h$ be any nonleaf node of $T(G)$ and suppose, without loss of generality, that the algorithm correctly sets $C(h')$ for each of the nodes $h'$ visited before $h$. It is then easy to check that if the algorithm sets $C(h)$ to True, $G[h]$ contains an induced $C_4$. Conversely, suppose that $G[h]$ contains an induced $C_4$ and we shall prove that the algorithm correctly sets $C(h)$ to True. Thus, if any vertex $h'$ of $\pi(h)$ represents a graph containing an induced $C_4$, then $C(h')$ is set to True and consequently also $C(h)$ is set to True. Hence, we assume without loss of generality, that every vertex of $\pi(h)$ represents a $C_4$-free graph. Since $G[h]$ contains an induced $C_4$, Theorem 2.1 implies that $\pi(h)$ is $\overline{P_5}$, an edgeless graph, a complete graph, a starfish or an urchin. If $\pi(h)$ contains an induced $C_4$, necessarily $\pi(h)$ is isomorphic to $\overline{P_5}$ and the algorithm correctly sets $C(h)$ to True. Thus, we assume without loss of generality, that $\pi(h)$ is $C_4$-free. In particular, $G[h]$ is not $\overline{P_5}$. If there is a nonsimplicial vertex $h'$ of $\pi(h)$ representing a non-complete graph, then $G[h]$ is a starfish or an urchin and $h'$ is a vertex of the body representing a non-complete graph; if so, Theorem~\ref{theo:StructureOfP4Tidy} implies that $h'$ represents $2K_1$ and the algorithm correctly sets $C(h)$ to True. Hence, we assume without loss of generality, that every nonsimplicial vertex of $\pi(h)$ represents a complete graph. We conclude that each induced $C_4$ of $G[h]$ arises from two adjacent simplicial vertices $h_1$ and $h_2$ of $\pi(h)$, each of which represents a non-complete graph. Necessarily, $h$ is a $P$-node and $h_1$ and $h_2$ are nonleafs. Also in this case the algorithm correctly sets $C(h)$ to True. This completes the proof of the correctness of the algorithm.

   As for the complexity of the algorithm, it is clear that each node is seen only once, and that every node is traversed after all of its children. Hence, as $T(G)$ has at most $2n$ nodes, the algorithm can easily be implemented to check for every node $h$ if $C(h')$ is True for some child $h'$ or if $h$ is a P-node with at least two nonleaf children, all in $\BigO(n)$ time. Moreover, $\pi(h)$ is $\overline{P_5}$, an urchin or starfish only if $h$ is an N-node. In $\BigO(n_{\pi}(h))$ time it can be checked if any N-node $h$ of a $P_4$-tidy graph is $P_5$, $C_5$, $\overline{P_5}$ or an urchin or starfish and in these cases find their partitions. Thus, it can be verified for all N-nodes $h$ if $\pi(h)$ is $\overline{P_5}$ or if it is a starfish or urchin with a vertex of its body representing a $2K_1$, in time $\BigO(\sum_{h \text{ an N-node}} n_{\pi}(h))$. As was already stated in Section~\ref{sec:preliminaries}, the sum of $n_{\pi}(h)$ for all N-nodes $h$ is at most $2n$. Therefore the whole algorithm can be implemented in $\BigO(n)$ time.
\end{proof} 

\begin{theorem} \label{teo:CorRecP4TidyNP}
    Algorithm~\ref{algo:RecP4TidyNP} correctly determines if a $P_4$-tidy graph $G$ is neigh\-bor\-hood-perfect, in linear-time. Moreover, it works in $\BigO(n)$ time if the modular decomposition tree of $G$ is given as part of the input.
\end{theorem}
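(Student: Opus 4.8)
The plan is to verify correctness by combining the forbidden-subgraph characterization of Theorem~\ref{teo:CaractOfP4TidyNP} with Theorem~\ref{theo:primeFreeGraphs}, which reduces the presence of a prime induced subgraph $Z$ in $G$ to its presence in some $\pi(h)$. First I would observe that among the forbidden graphs $\overline{3K_2}$, $3$-sun and $C_5$, only $C_5$ is prime; $\overline{3K_2}=C_4\vee 2K_1$ and the $3$-sun are not prime. So the argument splits into three tasks: detect an induced $C_5$, detect an induced $3$-sun, and detect an induced $\overline{3K_2}$, each via the modular decomposition tree.

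For $C_5$: since $C_5$ is prime, by Theorem~\ref{theo:primeFreeGraphs} $G$ contains an induced $C_5$ iff some $\pi(h)$ does, and for a $P_4$-tidy graph the only $\pi(h)$ that is or contains an induced $C_5$ is $\pi(h)\cong C_5$ itself (the other possibilities $P_5$, $\overline{P_5}$, prime starfish, prime urchin are $C_5$-free — this needs a short check, noting a prime starfish with $t\ge 3$ ends has an independent set of size $t$ hence no induced $C_5$, and similarly for urchins via complementation). This is exactly what line~\ref{ln:2C5} tests. For the $3$-sun: here I would use Theorem~\ref{teo:ConnectedAndCoConnectedP4TidyGraphs} and the structure recalled after Theorem~\ref{theo:StructureOfP4Tidy}. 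A $3$-sun is connected and co-connected and, being a starfish's relative — in fact the $3$-sun is an urchin with $3$ ends — it must appear ``at an N-node''; more precisely, $G$ contains an induced $3$-sun iff some N-node $h$ has $\pi(h)$ an urchin with at least $3$ ends, or a fat starfish whose body vertex represents $2K_1$ together with a head vertex representing a graph with an induced $C_4$ — this is where Algorithm~\ref{algo:C4Test} and the value $C(h_r)$ enter (lines~\ref{ln:2C5}–\ref{ln:2Starfish}). The precise bookkeeping of which starfish/urchin substitutions create a $3$-sun is the part that needs care. Finally, for $\overline{3K_2}=C_4\vee 2K_1$: since $\overline{3K_2}$ has disconnected complement, it arises exactly at an S-node — either an S-node with $\ge 3$ nonleaf children (each contributing a nontrivial module, and two of them contributing a $C_4$ together with a vertex of a third, i.e.\ $K_2\vee K_2\vee K_1\supseteq C_4\vee 2K_1$ is not quite it; rather $2K_1\vee 2K_1 = C_4$ so any S-node with a nonleaf child gives room, combined with a second nonleaf child for the $2K_1$), handled by lines~\ref{ln:2SnodeThreeChildren}–\ref{ln:2SnodeTwoChildren} — or from a $C_4$ living inside a single child, which is caught because $C(h')$ would already be True. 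So the cases in Step~1 correspond bijectively to the three forbidden subgraphs, and if none fires then $G$ is $\{\overline{3K_2},3\text{-sun},C_5\}$-free, hence neighborhood-perfect by Theorem~\ref{teo:CaractOfP4TidyNP}; conversely each branch exhibits an explicit forbidden subgraph.

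For the running time I would argue as in the proof of Theorem~\ref{teo:C4TestCorrectness}: the modular decomposition tree $T(G)$ and the adjacency lists of all $\pi(h)$ can be built in $\BigO(n+m)$ time (linear time, $\BigO(n)$ if $T(G)$ is given), Algorithm~\ref{algo:C4Test} runs in $\BigO(n)$, and Step~1 visits each node once, doing at each N-node $h$ only work proportional to $n_\pi(h)$ — deciding whether $\pi(h)$ is $C_5$, a starfish or an urchin and finding the partition, which is $\BigO(n_\pi(h))$ — while at each S-node it counts nonleaf children and reads at most two $C$-values in time proportional to the number of children. Since $\sum_h n_\pi(h)\le 2n$ and $T(G)$ has fewer than $2n$ nodes, the total is $\BigO(n)$. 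The main obstacle I anticipate is not the complexity bound but the exhaustive case analysis in the ``only if'' direction: checking that every way a $P_4$-tidy graph can fail to be neighborhood-perfect is detected by one of the four branches, in particular pinning down exactly which fat-starfish/fat-urchin configurations contain an induced $3$-sun and confirming nothing is missed among the prime graphs $P_5,\overline{P_5}$ at N-nodes.
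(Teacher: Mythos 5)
Your overall architecture is the same as the paper's: prove that the algorithm rejects exactly the $\{\overline{3K_2},3\text{-sun},C_5\}$-free $P_4$-tidy graphs and invoke Theorem~\ref{teo:CaractOfP4TidyNP}, detect the prime forbidden subgraphs via Theorem~\ref{theo:primeFreeGraphs}, analyse S-nodes and N-nodes separately for $\overline{3K_2}$, and bound the running time by $\sum_h n_\pi(h)\le 2n$. The complexity part of your plan is fine and matches the paper.

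However, your case analysis assigns two of the algorithm's branches to the wrong forbidden subgraphs, and the resulting claims are false as stated, so the plan as written would fail at the ``careful bookkeeping'' step you flag. First, the $3$-sun is itself a prime graph, so by Theorem~\ref{theo:primeFreeGraphs} it occurs in $G$ if and only if it occurs in some $\pi(h)$; among the possible quotients of a $P_4$-tidy graph only a prime urchin with at least $3$ ends contains one (a prime starfish cannot, since its only independent sets of size $3$ use at least two ends, which have degree $1$). Hence the $3$-sun is caught entirely by line~\ref{ln:2C5}, and your second disjunct --- a fat starfish with a body vertex representing $2K_1$ and a head containing a $C_4$ --- never produces a $3$-sun. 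That configuration is precisely what line~\ref{ln:2Starfish} must detect, and what it yields is $C_4\vee 2K_1=\overline{3K_2}$ (the two nonadjacent body vertices are joined to everything in the head). Correspondingly, your claim that $\overline{3K_2}$ ``arises exactly at an S-node'' because it has disconnected complement is a non sequitur: a graph with connected complement (such as a fat starfish) can perfectly well contain an induced subgraph with disconnected complement, and the paper's proof devotes most of its effort to showing that at an N-node the \emph{only} way $G[h]$ acquires an induced $\overline{3K_2}$ not already present in a child is the fat-starfish configuration of line~\ref{ln:2Starfish}. Once these two branches are reassigned --- line~\ref{ln:2C5} for both $C_5$ and the $3$-sun, lines~\ref{ln:2Starfish}, \ref{ln:2SnodeThreeChildren} and \ref{ln:2SnodeTwoChildren} for $\overline{3K_2}$ --- your plan coincides with the paper's proof; your S-node discussion of $\overline{3K_2}$ also needs tightening to the clean statement that every nonleaf child of an S-node represents a non-complete graph, so three nonleaf children give $2K_1\vee 2K_1\vee 2K_1$ and two nonleaf children with a $C_4$ in one give $C_4\vee 2K_1$.
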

\begin{proof}
    In order to prove that Algorithm~\ref{algo:RecP4TidyNP} correctly decides neighborhood-perfectness of any given $P_4$-tidy graph, we shall prove that it outputs that the graph is neighborhood-perfect if and only if it is $\{C_5, 3\text{-sun}, \overline{3K_2}\}$-free. This together with Theorem~\ref{teo:CaractOfP4TidyNP} will imply the correctness of the algorithm. 

    Suppose that Algorithm~\ref{algo:RecP4TidyNP} outputs that $G$ is not neighborhood-perfect. Hence the algorithm stopped in Step 1. If it stopped in line~\ref{ln:2C5}, then clearly $G$ contains an induced $C_5$ or $3$-sun. stopped in line~\ref{ln:2Starfish} or line~\ref{ln:2SnodeThreeChildren}, then $h$ is an S-node and consequently each of its nonleaf children represents a non-complete graph. On the one hand, if the algorithm stopped in line~\ref{ln:2Starfish}, then any set consisting of a pair of nonadjacent vertices of each of the three nonleaf children of $h$ induces $2K_1 \vee 2K_1\vee 2K_1 =\overline{3K_2}$ in $G$. On the other hand, if the algorithm stopped in line~\ref{ln:2SnodeThreeChildren}, then the vertices of an induced $C_4$ of the graph represented by $h_1$ or $h_2$ together with a pair of nonadjacent vertices of the graph represented by the other one induce $C_4\vee 2K_1=\overline{3K_2}$ in $G$ as well. We conclude that if the algorithm outputs that $G$ is not neighborhood-perfect, then $G$ contains an induced $C_5$, $3$-sun or $\overline{3K_2}$.

    Let us now prove that, conversely, if $G$ contains any of the three forbidden induced subgraphs, then the algorithm outputs that $G$ is not neighborhood-perfect. 

    Suppose first that $G$ contains an induced $C_5$ or an induced $3$-sun. By Theorem~\ref{theo:primeFreeGraphs}, there is some N-node $h$ of $T(G)$ such that $\pi(h)$ contains an induced $C_5$ or $3$-sun. By Theorem~\ref{theo:StructureOfP4Tidy}, $\pi(h)$ is $C_5$ or an urchin with at least three ends and the algorithms outputs that $G$ is not neighborhood-perfect in line~\ref{ln:2C5}. Finally, let us consider the case when $G$ contains an induced $\overline{3K_2}$. Let $h$ be a node of $T(G)$ such that $G[h]$ contains an induced $\overline{3K_2}$ but none of the graphs represented by its children does. Clearly $h$ cannot be a P-node, so it must be an N-node or S-node. If $h$ is an N-node and $G[h]$ contains an induced $\overline{3K_2}$, then $\pi(h)$ must be an urchin or a starfish. However, if $\pi(h)$ were an urchin, it would contain a $3$-sun. Hence, without loss of generality, let us suppose that it is not an urchin. Suppose $\pi(h)$ is a starfish with partition $(S,C,R)$ with the nodes of $S$ and $C$ being leafs of $T(G)$ and $R$ consisting on a single node $h_r$. By hypothesis, clearly the $\overline{3K_2}$ cannot be entirely in $h_r$, $C$, or $S$. Since every vertex of a graph represented by a node in $S$ has degree at most $2$ in $G[h]$, no vertices of graphs represented by nodes in $S$ can be vertices of any induced $\overline{3K_2}$ of $G[h]$. Now, as each vertex of a graph represented by a vertex of $C$ are adjacent to every vertex of the graph represented by $h_r$, and $\overline{3K_2}$ has no universal vertex, then each induced $\overline{3K_2}$ must have at least two nonadjacent vertices belonging to graphs represented by a vertex of $C$. But this is only possible if $G[h]$ is a fat urchin where some node of $C$ represents $2K_1$. If this is the case, then an induced $\overline{3K_2}$ can only be formed if there is an induced $C_4$ in the graph represented by $h_r$. To conclude if $h$ is an S-node, since $\overline{3K_2} = C_4 \vee 2K_1 = 2K_1 \vee 2K_1 \vee 2K_1$, the only two possibilities for $G[h]$ to have an induced $\overline{3K_2}$ while none of its children have it, are that there are more than three children representing non-complete graphs or two children, one containing a $C_4$ and the other one representing a non-complete graph. In all cases the algorithm outputs that $G$ is not neighborhood-perfect, which completes the proof of the correctness of the algorithm.

    The time complexity of the algorithm can easily be seen to be $\BigO(n+m)$ and $\BigO(n)$ if the decomposition tree is given. It was already mentioned in Section~\ref{sec:preliminaries} that the modular decomposition tree of $P_4$-tidy graphs can be found in linear-time and within the same time bound a partition of the N-nodes that correspond to urchins or starfish can be given. It was already proven in Theorem~\ref{teo:C4TestCorrectness} that Algorithm~\ref{algo:C4Test} runs in $\BigO(n)$ time. In Step $1$ we traverse every node $h$ of $T(G)$, and all the operations corresponding to each node $h$ can be carried out in $\BigO(n_\pi(h))$ time once that $C(h)$ has been determined. Hence, as the sum of $n_{\pi}(h)$ over all nodes $h$ is at most $2n$, the algorithm runs in $\BigO(n+m)$ time and even in $\BigO(n)$ time if the modular decomposition tree $T(G)$ is already given in the input.
 \end{proof} 

Having presented the algorithm for $P_4$-tidy graphs we shall give another one to decide neighborhood-perfectness of tree-cographs in linear-time.

As was already pointed out in Section~\ref{sec:preliminaries}, the N-nodes of the modular decompositions of tree-cographs represent only trees and complement of trees. Moreover neighborhood-perfect tree-cographs were characterized in Theorem~\ref{teo:CaractNPTreeCograph} as tree-cographs having no $\overline{3K_2}$ or $P_6 \vee 3K_1$ as induced subgraphs. We shall use this characterization and the modular decomposition of tree-cographs to achieve a linear-time recognition algorithm.

We shall first define two functions defined on the nodes $h$ of the modular decomposition tree $T(G)$ of a graph $G$. Let $P : V(T(G)) \to \{True, False\}$, such that $P(h) = True$ if and only if $G[h]$ has an induced $P_6$. And let $\alpha : V(T(G)) \to \mathbb{N}$, such that $\alpha(h) = \alpha(G[h])$.

Algorithm~\ref{algo:P6_C4_alpha_geq_3_test} computes both $P(h)$ and $\alpha(h)$ for all nodes in a modular decomposition tree $T(G)$ of a tree-cograph. It computes as well $C(h)$ as was defined in above, all in $\BigO(n)$ time, given the modular decomposition tree. It uses the fact that computing $\alpha(T)$ can be done in time $\BigO(|V(T)|)$, for any tree $T$ \cite{DFS_for_vertex_cover_Savage_1982}.  

\begin{algorithm2e}[ht!]
\small
\DontPrintSemicolon
\SetKwInput{Input}{Input}
\SetKwInput{Output}{Output}
\SetKwBlock{StepOne}{Step 1:\; Traverse the nodes of $T(G)$ in post-order, and in each node $h$ \textbf{do}:}{end}
\SetKwBlock{StepTwo}{Step 2:}{end}

\Input{A $P_4$-tidy graph $G$ and its modular decomposition tree $T(G)$}
\Output{$C(h)$, $P(h)$ and $\alpha(h)$ for every node $h$ of $T(G)$}    

\StepOne{
    
    \lIf{$h$ is a leaf}{ $C(h) := P(h) := False$ and $\alpha(H) := 1$} 

    \ElseIf{$h$ is a P-node with children $h_1, \dots, h_k$}{
        $C(h) := \bigvee_{i=1}^k C(h_i)$,   $P(h) := \bigvee_{i=1}^k P(h_i)$,   $\alpha(h) := \sum_{i=1}^k \alpha(h_i)$
    }  

    \ElseIf{$h$ is a S-node with children $h_1, \dots, h_k$}{
        $\alpha(h)  := \max\{ \alpha(h_i) \colon 1\leq i \leq k\}$, $P(h) := \bigvee_{i=1}^k P(h_i)$,\;
        \lIf{ $h$ has at least two nonleaf children}{$C(h) := True$
        }\lElse{$C(h) := \bigvee_{i=1}^k C(h_i)$}
    }

    \ElseIf{$\pi(h)$ is a tree with children $h_1, \dots, h_k$}{
        compute $\alpha(G[h])$ in linear-time and assign it to $\alpha(h)$, $C(h) :=  False$ \;
    \lIf{the longest path in $\pi(h)$ is of length at least $6$}{$P(h) := True$}
    \lElse{$P(h) := False$}

    }

    \ElseIf{$\pi(h)$ is a co-tree with children $h_1, \dots, h_k$}{
        $\alpha(h) := 2$, $P(h) := False$\;
        \If{$\overline{\pi(h)}$ has an induced matching of size at least $2$}{ $C(h) := True$}
        \lElse{$C(h) := False$}
    }

}

\StepTwo{Output $C(h)$, $P(h)$ and $\alpha(h)$ for every node $h$ of $T(G)$}

\caption{Computes $\alpha(h)$, $P(h)$ and $C(h)$ for every node $h$ of $T(G)$, with $G$ a tree-cograph  \label{algo:P6_C4_alpha_geq_3_test}}
\end{algorithm2e}
Algorithm~\ref{algo:RecTreeCographNP} is a linear-time algorithm, that uses Algorithm~\ref{algo:P6_C4_alpha_geq_3_test} to determine whether any given tree-cograph is neighborhood-perfect.

\begin{algorithm2e}[ht!]
\small
\DontPrintSemicolon
\SetKwInput{Input}{Input}
\SetKwInput{Initialization}{Initialization}
\SetKwInput{Output}{Output}
\SetKwBlock{StepOne}{Step 1:\; Traverse every node $h$ of $T(G)$ in any order and \textbf{do}:}{end}
\SetKwBlock{StepTwo}{Step 2:}{end}

    \Input{A tree-cograph $G$}
    \Output{Determines whether $G$ is a neighborhood-perfect graph}
    \Initialization{Build the modular decomposition tree $T(G)$ of $G$ and compute $C(h)$, $P(h)$, and $\alpha(h)$ for every node $h$ of $T(G)$ using Algorithm~\ref{algo:P6_C4_alpha_geq_3_test}}
    
\StepOne{
        \If{$h$ is an S-node}{
            
            \If{$h$ has at least three nonleaf children}{
                \textbf{output} ``$G$ is not neighborhood-perfect'' and \textbf{stop} \label{ln:4Snode3Children}
            }
            \ElseIf{$h$ has exactly two nonleaf children $h_1$ and $h_2$}{
                \If{$C(h_1)$ or $C(h_2)$ is True}{
                    \textbf{output} ``$G$ is not neighborhood-perfect'' and \textbf{stop} \label{ln:4Snode2Children}
                }
                \ElseIf{$P(h_1)$ is True and $\alpha(h_2)\geq 3$ or vice versa}{
                    \textbf{output} ``$G$ is not neighborhood-perfect'' and \textbf{stop} 
                    \label{ln:4SnodeWithP6vee3K1}
                }
            }

        }

        \If{$h$ is an N-node, with $\pi(h)$ a co-tree}{
            \If{$\overline{G[h]}$ contains an induced matching of size at least 3}{
            \textbf{output} ``$G$ is not neighborhood-perfect'' and \textbf{stop} \label{ln:4CotreeWith3Pyr}
            }
        }
    
}
\StepTwo{ \textbf{output} ``$G$ is neighborhood-perfect''}
\caption{Recognition of neighborhood-perfectness of tree-cograph\label{algo:RecTreeCographNP}}
\end{algorithm2e}

We shall proceed to prove that both Algorithm~\ref{algo:P6_C4_alpha_geq_3_test} and Algorithm~\ref{algo:RecTreeCographNP} are both correct and run in the previously stated time bounds. 

\begin{theorem} \label{teo:CorP6C4AlphaTest}
    Algorithm~\ref{algo:P6_C4_alpha_geq_3_test} correctly computes $C(h)$, $P(h)$ and $\alpha(h)$ for every node $h$ of a given modular decomposition tree $T(G)$ in $\BigO(n + m)$ time, whenever $G$ is a tree-cograph.
\end{theorem}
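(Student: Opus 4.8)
The plan is to verify correctness of Algorithm~\ref{algo:P6_C4_alpha_geq_3_test} by induction on the modular decomposition tree, processed in post-order, and then to bound the running time. For correctness, I would fix a node $h$ of $T(G)$ and assume inductively that $C$, $P$, and $\alpha$ have been computed correctly for all children of $h$. The base case (leaf nodes) is trivial, since $G[h]$ is a single vertex. For the inductive step I would split into the four cases used by the algorithm: $h$ a P-node, $h$ an S-node, $h$ an N-node with $\pi(h)$ a tree, and $h$ an N-node with $\pi(h)$ a co-tree.

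For the P-node case, $G[h]$ is the disjoint union of the $G[h_i]$, so an induced $C_4$ (respectively $P_6$, being connected) must lie inside a single $G[h_i]$, giving $C(h)=\bigvee_i C(h_i)$ and $P(h)=\bigvee_i P(h_i)$; and $\alpha$ is additive over disjoint unions. For the S-node case, $G[h]=G[h_1]\vee\cdots\vee G[h_k]$, so $\alpha(G[h])=\max_i\alpha(G[h_i])$; a connected $P_6$ cannot survive a join (a join has diameter at most $2$ when $k\ge 2$ and more than one vertex on each side, and more carefully no induced $P_6$ since $P_6$ is connected with more than one anticomponent would be needed), so $P(h)=\bigvee_i P(h_i)$; and an induced $C_4=2K_1\vee 2K_1$ arises either inside some child or from two nonadjacent vertices in one nonleaf child and two in another, which is exactly ``$h$ has at least two nonleaf children,'' otherwise $C(h)=\bigvee_i C(h_i)$. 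For the tree case, $G[h]=\pi(h)$ is a tree, hence chordal, so $C_4$-free, giving $C(h)=False$; $\alpha$ is computed directly in linear time by the cited algorithm~\cite{DFS_for_vertex_cover_Savage_1982}; and a tree contains an induced $P_6$ if and only if it has a path of length at least $6$ (as any path in a tree is induced). For the co-tree case, $G[h]=\pi(h)=\overline{T}$ for a tree $T$; I would argue $\alpha(\overline{T})=2$ because $\overline{T}$ being connected and co-connected means $T$ has at least two edges and no universal vertex in $\overline T$, so $\alpha(\overline T)\le 2$, with equality from any two $T$-adjacent vertices; $\overline{T}$ is $P_6$-free since $\overline{P_6}$ contains an induced $C_4$ hence is not a forest's complement restricted appropriately---more directly, $\overline{T}$ has no induced $P_6$ because $\overline{P_6}=P_6$... here I should instead observe $\overline{T}$ has independence number $2$ while $P_6$ has independence number $3$, so $P(h)=False$; and $C(h)=True$ iff $\overline{G[h]}=T$ has an induced matching of size at least $2$, since an induced $C_4$ in $G[h]$ corresponds exactly to an induced $2K_2$ in its complement.

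For the running time, each node is visited once in post-order. Leaf, P-node, and S-node updates take time proportional to the number of children, and $\sum_h(\#\text{children of }h)=|V(T(G))|-1<2n$. For an N-node $h$ with $\pi(h)$ a tree, computing $\alpha(G[h])$ and the longest path takes $\BigO(n_\pi(h))$ by~\cite{DFS_for_vertex_cover_Savage_1982} and a standard tree traversal; for $\pi(h)$ a co-tree, checking whether $\overline{\pi(h)}=T$ has an induced matching of size at least $2$ takes $\BigO(n_\pi(h)+m_\pi(h))$ since $T$ is a tree (so $m_\pi(h)<n_\pi(h)$, and an induced matching of size $2$ in a tree is just two edges at mutual distance at least $2$, easily detected). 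Since $\sum_{h\text{ an N-node}}n_\pi(h)\le 2n$ and the adjacency lists of the $\pi(h)$ together have total size $\BigO(n+m)$, the overall time is $\BigO(n+m)$.

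The main obstacle I expect is the co-tree case: getting clean, correct justifications that $\alpha(\overline{T})=2$, that $\overline{T}$ is $P_6$-free, and that the induced-$C_4$ test reduces to an induced-matching test in $T$, together with confirming that $\pi(h)$ is genuinely a co-tree (connected, co-connected, not a tree) so these structural facts actually apply. A secondary subtlety is the S-node argument that no induced $P_6$ can be created by a join that was not already present in a child, which must be stated carefully since $P_6$ is connected but its complement is disconnected.
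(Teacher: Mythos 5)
Your proposal is correct and follows essentially the same route as the paper: post-order induction with the same case analysis (disjoint union, join via $C_4=2K_1\vee 2K_1$ and co-connectedness of $P_6$, trees via longest path, co-trees via $\alpha\le 2$ and $C_4=\overline{2K_2}$ reducing to an induced matching in the complement), and the same complexity accounting using $\sum_h n_\pi(h)\le 2n$ and $\sum_h m_\pi(h)\le m$. The only blemish is your justification of $\alpha(\overline{T})\le 2$ (it follows because three independent vertices of $\overline{T}$ would form a triangle in the tree $T$, not from the absence of a universal vertex), but this is a one-line fix and you already use exactly that triangle argument for $P_6$-freeness.
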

\begin{proof}
     The nodes of $T(G)$ are traversed in post-order, meaning that when the algorithm computes the functions $C$, $P$, and $\alpha$ for $h$, all the children of $h$ have already been processed. It is clear that if $h$ is a leaf, the functions are correctly computed. Let us prove then that for each node $h$ that is not a leaf, the functions are correctly computed, assuming they were correctly computed for the children of $h$. 

     If $h$ is a P-node, then clearly the maximum independent set of $G[h]$ is the union of the maximum independent sets of each component, moreover it contains an induced $P_6$ or $C_4$ if and only if one of the components has one. 

     If $h$ is an S-node, then clearly the maximum independent set of  $G[h]$ is an independent set of one of the graphs represented by its children. It is as well clear that as $P_6$ has a connected complement, it must be contained in one of the components of $\overline{G[h]}$, which are the graphs represented by the children of $h$. As for the $C_4$, since it can be formed by the join of two $2K_1$, it can be and induced subgraph of $G[h]$ if and only if it is and induced subgraph of the graph represented by some of the children of $h$ or if there are two nonleaf children of $G$ (because the join of one non-edge from each of the graphs represented by them form an induced $C_4$). Thus the only case that remains to be considered is when $h$ is an N-node. 

     If $h$ is an N-node, with $\pi(h)$ a tree, then, as $G[h]$ is a tree, it cannot contain an induced $C_4$, and it contains an induced $P_6$ if and only if there are two vertices at distance $5$ or more. If $h$ is an N-node and $\pi(h)$ is a co-tree with connected complement, then $\alpha(G[h]) = 2$ because it cannot be greater than $2$ ($\overline{\pi(h)}$ would contain a $C_3$) and if it were $1$, then $\pi(h)$ would be complete and therefore have a disconnected complement. Similarly $G[h]$ cannot contain an induced $P_6$, because it has three independent vertices that would form a $C_3$ in the complement of $\pi(h)$. Finally as $C_4 = \overline{2K_2}$, $\pi(h) = G[h]$ contains an induced $C_4$ if and only if $\overline{\pi(h)}$ contains an induced matching of size $2$.

     To prove that the algorithm runs in $\BigO(n + m)$ time, we shall see that for every node of $T(G)$, it performs $\BigO(n_{\pi}(h))$ operations, except for the N-nodes $h$ with $\pi(h)$ isomorphic to a co-tree, in which the number of operations is in $\BigO(n_{\pi}(h) + m_{\pi}(h))$. As mentioned in Section~\ref{sec:preliminaries} the sum of $n_{\pi}(h)$ over all nodes of $T(G)$ is at most $2n$, since all edges in $\pi(h)$, for $h$ an N-node are in one-to-one correspondence with edges of $G$, and two graphs represented by two different N-nodes are vertex-disjoint, the sum of $m_{\pi}(h)$ for all N-nodes with $\pi(h)$ a co-tree must be at most $m$. 

     It is clear that if $h$ is a leaf, a P-node, or an S-node, then the number of operations is proportional to $n_{\pi}(h)$. If $h$ is an N-node with $\pi(h)$ isomorphic to a tree, then using any of the algorithms in \cite{LinearAlgTreesPreeceding_Mitchell1975,LinearAlgTreesRecursiveMithcell1979,DFS_for_vertex_cover_Savage_1982} a maximum cardinality independent set can be found in $\BigO(n_{\pi}(h))$ time. And using the algorithm, suggested by Dijkstra in the sixties and formally proved in \cite{LongestPathInTreeBulterman2002}, to find a maximum path in trees it can be easily tested if the longest path in $\pi(h)$ has size greater or equal to $6$ in $\BigO(n_{\pi}(h))$ time. The last case to consider is the if $h$ is an N-node, with $\pi(h)$ isomorphic to a co-tree. Because $\pi(h)$ has $\Theta((n_{\pi}(h)^2)$ edges, then in $\BigO(m_{\pi}(h))$ time it can be complemented. Once complemented, in $\BigO(n_{\pi}(h)$ time the size of the greatest induced matching can be determined using any of the algorithms in \cite{FrickeStronMatchingOnTrees1992,GolumbicInducedMatching2000,ZitoMaxInducedMatchingTrees2000}. This fact together with the observations made in the last paragraph imply that the whole algorithm can be implemented to run in $\BigO(n + m)$ time.
\end{proof}

\begin{theorem}\label{teo:CorRecTreeCographNP}
    Algorithm~\ref{algo:RecTreeCographNP} correctly determines whether any given tree-cograph $G$ is neighborhood-perfect, in $\BigO(n + m)$ time.
\end{theorem}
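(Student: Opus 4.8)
The plan is to follow the pattern of the proof of Theorem~\ref{teo:CorRecP4TidyNP}. By Theorem~\ref{teo:CaractNPTreeCograph} it suffices to show that Algorithm~\ref{algo:RecTreeCographNP} outputs ``$G$ is not neighborhood-perfect'' exactly when $G$ contains an induced $\overline{3K_2}$ or $P_6\vee 3K_1$. Throughout I would use that, by Theorem~\ref{teo:CorP6C4AlphaTest}, the flags produced in the initialization step are correct, i.e.\ $C(h)$ is True iff $G[h]$ has an induced $C_4$, $P(h)$ is True iff $G[h]$ has an induced $P_6$, and $\alpha(h)=\alpha(G[h])$; I would also use the structural dichotomy of Theorem~\ref{teo:ConCoConTreeCograph} for the co-connected N-nodes.

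First I would handle the easy direction: whenever the algorithm stops, I exhibit the forbidden subgraph as an induced subgraph of $G[h]\subseteq G$. If it stops at line~\ref{ln:4Snode3Children}, then $h$ is an S-node with three nonleaf children, each of which represents a co-connected graph on at least two vertices, hence a non-complete graph; a non-edge taken from each of three of them induces $2K_1\vee 2K_1\vee 2K_1=\overline{3K_2}$. If it stops at line~\ref{ln:4Snode2Children}, one of the two nonleaf children contains an induced $C_4$ and the other a non-edge, giving $C_4\vee 2K_1=\overline{3K_2}$. If it stops at line~\ref{ln:4SnodeWithP6vee3K1}, one child contains an induced $P_6$ and the other an independent set of size $3$, giving $P_6\vee 3K_1$. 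If it stops at line~\ref{ln:4CotreeWith3Pyr}, then $\overline{G[h]}$ has an induced matching of size $3$, which is exactly an induced $\overline{3K_2}$ in $G[h]$.

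The substantial part is the converse. Suppose $G$ contains an induced copy of $H\in\{\overline{3K_2},\,P_6\vee 3K_1\}$ and let $h$ be a node of $T(G)$ of minimum height with $H\subseteq G[h]$. Since $H$ is connected, $h$ is not a P-node. If $h$ is an N-node, then by Theorem~\ref{teo:ConCoConTreeCograph}, $G[h]$ is a tree or a connected co-tree: a tree contains neither $\overline{3K_2}$ (it has an induced $C_4$) nor $P_6\vee 3K_1$ (it has a triangle), and a connected co-tree has $\alpha(G[h])\le 2$ (ruling out $P_6\vee 3K_1$), so the only surviving case is $H\cong\overline{3K_2}$ with $\pi(h)$ a co-tree, which is detected at line~\ref{ln:4CotreeWith3Pyr}; here I would record the technical point that, since the children of such an $h$ represent complete graphs and $\overline{3K_2}$ has no true twins, any induced matching of size $3$ in $\overline{G[h]}$ uses at most one vertex per child-module, so it exists iff $\overline{\pi(h)}$ has one. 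If $h$ is an S-node, write $G[h]=G[h_1]\vee\cdots\vee G[h_k]$; since every co-connected induced subgraph of a join lies inside a single join factor and $H$ has no universal vertex (so no vertex of $H$ lies in a leaf child of $h$), the anticomponents of $H$ — three copies of $2K_1$ when $H\cong\overline{3K_2}$, or $P_6$ and $3K_1$ when $H\cong P_6\vee 3K_1$ — are distributed among the nonleaf children of $h$, and by minimality of $h$ no subfamily of anticomponents whose join equals $H$ lies in a single child. Hence $h$ has at least two nonleaf children; if it has at least three the algorithm stops at line~\ref{ln:4Snode3Children}. If it has exactly two, say $h_1,h_2$, then either $H\cong\overline{3K_2}$ splits as $C_4\vee 2K_1$, so one of $C(h_1),C(h_2)$ is True and the algorithm stops at line~\ref{ln:4Snode2Children}; or $H\cong P_6\vee 3K_1$ splits with $P_6$ in one child and $3K_1$ in the other, so $P(h_i)$ is True and $\alpha(h_j)\ge 3$ for $\{i,j\}=\{1,2\}$ and the algorithm stops at line~\ref{ln:4SnodeWithP6vee3K1} (unless it already stopped at line~\ref{ln:4Snode2Children}). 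In every case the algorithm reports non-perfectness, which together with the easy direction and Theorem~\ref{teo:CaractNPTreeCograph} proves correctness. For the running time, building $T(G)$ with the adjacency lists of the graphs $\pi(h)$ is linear, Algorithm~\ref{algo:P6_C4_alpha_geq_3_test} runs in $\BigO(n+m)$ by Theorem~\ref{teo:CorP6C4AlphaTest}, and Step~1 visits each node once doing $\BigO(1)$ work per S-node after consulting the precomputed flags and, for each N-node $h$ with $\pi(h)$ a co-tree, testing for an induced matching of size $3$ in $\overline{\pi(h)}$ in $\BigO(n_\pi(h)+m_\pi(h))$ time (or precomputing it in Algorithm~\ref{algo:P6_C4_alpha_geq_3_test} beside the size-$2$ test); since $\sum_h n_\pi(h)=\BigO(n)$ and $\sum_h m_\pi(h)=\BigO(m)$, the total is $\BigO(n+m)$.

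I expect the main obstacle to be the converse direction at the minimal node $h$: cleanly ruling out the two N-node sub-cases and, in the S-node case, arguing that the forbidden subgraph must be spread across exactly the number and shape of nonleaf children that the algorithm inspects. The one genuinely new technical ingredient is the twin-based equivalence between induced matchings of size $3$ in $\overline{G[h]}$ and in $\overline{\pi(h)}$, needed so that checking $\pi(h)$ rather than $G[h]$ is both correct and linear-time.
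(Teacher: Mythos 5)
Your proposal is correct and follows essentially the same route as the paper: reduce correctness to detecting the two forbidden subgraphs of Theorem~\ref{teo:CaractNPTreeCograph}, verify each stopping line exhibits one, run the converse by locating a minimal node $h$ containing the forbidden subgraph and doing a case analysis on its type (ruling out P-nodes and tree N-nodes, splitting the anticomponents across the nonleaf children of an S-node, and reducing the co-tree N-node case to an induced matching of size $3$ in $\overline{\pi(h)}$), and then sum $n_\pi(h)$ and $m_\pi(h)$ over the nodes for the time bound. Your extra twin-based remark relating induced matchings in $\overline{G[h]}$ and $\overline{\pi(h)}$ is harmless but not needed here, since for an N-node of a tree-cograph the graph $G[h]$ is itself a tree or co-tree and coincides with $\pi(h)$, as the paper notes in the proof of Theorem~\ref{teo:NnodeForTreeCographCorrect}.
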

\begin{proof}
     To prove the correctness of this algorithm, we shall apply the same reasoning as in the proof of Theorem~\ref{teo:CorRecP4TidyNP}, but using the subgraph characterization of neighborhood-perfect graphs among tree-cographs proved in Theorem~\ref{teo:CaractNPTreeCograph}. We will then prove that the algorithm outputs that the graph $G$ is neighborhood-perfect if and only if $G$ is $\{P_6 \vee 3K_1, \overline{3K_2}\}$-free.

     Let see first that if the algorithm outputs that the graph is not neighborhood-perfect, then it must contain one of the forbidden induced subgraphs. It must stop in Step 1. If it stops in line~\ref{ln:4Snode3Children}, then clearly $G[h]$ must contain an induced $\overline{3K_2} = 2K_1 \vee 2K_1 \vee 2K_1$; if it stops in line~\ref{ln:4Snode2Children} then it must contain an induced $C_4 \vee 2K_1 = \overline{3K_2}$. Moreover if it stops in line~\ref{ln:4SnodeWithP6vee3K1}, then one of the two children of $h$ contains an induced $P_6$ and the other one has an independent set of size at least $3$, implying that $G[h]$ contains an induced $P_6 \vee 3K_1$. Lastly if it stops in line~\ref{ln:4CotreeWith3Pyr}, then $G[h]$ is a co-tree that contains an induced $\overline{3K_2}$.

     To conclude the if and only if proof, suppose now that $G$ contains one of the two forbidden induced subgraphs, and let us see that the algorithm must then output that $G$ is not neighborhood-perfect. Clearly if $G$ contains one of the forbidden induced subgraphs, then there must be a node $h$ of $T(G)$ such that $G[h]$ contains the induced subgraph, but none of its children does. Clearly $h$ cannot be a P-node. Moreover, $h$ cannot be an N-node with $\pi(h)$ isomorphic to a tree, because both forbidden graphs have cycles. Thus $h$ must be a S-node or an N-node with $\pi(h)$ isomorphic to a co-tree. If $h$ is a S-node and $G[h]$ contains an induced $\overline{3K_2}$, then, as was shown in the proof of Theorem~\ref{teo:CorRecP4TidyNP}, either $h$ has three nonleaf children or has exactly two nonleaf children one of which contains an induced $C_4$. On the other hand if $h$ is an S-node but $G[h]$ contains an induced $P_6 \vee 3K_1$, then as both $P_6$ and $3K_1$ are not the join of any other graph, there must be two children of $h$, one representing a graph containing an induced $P_6$ and the other one a graph having an independent set of size at least $3$. All of these cases are considered in lines~\ref{ln:4Snode2Children}, \ref{ln:4Snode3Children}, and \ref{ln:4SnodeWithP6vee3K1}. Finally if $h$ is an N-node, with $\pi(h)$ a co-tree, then clearly $G[h]$ cannot contain an induced $3K_1 \vee P_6$, because the complement of a $3K_1$ would be a $C_3$, and $G[h]$ is a co-tree. If i$G[h]$ contains an induced $\overline{3K_2}$, then it could only be because in the complement of $\pi(h)$ there is an induced $3K_2$, which is the same as saying that $\overline{\pi(h)}$ has an induced matching of size at least $3$. Again this is tested in line~\ref{ln:4CotreeWith3Pyr}. So we have proved that if $G$ contains one of the forbidden induced subgraphs, then the algorithm outputs that $G$ is not neighborhood-perfect, concluding the proof of the if and only if. 

     To see that the algorithm runs in $\BigO(n + m)$ time, we shall use the same argument as in Theorem~\ref{teo:CorP6C4AlphaTest}. First recall that as was mentioned in Section~\ref{sec:preliminaries} we can construct the modular tree in linear-time and, as was already proven, run Algorithm~\ref{algo:P6_C4_alpha_geq_3_test} in linear-time. Now, for every node $h$ in $T(G)$, if $h$ is a P-node or an N-node with $\pi(h)$ a tree, the algorithm does no operations. If $h$ is an S-node, then it clearly can determine the number of children $h_i$ of $h$ and check the values of $C(h_i)$, $P(h_i)$ and $\alpha(h_i)$ for all of them, in $\BigO(n_{\pi}(h))$ time. Finally if $h$ is an N-node, with $\pi(h)$ a co-tree, then, as $m_{\pi}(h) \in \Theta(n_{\pi}(h)^2)$, we can complement $\pi(h)$ in $\BigO(m_{\pi}(h))$ time. Once complemented, we can use any of the linear-time algorithms in \cite{FrickeStronMatchingOnTrees1992,ZitoMaxInducedMatchingTrees2000,GolumbicInducedMatching2000}, to compute a maximum induced matching of $\overline{\pi(h)}$ in $\BigO(n_{\pi}(h))$ time. Thus the algorithm makes at most a number of operations proportional to $n_{\pi}(h)$ for every N-node $h$ and to $m_{\pi}(h)$ for the N-nodes with $\pi(h)$ a co-tree, which implies that it runs in $\BigO(n + m)$ time for the whole graph.
\end{proof}

\subsection{Optimal Sets Algorithms}\label{subsec:optimal}
Clearly, a maximum neighborhood-independent set and minimum neighborhood covering set of the disjoint union of two graphs can be obtained by the union of the respective sets in the smaller graphs. This, together with Theorems~\ref{teo:AnOfJoin} and \ref{teo:PnOfJoin}, and the particular modular decomposition trees of $P_4$-tidy graphs and tree-cographs, allowed us to prove the following results.

In this section we shall present two new linear-time algorithms to compute a maximum neighborhood-independent set, a minimum neighborhood set, a maximum $2$-independent set, and a minimum dominating set of $P_4$-tidy and tree-cographs. We will refer a maximum neighborhood-independent set, a minimum neighborhood set, a maximum $2$-independent set and a minimum dominating set of a graph as \emph{optimal sets} of the graph. As in the previous section, we shall strongly use the properties of the modular decomposition trees of these two classes. 

First we shall present an algorithm that given a subroutine that computes the optimal sets of graphs represented by the N-nodes of the modular decomposition tree (meaning that it computes a maximum neighborhood-independent set, a minimum neighborhood independent set, a domination sets), finds optimal sets for the graphs represented by all the remaining nodes of the modular decomposition tree. This algorithm will be used for both classes of graphs, changing only the routine that finds optimal sets for the graph represented by the N-nodes (which have a different characterization in each class). It is also interesting to note that given any other graph class with a known characterization of its modular decomposition tree, one needs only to find a routine that finds optimal sets for the graph represented by the N-nodes from optimal sets of its children, to obtain an algorithm that finds optimal sets in the whole graph.

Given a graph $G$ and its modular decomposition tree $T(G)$, for any node $h$ of $T(G)$, let $R_{\mathrm n}(h)$ be a list of vertices of $G$ that form a neighborhood set of $G[h]$ of minimum size, $A_{\mathrm n}(h)$ be a list of vertices and edges forming a maximum neighborhood independent set of $G[h]$, $A_2(h)$ be a list of vertices forming a $2$-independent set of maximum size of $G[h]$, and $D(h)$ be a list of vertices of $G$ constituting a minimum dominating set of $G[h]$. We will call these four lists, \emph{optimal lists} for the node $h$. Algorithm~\ref{algo:GeneralOptimalSet} will show how to recursively obtain optimal lists for each node $h$, thus obtaining these lists for the root of $T(G)$, which we shall call $A_{\mathrm n}(G)$, $R_{\mathrm n}(G)$, $A_2(G)$ and $D(G)$, respectively. For this purpose, Algorithm~\ref{algo:GeneralOptimalSet} will assume that we have a subroutine that given any N-node and optimal lists for the children of the N-node, correctly obtains the lists for the N-node. In all the following algorithms, we shall denote the concatenation of lists $l_1,\ldots,l_k$, with $1\leq i \leq k$ as $\sum_{i = 1}^k l_i$. To denote the concatenation of two lists $l_1$ and $l_2$, we will use $l_1+l_2$. We will denote a list by listing its elements between `$\langle$' and `$\rangle$'; for instance, a list whose elements are $x,y,z$ will be denoted by $\langle x,y,z\rangle$. If $l$ is a list, we will denote by $l[i]$ its $i$-th element.

\begin{algorithm2e}[htbp!]
\small
\DontPrintSemicolon
\SetKwInput{Input}{Input}
\SetKwInput{Output}{Output}
\SetKwInput{Initialization}{Initialization}
\SetKwBlock{StepOne}{Step 1: \; Traverse the nodes of $T(G)$ in post-order, and in each node $h$ \textbf{do}:}{end}
\SetKwBlock{StepTwo}{Step 2:}{end}
\Input{A graph $G$}
\Output{$A_{\mathrm n}(G)$, $R_{\mathrm n}(G)$, $A_2(G)$ and $D(G)$}    
\Initialization{Construct $T(G)$, the modular decomposition tree of $G$}

\StepOne{

    \If{$h$ is a leaf, representing only $v\in V(G)$}{ $A_{\mathrm n}(h) := \langle v\rangle$, $R_{\mathrm n}(h):= \langle v\rangle$, $A_2(h):= \langle v\rangle$, $D(h):= \langle v\rangle$} 

    \ElseIf{$h$ is a P-node with children $h_1, \dots, h_k$}{ $R_{\mathrm n}(h) := \sum_{i=1}^k R_{\mathrm n}(h_i)$, $A_2(h) := \sum_{i=1}^k A_2(h_i)$, $D(h) := \sum_{i=1}^k D(h_i)$, $A_{\mathrm n}(h) := \sum_{i=1}^k A_{\mathrm n}(h_i)$}  

    \ElseIf{$h$ is an S-node with children $h_1, \dots, h_k$}{
        
        $A_2(h):=\langle v\rangle$ where $v$ is an arbitrary vertex of $G[h]$\;
   
        $D(h):=$ a list of minimum length among $D(h_1),\ldots,D(h_k),\langle v_1,v_2\rangle$ for any $v_1\in V(G[h_1])$ and $v_2\in V(G[h_2])$\;
        
        \If { $k=2$ }{$ A_{\mathrm n}(h):= \langle (A_2(h_1)[i],A_2(h_2)[i])\colon {1\leq i\leq\min\{\vert A_2(h_1)\vert,\vert A_2(h_2)\vert\}\rangle}$ \label{ln:Alg5EdgeFormation}}
        \Else{
               $A_{\mathrm n}(h):=\langle (v_1,v_2) \rangle$ for any $v_1\in V(G[h_1])$ and $v_2\in V(G[h_2])$\;
            }

        \begin{tabular}{rl}   
        $R^*:={}$ & a list of minimum length among\\
         &$D^*(h_1),\ldots,D^*(h_k),R_{\mathrm n}(h_1),\ldots,R_{\mathrm n}(h_k)$,\\ 
         &where  $D^*(h_i)=D(h_i)+\langle v\rangle$ for any $v\in G[h]\setminus G[h_i]$\\
        \end{tabular}

        \If{ $k=2$ } {
            $R_{\mathrm n}(h):=R^*$}
        \Else{
            $R_{\mathrm n}(h):=$ a list of minimum length between $R^*$ and $\{v_1,v_2,v_3\}$, where $v_i\in V(G[h_i)])$ for $i\in\{1,2,3\}$ }
                }
    \ElseIf{$h$ is an N-node}{
            Use a graph class specific subroutine to calculate $A_{\mathrm n}(h)$, $R_{\mathrm n}(h)$, $A_2(h)$ and $D(h)$
    }

}

\StepTwo{Output $A_{\mathrm n}(G)$, $R_{\mathrm n}(G)$, $A_2(G)$, $D(G)$}

\caption{Computes $A_{\mathrm n}(G)$, $R_{\mathrm n}(G)$, $A_2(G)$, $D(G)$ of a graph $G$, if a subroutine to find optimal lists for the graphs represented by N-nodes of its modular decomposition tree is given  \label{algo:GeneralOptimalSet}}

\end{algorithm2e}

Below, we prove that Algorithm~\ref{algo:GeneralOptimalSet} correctly calculates the desired lists, given that the subroutine used to calculate the lists in the N-nodes works correctly. Moreover we shall prove that if the N-nodes' subroutine works in linear time with respect to $\pi(h)$ for an N-node $h$, then the Algorithm~\ref{algo:GeneralOptimalSet} works in linear time with respect to $G$.

\begin{theorem}
    Algorithm~\ref{algo:GeneralOptimalSet} obtains correctly $A_{\mathrm n}(G)$, $R_{\mathrm n}(G)$, $A_2(G)$ and $D(G)$, given that the subroutine for N-nodes is correct.
\end{theorem}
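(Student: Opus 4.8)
The plan is to argue by induction on the modular decomposition tree $T(G)$, processing its nodes in the post-order used by Algorithm~\ref{algo:GeneralOptimalSet}, so that whenever a node $h$ is treated the optimal lists of all of its children are already available and, by the inductive hypothesis, correct. The base case is straightforward: a leaf $h$ represents a single vertex $v$, and the singleton $\{v\}$ is at once a minimum neighborhood-covering set, a maximum neighborhood-independent set, a maximum $2$-independent set and a minimum dominating set of $G[h]$, so the four one-element lists assigned by the algorithm are optimal.

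Next I would handle the P-node case. Here $G[h]=G[h_1]+\cdots+G[h_k]$ is a disjoint union, and no vertex of one component covers, dominates, or is joined by a path of length at most two to anything in another component; hence a minimum neighborhood-covering set, a minimum dominating set, a maximum $2$-independent set and a maximum neighborhood-independent set of $G[h]$ are each obtained as the disjoint union of the corresponding optimal sets of $G[h_1],\dots,G[h_k]$, which is exactly what the concatenations computed by the algorithm produce.

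The main work is the S-node case, $G[h]=G[h_1]\vee\cdots\vee G[h_k]$ with $k\ge 2$, where for each of the four lists I would verify both \emph{validity} (the list is a set of the prescribed kind) and \emph{optimality} (its length matches the closed form already proved). For $A_2(h)$: any two vertices of a join of at least two graphs are at distance at most two, so $\at{G[h]}=1$ and a one-element list is optimal. For $D(h)$: each $D(h_i)$ dominates $G[h]$, as does any pair with one vertex in $V(G[h_1])$ and one in $V(G[h_2])$, and by Lemma~\ref{lem:DomSetofMultJoin} the shortest of these lists, of length $\min\{2,\gamma(G[h_1]),\dots,\gamma(G[h_k])\}$, is minimum. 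For $A_{\mathrm n}(h)$: if $k\ge 3$ then $\an{G[h]}=1$ by Corollary~\ref{cor:AnOfMultJoin} and the single edge chosen is optimal, while if $k=2$ the list built in line~\ref{ln:Alg5EdgeFormation} is a matching between (initial segments of) the $2$-independent sets $A_2(h_1)$ and $A_2(h_2)$, hence neighborhood-independent by the argument in the proof of Theorem~\ref{teo:AnOfJoin}, and of size $\min\{|A_2(h_1)|,|A_2(h_2)|\}=\min\{\at{G[h_1]},\at{G[h_2]}\}=\an{G[h]}$ by that theorem. For $R_{\mathrm n}(h)$: I would check that each candidate list is a neighborhood-covering set of the whole join — for $D^*(h_i)=D(h_i)+\langle v\rangle$ with $v\notin V(G[h_i])$, the elements of $D(h_i)$ cover every vertex of $G[h]$ and every edge having an endpoint outside $G[h_i]$ (each of them being adjacent, in the join, to everything outside $G[h_i]$), while $v$ covers the edges internal to $G[h_i]$; for $R_{\mathrm n}(h_i)$, it already covers $G[h_i]$ and, reasoning as before, every remaining vertex and edge lies in the closed neighborhood of one of its elements; and $\langle v_1,v_2,v_3\rangle$ with $v_i\in V(G[h_i])$ covers $G[h]$ when $k\ge 3$ since for any vertex or edge some $v_i$ is adjacent to all of its endpoints. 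Taking the minimum length as the algorithm does and invoking Theorem~\ref{teo:PnOfJoin} (for $k=2$) or Corollary~\ref{cor:PnOfMultJoin} (for $k\ge 3$) then gives $|R_{\mathrm n}(h)|=\pn{G[h]}$.

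Finally, when $h$ is an N-node the lists are produced by the graph-class-specific subroutine, which is correct by hypothesis; applying the induction up to the root of $T(G)$, which represents $G$ itself, yields the correctness of $A_{\mathrm n}(G)$, $R_{\mathrm n}(G)$, $A_2(G)$ and $D(G)$. I expect the S-node case to be the main obstacle, and within it the neighborhood-covering list $R_{\mathrm n}(h)$: one must confirm that every one of the heterogeneous candidates — a dominating set of a child padded by one outside vertex, a child's own neighborhood-covering set, or a three-vertex transversal — is genuinely a neighborhood-covering set of the entire join rather than merely of the child, and that the minimum of their lengths coincides with the value of $\rho_{\mathrm n}(G[h])$ given by Theorem~\ref{teo:PnOfJoin} and Corollary~\ref{cor:PnOfMultJoin}.
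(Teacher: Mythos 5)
Your proof is correct and follows essentially the same route as the paper's: induction over the post-order traversal of $T(G)$, with the P-node case handled by disjoint union, the S-node case by checking validity of each candidate list and matching its length against Theorems~\ref{teo:PnOfJoin} and~\ref{teo:AnOfJoin}, Lemma~\ref{lem:DomSetofMultJoin}, and Corollaries~\ref{cor:PnOfMultJoin} and~\ref{cor:AnOfMultJoin}, and the N-node case delegated to the assumed-correct subroutine. You simply spell out in detail (particularly for $R_{\mathrm n}(h)$) what the paper dismisses as ``easy to see.''
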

\begin{proof}
     The algorithm traverses $T(G)$ in post-order, meaning that before reaching a node $h$, all its children have their optimal lists already computed. It is clear that if $h$ is a leaf, then $G[h]$ is a single vertex and then all optimal lists associated with $h$ consist in precisely that vertex. Let us now see that if we suppose the algorithm correctly builds optimal lists for all the children $h_1, \ldots, h_k$ of an S-node or P-node, then it correctly computes them for the node itself. If $h$ is a P-node, then, since the graphs represented by its children are the components of $G[h]$, it is clear that all optimal sets required can be obtained by simply joining the lists of the optimal sets for the children. If $h$ is an S-node, it is easy to see that each of the lists $A_2(h)$, $D(h)$, $A_{\mathrm n}(h)$, $R_{\mathrm n}(h)$ represent a dominating set, a $2$-independent set, a neighborhood-independent set and a neighborhood set of $G[h]$, respectively. Moreover, since the lengths of these lists match the optimal values (according to Theorems~\ref{teo:PnOfJoin}, \ref{teo:AnOfJoin}, \ref{lem:DomSetofMultJoin}, \ref{cor:PnOfMultJoin}, and \ref{cor:AnOfMultJoin}), the lists build in this way are optimal lists.
\end{proof}

\begin{lemma} \label{lem:boundOfEdgesMadeinAN}
    Let $c(h)$ be the number of edges made in line~\ref{ln:Alg5EdgeFormation} if $k=2$, for $h$ and all the descendants of $h$ in a modular decomposition tree $T(G)$. Hence, for every node $h$, $c(h) + \at{h} \leq n(h)$, where $\at{h} = \at{G[h]}$.
\end{lemma}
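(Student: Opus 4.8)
The plan is to prove the inequality $c(h)+\at{h}\le n(h)$ by induction on the structure of the modular decomposition tree $T(G)$, following the same post-order in which Algorithm~\ref{algo:GeneralOptimalSet} processes the nodes. For a leaf $h$ we have $c(h)=0$, $\at{h}=1$, and $n(h)=1$, so the base case is immediate. For the inductive step, fix a non-leaf node $h$ with children $h_1,\dots,h_k$ and assume $c(h_i)+\at{h_i}\le n(h_i)$ for every $i$; recall that $n(h)=\sum_{i=1}^k n(h_i)$ in all three cases (P-node, S-node, N-node), since $\{M(h_1),\dots,M(h_k)\}$ partitions $M(h)$.

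First I would dispose of the easy cases. If $h$ is an N-node, no edges are created in line~\ref{ln:Alg5EdgeFormation} at $h$ itself, so $c(h)=\sum_i c(h_i)$; combining with $\at{h}\le\sum_i \at{h_i}$ (any $2$-independent set of $G[h]$ meets each module $M(h_i)$ in a $2$-independent subset of $G[h_i]$, because distances inside a module do not increase when passing to $G[h]$) and the inductive hypotheses gives $c(h)+\at{h}\le\sum_i(c(h_i)+\at{h_i})\le\sum_i n(h_i)=n(h)$. If $h$ is a P-node, then $G[h]$ is the disjoint union of the $G[h_i]$, so $\at{h}=\sum_i\at{h_i}$ and again no edges are created at $h$, so $c(h)=\sum_i c(h_i)$ and the same summation closes the case.

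The main case — and the one I expect to be the real obstacle — is the S-node. Here $c(h)=\sum_i c(h_i)+c_0$, where $c_0$ is the number of edges created at $h$ in line~\ref{ln:Alg5EdgeFormation}; this is $0$ unless $k=2$, in which case $c_0=\min\{\vert A_2(h_1)\vert,\vert A_2(h_2)\vert\}=\min\{\at{h_1},\at{h_2}\}$ (using that, by the inductive correctness of the algorithm, $\vert A_2(h_i)\vert=\at{h_i}$). Also, since $\overline{G[h]}$ is disconnected with anticomponents $G[h_1],\dots,G[h_k]$, every two vertices of $G[h]$ lying in different $M(h_i)$ are adjacent, so a $2$-independent set of $G[h]$ cannot meet two distinct children; hence $\at{h}=\max_i\at{h_i}$. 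When $k\ge 3$, $c_0=0$ and we get $c(h)+\at{h}=\sum_i c(h_i)+\max_i\at{h_i}\le\sum_i c(h_i)+\sum_i\at{h_i}\le\sum_i n(h_i)=n(h)$. When $k=2$, write $a=\at{h_1}$, $b=\at{h_2}$ and assume without loss of generality $a\le b$; then $c_0=a$ and $\at{h}=b$, so
\[
c(h)+\at{h}=c(h_1)+c(h_2)+a+b\le \bigl(c(h_1)+a\bigr)+\bigl(c(h_2)+b\bigr)\le n(h_1)+n(h_2)=n(h),
\]
where the last inequality is exactly the two inductive hypotheses $c(h_i)+\at{h_i}\le n(h_i)$. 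Thus the key point is that the $\min\{a,b\}$ edges created at an S-node are ``paid for'' by whichever child contributed the smaller $2$-independent set, while the larger child still has enough room left (namely $n(h_2)-b\ge c(h_2)$) to absorb its own descendants' edges; this is precisely where the choice of $\min$ rather than $\max$ in line~\ref{ln:Alg5EdgeFormation} matters. By induction the inequality $c(h)+\at{h}\le n(h)$ holds for every node $h$ of $T(G)$.
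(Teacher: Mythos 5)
Your proof follows essentially the same structural induction as the paper's: the base case at the leaves, the observation that for non-S-nodes no new edges are created and $\at{h}\le\sum_{i=1}^k\at{h_i}$, and the key S-node case with $k=2$ in which the $\min\{\at{h_1},\at{h_2}\}$ newly created edges are charged to the smaller child while the larger child's slack absorbs its own subtree. One factual slip: for an S-node you assert that $\at{h}=\max_i\at{h_i}$ because a $2$-independent set cannot meet two distinct children, but this ignores that two vertices in the \emph{same} part of a join are at distance at most $2$ via any vertex of another part, so in fact $\at{h}=1$ (for instance $\at{2K_1\vee 2K_1}=\at{C_4}=1$, not $2$). The slip is harmless here: since $1\le\max_i\at{h_i}$, the inequality you establish for the larger quantity implies the one required, and the paper's proof uses $\at{h}=1$ directly, bounding $1\le\at{h_2}$ exactly where you bound $\max\{a,b\}\le b$.
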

\begin{proof}
   To prove this statement, we shall use a structural induction in $T(G)$. First, let us see that for each leaf $h$, clearly $c(h) = 0$, $\at{h} = 1$, and $n(h) = 1$. Now, let us suppose that we have a node $h$, not a leaf, and that the statement holds for every child $h_i, 1\leq i \leq k$ of $h$. If $h$ is not an S-node, then clearly $c(h) = \sum_{i=1}^k c(h_i)$. As every $G[h_i] \subseteq G[h]$, the inequality $|I\cap V(h_i)| \leq \at{h_i}$ must hold for every $2$-independent set $I$ of $G[h]$. Hence, by the induction hypothesis, $c(h) + \at{h} \leq \sum_{i=1}^k c(h_i) + \sum_{i=1}^k \at{h_i} \leq \sum_{i=1}^k n(h_i) = n(h)$. 

   If $h$ is an S-node and $k > 2$, then $\at{h} = 1$, implying $c(h) + \at{h} = (\sum_{i=1}^k c(h_i)) + 1 \leq \sum_{i=1}^k c(h_i) + \at{h} \leq \sum_{i=1}^k n(h_i) = n(h)$. Hence, suppose that $h$ is an S-node with two children and suppose, without loss of generality, that $\at{h_1} \leq \at{h_2}$ and consequently $c(h) = c(h_1) + c(h_2) + \at{h_1}$. Thus, since $\at{h} = 1$ (because $h$ is an S-node), then $c(h) + \at{h} = c(h_1) + c(h_2) + \at{h_1} + 1 \leq c(h_1) + c(h_2) + \at{h_1} + \at{h_2} \leq n(h_1) + n(h_2) = n(h)$.  
\end{proof}

\begin{theorem}\label{teo:LinearGeneralOptimalSetAlgorithm}
    Algorithm~\ref{algo:GeneralOptimalSet} works in $\BigO$$(n + m)$ time, if the subroutine for N-nodes works in $\BigO$$(n_{\pi}(h) + m_{\pi}(h))$ time, for every N-node $h$.
\end{theorem}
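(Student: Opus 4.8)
The plan is to account, node by node, for the total work done by Algorithm~\ref{algo:GeneralOptimalSet} during its single post-order traversal of $T(G)$, charging the cost at each node either to $n_\pi(h)$ (for which $\sum_h n_\pi(h) \le 2n$ over all N-nodes, and which is $O(1)$ per non-N-node since each internal non-N-node has at least two children and $T(G)$ has fewer than $2n$ nodes), or to the N-node subroutine's declared cost $O(n_\pi(h)+m_\pi(h))$, which sums to $O(n+m)$ because distinct N-nodes represent vertex-disjoint subgraphs whose $\pi(h)$-edges inject into $E(G)$. So the skeleton is: (i) building $T(G)$ is linear-time by the cited modular-decomposition algorithms; (ii) the leaf and P-node cases are trivially $O(1)$ per node plus the cost of concatenation; (iii) the S-node case needs a careful amortized analysis because of the edge-pairing in line~\ref{ln:Alg5EdgeFormation}; (iv) the N-node case is handled by hypothesis. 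Summing gives $O(n+m)$.

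The first point to pin down is that concatenation of the optimal lists must be implemented in $O(1)$ per concatenation, not $O(\text{list length})$ — e.g.\ by representing each list as a doubly linked list with head/tail pointers, so that $\sum_{i=1}^k l_i$ costs $O(k)$ rather than $O(\sum_i |l_i|)$. With that convention, at a P-node with $k$ children the work is $O(k)$, and at an S-node the work for $A_2(h)$, $D(h)$, $R_n(h)$ (choosing a minimum-length list among $O(k)$ candidates, each of whose length we have stored) is likewise $O(k)$, since comparing stored lengths and splicing in one extra vertex are $O(1)$ operations. Summing $O(k)$ over all internal non-N-nodes is $O(n)$ because the total number of children-edges in $T(G)$ is one less than the number of nodes, which is $O(n)$.

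The one genuinely non-routine step is the edge-pairing in line~\ref{ln:Alg5EdgeFormation} at a binary S-node: here the algorithm explicitly constructs $\min\{|A_2(h_1)|,|A_2(h_2)|\}$ new pairs, so this is \emph{not} $O(1)$ and must be amortized globally. This is exactly what Lemma~\ref{lem:boundOfEdgesMadeinAN} is for: letting $c(h)$ count all such pair-constructions performed in the subtree rooted at $h$, the lemma gives $c(h) + \at{h} \le n(h)$, hence in particular $c(\text{root}) \le n - \at{G} \le n$. Therefore the total number of edge-constructions over the whole run is at most $n$, and this contributes only $O(n)$ to the running time.

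Assembling these pieces: the initialization (modular decomposition tree, with the $\pi(h)$ adjacency lists for N-nodes) is $O(n+m)$ by the results cited in Section~\ref{sec:preliminaries}; the leaf, P-node and S-node processing is $O(1)$ per node for bookkeeping plus $O(k)$ for $k$-ary concatenation/comparison, totalling $O(n)$; the global cost of all edge-pairings in line~\ref{ln:Alg5EdgeFormation} is $O(n)$ by Lemma~\ref{lem:boundOfEdgesMadeinAN}; and the N-node subroutines cost $\sum_{h \text{ an N-node}} O(n_\pi(h)+m_\pi(h)) = O(n+m)$. Adding these yields the claimed $O(n+m)$ total running time. The main obstacle, and the only place where more than routine bookkeeping is needed, is the amortization of line~\ref{ln:Alg5EdgeFormation}, which is precisely why Lemma~\ref{lem:boundOfEdgesMadeinAN} was proved first; everything else is a direct charging argument against $\sum_h n_\pi(h) \le 2n$ and the $O(n)$ bound on the size of $T(G)$.
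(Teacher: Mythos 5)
Your proposal is correct and follows essentially the same route as the paper: charge $\BigO(n_\pi(h))$ work to each leaf, P-node and S-node (with list concatenation implemented destructively in constant time per splice), amortize the edge-pairing of line~\ref{ln:Alg5EdgeFormation} globally via Lemma~\ref{lem:boundOfEdgesMadeinAN}, and bound the N-node subroutine costs by $\sum_h n_\pi(h)\le 2n$ and $\sum_h m_\pi(h)\le m$. Your treatment is somewhat more explicit about the list data structure, but the argument is the same.
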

\begin{proof}
    All nodes are traversed exactly once, so let us see that for every leaf, S-node and P-node, the algorithm performs $\BigO$$(n_{\pi}(h))$ operations. If $h$ is a leaf, then it only creates four lists of size $1$. If $h$ is a P-node, then the algorithm concatenates four times $n_{\pi}(h)$ lists. Which, if we suppose is done by loosing the original lists, can be achieved in $\BigO$$(n_{\pi}(h))$ time. If $h$ is an S-node, then to obtain $D(h)$, $A_2(h)$, $A_{\mathrm n}(h)$, and $R_{\mathrm n}(h)$, clearly it performs at most $\mathcal O(n_\pi(h))$ operations plus the time of building the edges in line~\ref{ln:Alg5EdgeFormation}, if $h$ is an S-node with exactly two children. Since, by Lemma~\ref{lem:boundOfEdgesMadeinAN}, the number of edges made in all S-nodes is $\BigO$$(n)$, the sum of $n_{\pi}(h)$ for every node $h$ in $T(G)$ is at most $2n$, the sum of all $m_{\pi}(h)$ for all N-nodes is at most $m$, and finding the modular decomposition tree can be done in time $\BigO$$(n + m)$, the whole algorithm can be implemented to run in $\BigO$$(n + m)$ time.
\end{proof}

Now that we have the ``general'' algorithm, we shall show an algorithm to find in $\BigO$$(n_\pi(h))$ time the optimal sets for an N-node of the modular decomposition tree $T(G)$ of a $P_4$-tidy graph $G$.


\begin{algorithm2e}[htbp!]
\small
\DontPrintSemicolon
\SetKwInput{Input}{Input}
\SetKwInput{Output}{Output}
\SetKwInput{Initialization}{Initialization}
\SetKwBlock{StepOne}{Step 1:}{end}
\SetKwBlock{StepTwo}{Step 2:}{end}

\Input{An N-node $h$ of a modular decomposition tree of a $P_4$-tidy graph $G$}
\Output{$A_{\mathrm n}(h)$, $R_{\mathrm n}(h)$, $A_2(h)$ and $D(h)$}    
\StepOne{
    
    \If{$\pi(h)$ is isomorphic to $C_5 = v_1\dots v_5v_1$}{ $A_{\mathrm n} := \langle v_1v_2, v_4v_5\rangle$, $R_{\mathrm n}(h) := \langle v_1, v_3, v_5\rangle$, $A_2(h) := \langle v_1\rangle$, $D(h) := \langle v_1,v_2\rangle$} 

    \ElseIf{$\pi(h)$ is isomorphic to $P_5 = v_1\dots v_5$}{ $A_{\mathrm n} := \langle v_1v_2, v_4v_5\rangle$, $R_{\mathrm n}(h) := D(h) := \langle v_2, v_4\rangle$, $A_2(h) := \langle v_1, v_4\rangle$}  

    \ElseIf{$\pi(h)$ is isomorphic to $\overline{P_5}$ with $\overline{\pi(h)} = v_1\dots v_5$}{ $A_{\mathrm n} := \langle v_1v_5, v_2v_3\rangle$, $R_{\mathrm n}(h) := D(h) = \langle v_1, v_2\rangle$, $A_2(h) = \langle v_1\rangle$}

    \ElseIf{$\pi(h)$ is a starfish with partition $(S,C,R)$ where $C=\{c_1,\ldots,c_k\}$, $S=\{s_1,\ldots,s_k\}$ and $c_1s_1,\ldots,c_ks_k$ are the legs of $\pi(h)$}{
        Let $v_i\in V(G[c_i])$ and $w_i\in V(G[s_i])$ for each $i\in\{1,\ldots,k\}$\;
        $A_2(h) := \langle w_1,\ldots,w_k\rangle$, $D(h) := \langle v_1,\ldots,v_k\rangle$, $A_{\mathrm n}(h):=\langle v_1w_1,\ldots,v_kw_k\rangle$, $R_{\mathrm n}(h):=\langle v_1,v_2,\ldots,v_k\rangle$\;
        \If{$\pi(h)$ is a fat starfish with $c_i\in C$ representing $2K_1$}{Replace $v_i$ in $R_{\mathrm n}(h)$ with $w_i$}
    }  

    \ElseIf{$\pi(h)$ is an urchin with partition $(S,C,R)$ where $C=\{c_1,\ldots,c_k\}$, $S=\{s_1,\ldots,s_k\}$ and $c_1s_1,\ldots,c_ks_k$ are the legs of $\overline{\pi(h)}$}{
        $A_2(h):=\langle v_1\rangle$, $D(h):=\langle v_1,v_2\rangle$, $A_{\mathrm n}:=\langle v_1w_2\rangle$ and $R_{\mathrm n}(h):=\langle v_1,v_2\rangle$ for any $v_1\in V(G[c_1])$, $v_2\in V(G[c_2])$ and $w_2\in V(G[s_2])$.
    }  

}

\StepTwo{Output $A_{\mathrm n}(h)$, $R_{\mathrm n}(h)$, $A_2(h)$, $D(h)$}

\caption{Computes $A_{\mathrm n}(h)$, $R_{\mathrm n}(h)$, $A_2(h)$, $D(h)$, for a given N-node $h$ of a modular decomposition tree $T(G)$ of a $P_4$-tidy graph $G$ \label{algo:NnodeP4TidyOptimalSet}}

\end{algorithm2e}

\begin{theorem}\label{teo:NnodeForP4TidyCorrect}
    Algorithm~\ref{algo:NnodeP4TidyOptimalSet} correctly finds $A_{\mathrm n}(h)$, $R_{\mathrm n}(h)$, $A_2(h)$, $D(h)$, for any given N-node $h$ of the modular decomposition tree $T(G)$ of any $P_4$-tidy graph $G$.
\end{theorem}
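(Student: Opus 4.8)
The plan is a case analysis on the isomorphism type of $\pi(h)$. Since $h$ is an N-node, $G[h]$ and $\overline{G[h]}$ are both connected, and $\pi(h)$ is a prime graph isomorphic to an induced subgraph of $G$ (pick one vertex from the module of each child of $h$), hence $\pi(h)$ is $P_4$-tidy. By Theorem~\ref{theo:StructureOfP4Tidy} together with the remarks on N-nodes of $P_4$-tidy graphs, $\pi(h)$ is isomorphic to $C_5$, $P_5$, $\overline{P_5}$, a prime starfish, or a prime urchin, and in the last two cases every child of $h$ is a leaf with at most one exception $h_R$ representing the head and at most one further child representing a $K_2$ or a $2K_1$. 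Thus $G[h]$ is, respectively, $C_5$, $P_5$, $\overline{P_5}$, a (possibly fat) starfish, or a (possibly fat) urchin with at least three ends; and, being connected and co-connected, it falls under Theorem~\ref{teo:ConnectedAndCoConnectedP4TidyGraphs}, which pins down $\an{G[h]}$ and $\pn{G[h]}$.

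For each type I would first verify, by inspecting closed neighborhoods, that the four lists output are indeed a neighborhood-independent set, a neighborhood set, a $2$-independent set, and a dominating set of $G[h]$. For $C_5$, $P_5$, $\overline{P_5}$ this is immediate since the graph has at most five vertices. For a starfish with ends $S=\{s_1,\dots,s_t\}$ and body $C=\{c_1,\dots,c_t\}$: the legs $v_1w_1,\dots,v_tw_t$ are pairwise neighborhood-independent because the closed neighborhood of a vertex of the module $s_i$ meets only the modules $s_i$ and $c_i$, which are disjoint from $s_j\cup c_j$ for $j\ne i$; the body $\{v_1,\dots,v_t\}$ dominates $G[h]$ and covers every edge, since every vertex outside $s_1\cup\cdots\cup s_t$ is complete to $C$ and $s_i$ lies in $N[v_i]$; and $\{w_1,\dots,w_t\}$ is $2$-independent because $s_i$ and $s_j$ are at distance $3$ in $G[h]$. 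For an urchin with $t\ge 3$ ends the body is a clique and each end misses exactly one body vertex, so any two vertices or edges of $G[h]$ lie in the closed neighborhood of some body vertex, giving $\an{G[h]}=1$; two body vertices dominate and cover $G[h]$, and $G[h]$ has no universal vertex because it is co-connected.

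Optimality then costs almost nothing, by combining the inequality chain $\at{G'}\le\gamma(G')\le\an{G'}\le\pn{G'}$ from Remark~\ref{obs:StronglyNpAreNp} with Theorem~\ref{teo:ConnectedAndCoConnectedP4TidyGraphs}. For a starfish all four lists have length $t$ and $\at{G[h]}\le\gamma(G[h])\le\an{G[h]}=\pn{G[h]}=t$ forces equality throughout. For an urchin the $A_{\mathrm n}$- and $A_2$-lists have length $1=\an{G[h]}\ge\at{G[h]}$, while the $R_{\mathrm n}$- and $D$-lists have length $2=\pn{G[h]}\ge\gamma(G[h])\ge 2$. The three small cases are closed by direct comparison with the values listed in Theorem~\ref{teo:ConnectedAndCoConnectedP4TidyGraphs}.

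The one step that needs genuine care is the neighborhood set when $G[h]$ is a \emph{fat} starfish in which a body vertex $c_i$ is replaced by a $2K_1=\{a,b\}$: then $\{v_1,\dots,v_t\}$ (with $v_i\in\{a,b\}$) leaves the edge from the unchosen vertex of $\{a,b\}$ to $s_i$ uncovered, which is exactly why the algorithm swaps $v_i$ for $w_i$ in $R_{\mathrm n}(h)$; one checks that $s_i$ is complete to $\{a,b\}$ so that $w_i$ covers both of those edges, that every remaining edge is still covered by some $v_j$ with $j\ne i$ (using $t\ge 2$), and that the list still has length $t=\pn{G[h]}$. The remaining fat substitutions (a body vertex by a $K_2$, or an end vertex by a $K_2$ or a $2K_1$) and the presence of an arbitrary head $G[h_R]$ leave all four lists valid and optimal, since $G[h_R]$ is complete to $C$ and a swapped end module sits inside $N[v_i]$; this bookkeeping, although the longest part of the argument, is routine.
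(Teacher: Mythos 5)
Your proof follows essentially the same route as the paper's: a case analysis on the type of $\pi(h)$ supplied by Theorem~\ref{theo:StructureOfP4Tidy}, a direct check that each of the four lists is feasible, and the chain $\alpha_2\leq\gamma\leq\alpha_{\mathrm n}\leq\rho_{\mathrm n}$ (equivalently, the matching lengths of the paired lists) to conclude optimality; you are in fact more explicit than the paper about why the fat-starfish case forces the swap of $v_i$ for $w_i$ in $R_{\mathrm n}(h)$, which is a welcome addition.

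The one place where your argument (like the paper's) glosses over a real difficulty is the \emph{fat urchin} whose substituted module is a body vertex representing $2K_1$. Your sentence ``two body vertices dominate and cover $G[h]$'' is not true for an arbitrary choice: if $c_1$ represents $2K_1=\{a,b\}$ and the algorithm picks $v_1=a$ and $v_2\in V(G[c_2])$, then the edge from $b$ to a vertex $w_2$ of the end module $s_2$ (which exists, since $s_2$ is complete to the $c_1$-module in an urchin) is covered by neither $a$ (nonadjacent to $b$) nor $v_2$ (nonadjacent to $w_2$, because $s_2$ misses exactly $c_2$). Domination and $\alpha_{\mathrm n}(G[h])=1$ survive, but the neighborhood cover does not. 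The proof of Theorem~\ref{teo:ConnectedAndCoConnectedP4TidyGraphs} sidesteps this precisely by ``taking care of not taking any vertex from the substituting $K_2$ or $2K_1$''; your write-up should impose the same restriction, choosing the two cover vertices from singleton body modules, which is always possible since at most one module is substituted and $t\geq 3$. With that correction the urchin case closes and the rest of your argument stands.
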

\begin{proof}
It can be checked by simple inspection that if $\pi(h)$ is isomorphic to $C_5$, $P_5$, or $\overline{P_5}$, optimal lists are chosen (recall that if this is the case $G[h] = \pi(h)$). If $\pi(h)$ is a starfish, then clearly the lists $A_2(h)$, $D(h)$, $A_{\mathrm n}(h)$ and $R_{\mathrm n}(h)$ computed by the algorithm correspond to a $2$-independent set, a dominating set, a neighborhood-independent set and a neighborhood set of $G[h]$, respectively. Moreover, such lists are optimal lists because $A_2(h)$ has the same length as $D(h)$, and $A_{\mathrm n}(h)$ has the same length as $R_{\mathrm n}(h)$. If $\pi(h)$ is an urchin, clearly we cannot dominate all vertices with only one vertex, but if we take two vertices belonging to different graphs represented by vertices of $C$, we obtain a minimum dominating set, as well as a minimum neighborhood set. In an urchin all vertices are at most at distance two from each other, and thus all $2$-independent sets of $G[h]$ have size $1$. It is also easy to see that if $\pi(h)$ is an urchin then no two edges can be neighborhood-independent; so, the maximum neighborhood-independent set must be composed of at most one edge. Hence, if $\pi(h)$ is an urchin, then the lists $A_2(h)$, $D(h)$, $A_{\mathrm n}(h)$ and $R_{\mathrm n}(h)$ build by the algorithm are optimal lists. Therefore, as $G$ is $P_4$-tidy, we have seen that for all possible scenarios the algorithm correctly computes the optimal sets.  
\end{proof}

\begin{theorem}\label{teo:NnodeForP4TidyComplex}
    Algorithm~\ref{algo:NnodeP4TidyOptimalSet} works in $\BigO(n_{\pi}(h))$ time, for $h$ an N-node in the modular decomposition tree of any $P_4$-tidy graph $G$.
\end{theorem}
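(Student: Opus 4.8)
The plan is to walk through Algorithm~\ref{algo:NnodeP4TidyOptimalSet} branch by branch and bound the cost of each, the dominant contribution being a single preprocessing step. First I would recall, as noted in Section~\ref{sec:preliminaries} following Theorem~\ref{theo:StructureOfP4Tidy}, that for a $P_4$-tidy graph $G$ and an N-node $h$ of $T(G)$ the prime graph $\pi(h)$ is isomorphic to $C_5$, $P_5$, $\overline{P_5}$, a prime starfish, or a prime urchin, and that in $\BigO(n_{\pi}(h))$ time one can decide which of these holds and, in the starfish/urchin cases, compute the partition $(S,C,R)$ together with the legs $c_1s_1,\ldots,c_ks_k$. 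This preprocessing already costs $\BigO(n_{\pi}(h))$, so it suffices to show that, once it is available, each branch of the algorithm does only $\BigO(n_{\pi}(h))$ further work.

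The three branches in which $\pi(h)$ is isomorphic to $C_5$, $P_5$, or $\overline{P_5}$ are immediate: in each of these cases $\pi(h)$, and hence $G[h]$, has a fixed constant number of vertices, so producing the four output lists takes $\BigO(1)$ time. In the starfish branch, write $k=|C|=|S|\le n_{\pi}(h)$. Choosing the representatives $v_i\in V(G[c_i])$ and $w_i\in V(G[s_i])$ is cheap: by the structure recalled from Theorem~\ref{theo:StructureOfP4Tidy}, when $\pi(h)$ is a prime starfish every child of $h$ corresponding to a vertex of $S\cup C$ is a leaf except for at most one, since the head is a separate child $h_R$ and at most one further child represents the substituting $K_2$ or $2K_1$; thus for all but $\BigO(1)$ indices the representative is simply the unique vertex of the corresponding singleton module, and for the exceptional index one picks any vertex of the (already identified) child module. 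Each of the lists $A_2(h)$, $D(h)$, $A_{\mathrm n}(h)$, $R_{\mathrm n}(h)$ then has length at most $k\le n_{\pi}(h)$ and is assembled by a single pass over the legs, hence in $\BigO(n_{\pi}(h))$ time. Testing the fat-starfish condition—whether some body vertex $c_i$ represents a $2K_1$—amounts to inspecting the at most one nonleaf child of $h$ among the vertices of $C$, which is $\BigO(1)$, and the ensuing replacement of $v_i$ by $w_i$ in $R_{\mathrm n}(h)$ is a single list update. So this branch runs in $\BigO(n_{\pi}(h))$ time.

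Finally, in the urchin branch all four output lists $A_2(h)=\langle v_1\rangle$, $D(h)=\langle v_1,v_2\rangle$, $A_{\mathrm n}(h)=\langle v_1w_2\rangle$, $R_{\mathrm n}(h)=\langle v_1,v_2\rangle$ have constant size, and the representatives $v_1,v_2,w_2$ are obtained from the corresponding children exactly as above in $\BigO(1)$ time, so the branch costs $\BigO(1)$ beyond the partition computation. Combining the cases, the whole algorithm terminates in $\BigO(n_{\pi}(h))$ time. The only point requiring care—the closest thing to an obstacle—is justifying that selecting the module representatives and checking the fatness condition stay within the bound; this follows entirely from the structural fact, recalled from Theorem~\ref{theo:StructureOfP4Tidy}, that when $\pi(h)$ is a prime starfish or urchin all children of $h$ are leaves except for the one representing the head and at most one more representing the substituted $2K_1$ or $K_2$, so only $\BigO(1)$ of them are nontrivial modules.
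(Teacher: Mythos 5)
Your proposal is correct and follows essentially the same route as the paper's proof: bound the $\BigO(n_{\pi}(h))$ preprocessing that identifies the case and the starfish/urchin partition, observe that the $C_5$, $P_5$, $\overline{P_5}$, and urchin branches cost $\BigO(1)$, and that the starfish branch builds lists of length $|C|\le n_{\pi}(h)$ in a single pass. Your added justification that selecting module representatives and testing the fatness condition is cheap (since all but $\BigO(1)$ children are leaves) is a detail the paper leaves implicit, but it does not change the argument.
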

\begin{proof}
    As we was already seen in Section\ref{sec:preliminaries}, if $G$ is a $P_4$-tidy, we can decide in $\BigO(n_{\pi}(h))$ time whether $\pi(h)$ is isomorphic to $P_5$, $C_5$, $\overline{P_5}$, or is a starfish or urchin, and in the latter two cases obtain its decomposition. It is clear that if $\pi(h)$ is isomorphic to a $P_5$, $C_5$, $\overline{P_5}$, the algorithm performs a constant number of operations. If $\pi(h)$ is a starfish, then once it has obtained $C$ and $S$, and determined if there is a replaced vertex of $C$ (all in $\BigO$$(n_{\pi}(h))$) time, it does only constant time assignments and it generates $|C|$ edges, all of which can be done in $\BigO$$(n_{\pi}(h))$ time. Finally if $\pi(h)$ is an urchin, then once again it performs a constant number of operations. Therefore in all possible cases it runs in $\BigO(n_{\pi}(h))$ time.
\end{proof}

Now we shall present an algorithm to find the optimal sets of N-nodes in a modular decomposition tree of a tree-cograph. To this purpose we shall first give a characterization of co-trees with $\alpha_{\mathrm n}>1$. This characterization will allow us to easily identify these graph and find a neighborhood independent set of maximum size, all in linear-time. 

\begin{lemma} \label{lem:TotDomSetInTree}
    If $G$ is a graph, then $\an{G} > 1$ if and only if $G$ has two edges $xy$ and $wz$ such that $\{x,y,w,z\}$ is a total dominating set of $\overline{G}$. 
\end{lemma}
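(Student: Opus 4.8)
The plan is to translate neighborhood-independence of a pair of edges into a covering property of $\overline{G}$, and then to argue that a witness for $\an{G}>1$ may always be taken to consist of two edges. The computational heart is the identity $N_G[v]=V(G)\setminus N_{\overline{G}}(v)$, valid for every vertex $v$ (it is just a restatement of $N_{\overline{G}}(v)=V(G)\setminus N_G[v]$): a vertex $a$ lies in $G[v]$ exactly when $v\notin N_{\overline{G}}(a)$. Hence, for two edges $e_1=xy$ and $e_2=wz$ of $G$, there is a vertex $v$ with $e_1,e_2\subseteq G[v]$ if and only if some vertex misses all of $N_{\overline{G}}(x),N_{\overline{G}}(y),N_{\overline{G}}(w),N_{\overline{G}}(z)$; equivalently, $e_1$ and $e_2$ are neighborhood-independent if and only if $N_{\overline{G}}(x)\cup N_{\overline{G}}(y)\cup N_{\overline{G}}(w)\cup N_{\overline{G}}(z)=V(G)$, which is exactly the assertion that $\{x,y,w,z\}$ is a total dominating set of $\overline{G}$. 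This settles the \emph{if} direction at once: if $G$ has such edges, then $\{xy,wz\}$ is a neighborhood-independent set, so $\an{G}\ge 2>1$.

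For the \emph{only if} direction I would start from a neighborhood-independent set $\{f_1,f_2\}$ with $f_1\ne f_2$ and $f_i\in V(G)\cup E(G)$, and reduce to the case where both $f_i$ are edges. If $f_i$ is a vertex $u$, I replace it by an edge $uu'$ incident with $u$; neighborhood-independence is preserved, because any vertex $v$ with $\{u,u'\}\subseteq G[v]$ and $f_{3-i}\subseteq G[v]$ would in particular satisfy $u\in G[v]$ and $f_{3-i}\subseteq G[v]$, contradicting that $u$ and $f_{3-i}$ are neighborhood-independent. The resulting two edges $e_1=xy$ and $e_2=wz$ are necessarily vertex-disjoint, since if they shared a vertex, say $y=w$, then $\{x,y,z\}\subseteq N_G[y]$ and both $e_1,e_2\subseteq G[y]$, a contradiction. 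Feeding $e_1,e_2$ into the translation from the first paragraph yields that $\{x,y,w,z\}$ is a total dominating set of $\overline{G}$, as desired.

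The step I expect to require the most care is the promotion of a vertex element to an incident edge: this is legitimate precisely when the vertex in question is not isolated in $G$. In the situation where the lemma is actually invoked — the co-trees arising as $\pi(h)$ for an N-node $h$, which are complements of prime trees and hence complements of trees with no universal vertex — the graph $G$ has no isolated vertex, so the promotion always goes through; the remaining manipulations are routine bookkeeping with closed neighborhoods.
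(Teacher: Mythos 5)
Your proof is correct and follows essentially the same route as the paper: both arguments translate neighborhood-independence of two edges $xy$ and $wz$ into the statement that the open neighborhoods of $x,y,w,z$ in $\overline{G}$ cover $V(G)$, i.e.\ that $\{x,y,w,z\}$ totally dominates $\overline{G}$. You are in fact more careful than the paper, which simply asserts ``by definition'' that $\an{G}>1$ yields two neighborhood-independent \emph{edges}; your promotion of a vertex element to an incident edge, together with the observation that this requires $G$ to have no isolated vertices (a hypothesis the lemma as stated omits, but which holds for the connected co-trees to which it is applied), fills the one small gap in the published argument.
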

\begin{proof}
   By definition, $\an{G} > 1$, if and only if there are two neighborhood-independent edges $xy$ and $wz$ in $G$. Moreover, any two edges $xy$ and $zw$ of $G$, neighborhood-independent satisfy that every vertex is at least nonadjacent in $G$ to at least one vertex in $\{x,y,w,z\}$ different from itself, or equivalently $\{x,y,w,z\}$ is a total dominating set of $\overline{G}$.
\end{proof}

\begin{lemma} \label{lem:CotreewithAn2isPath}
    If $G$ is a co-tree with $\an{G} > 1$, then $T'$ must be a path, where $T$ is $\overline{G}$ and $T'$ is the graph $T$, with all its leafs erased. 
\end{lemma}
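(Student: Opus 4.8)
The plan is to use Lemma~\ref{lem:TotDomSetInTree} as the entry point. Since $G$ is a co-tree with $\an{G}>1$, that lemma gives two edges $xy$ and $wz$ of $G$ such that $Q=\{x,y,w,z\}$ is a total dominating set of $T=\overline{G}$; moreover, since $xy,wz\in E(G)$, the pairs $\{x,y\}$ and $\{w,z\}$ are \emph{non}-edges of $T$. The goal is to show $T'$ (the tree $T$ with all leaves deleted) is a path. First I would record the basic structural facts: $T$ is a tree, so $T'$ is again a tree (deleting leaves from a tree leaves a tree, possibly empty or a single vertex, both of which are trivially paths), and a tree is a path if and only if it has no vertex of degree $\geq 3$. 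So the whole proof reduces to: assume for contradiction that $T'$ has a vertex $v$ of degree at least $3$ in $T'$, and derive a contradiction with the existence of the total dominating set $Q$ of size~$4$.

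The key step is to exploit what a degree-$\geq 3$ vertex $v$ of $T'$ forces. Since $v$ has degree $\geq 3$ in $T'$, it lies on at least three edge-disjoint paths of $T$ leading away from $v$, each of which reaches a leaf of $T$ but whose first edge stays inside $T'$; in particular $v$ has at least three neighbors in $T$ that are themselves non-leaves of $T$ (this is exactly what ``degree $\geq 3$ in $T'$'' means, since $T'$ retains precisely the edges of $T$ between non-leaves). Call these branches $B_1,B_2,B_3$. A total dominating set must dominate every vertex of $T$, including $v$ itself, so $Q$ must contain a \emph{neighbor} of $v$; and it must totally dominate each of the three branches. The counting argument I would run is: to totally dominate a ``long enough'' branch of a tree you need a vertex of $Q$ fairly deep inside that branch, and three such branches plus the requirement of a neighbor of $v$ will exhaust or overrun the budget of $4$ vertices — but one must be careful, because the branches could be short. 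The right way to make this robust is to observe that in $T'$ every vertex has degree $\geq 2$ except possibly... actually no: the leaves of $T'$ can have degree $1$ in $T'$. The cleaner observation is that $T'$ consists of non-leaf vertices of $T$, and every non-leaf vertex of $T$ has a neighbor in $T$ outside $Q$ unless $Q$ handles it; I would instead argue directly at the level of $T$.

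Concretely, here is the argument I would aim to carry out. Every leaf $\ell$ of $T$ has a unique neighbor, its \emph{support vertex}, which necessarily lies in $T'$ (a support vertex has degree $1$ only if $T=K_2$, a trivial case). To totally dominate a leaf $\ell$, its support vertex must belong to $Q$. Hence $Q$ contains the support vertex of every leaf of $T$; if $T$ has at least five leaves whose support vertices are distinct, $|Q|\ge 5>4$, contradiction — but leaves can share a support vertex. So instead I count support vertices: $Q$ must contain every support vertex of $T$, so $T$ has at most $4$ distinct support vertices. Now if $v$ is a vertex of $T'$ of degree $\geq 3$ in $T'$, each of its three $T'$-branches, being a subtree of $T$ that is not a single vertex, contains at least one leaf of $T$ and hence at least one support vertex, and these lie in three different branches hence are distinct; that uses $3$ of the $\leq 4$ support vertices. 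Moreover $Q$ must also totally dominate $v$ itself, so $Q$ contains some neighbor of $v$; and $Q$ must totally dominate the support vertices it contains, so for each support vertex $s\in Q$, $Q$ contains a neighbor of $s$. Pushing this bookkeeping — each of the three branch-support-vertices needs its own dominator-neighbor, and $v$ needs one, and everything must fit inside a $4$-set with the two non-edge constraints $\{x,y\}\notin E(T)$, $\{w,z\}\notin E(T)$ — should force a contradiction. The main obstacle, and the part I expect to need the most care, is handling the degenerate short branches (a branch that is a single edge to a leaf, so the leaf's support vertex is $v$ itself) and making sure the non-edge structure of $Q$ (it splits as two $T$-non-adjacent pairs) is used to rule out the last few tight configurations; the generic case is easy, and the whole difficulty is a finite case check on how a total dominating $4$-set with that bipartite-ish structure can sit in a tree with a high-degree vertex in its $2$-core.
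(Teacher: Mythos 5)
Your setup is sound and follows the same route as the paper's proof (enter via Lemma~\ref{lem:TotDomSetInTree}, observe that every support vertex of $T$ is forced into any total dominating set, and play this against the budget of four), but the proof is not complete: the actual contradiction is never derived. You stop at ``pushing this bookkeeping \dots should force a contradiction'' and defer the decisive step to a ``finite case check'' on degenerate branches, which is precisely the part that needs to be written down. That is a genuine gap, and moreover it is an unnecessary one, because the pieces you have already assembled close the argument in one line. You have three support vertices $s_1,s_2,s_3$, one in each of the three components of $T-v$ meeting $T'$; they are distinct, they all lie in $Q=\{x,y,w,z\}$, and they are pairwise non-adjacent in $T$ (any two of them lie in different components of $T-v$). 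So $Q=\{s_1,s_2,s_3,q\}$ for a single extra vertex $q$, and the two $G$-edges of Lemma~\ref{lem:TotDomSetInTree} partition $Q$ into two pairs that are \emph{non}-edges of $T$; the pair containing $q$ also contains some $s_i$, say $s_3$. Then $s_3$ has no $T$-neighbor in $Q$ at all ($s_1,s_2$ are non-adjacent to it, and $qs_3\notin E(T)$), contradicting total domination. None of the worries you raise (branches that are single edges, leaves of $T$ hanging directly off $v$, whether $v$ itself is a support vertex, whether $q=v$) are needed for this.

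For comparison, the paper reaches the same configuration more directly: instead of a degree-$3$ vertex of $T'$, it takes three \emph{leaves} of $T'$, notes they form an independent set of $T$ and are each adjacent to a leaf of $T$ (hence are support vertices forced into $Q$), and then applies exactly the one-line finish above to the leaf of $T'$ that is $G$-paired with the fourth vertex. Both characterizations of ``not a path'' (three leaves vs.\ a vertex of degree $\geq 3$) work; the paper's choice yields the three independent forced vertices with slightly less verification. If you replace your closing hand-wave with the explicit non-domination argument, your proof is correct.
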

\begin{proof}
    If $G$ is a co-tree, then clearly $T$ is a tree and hence $T'$ must be also a tree. Let us suppose by contradiction that $T'$ is not a path. As paths are trees with at most two leafs, then $T'$ has by our supposition three different leafs $x$, $y$, $z$ and, as $|V(T')| > 2$, these three vertices must form an independent set of $T'$ (and thus of $T$). The fact that these three vertices are in $T'$ implies that they were not leafs in $T$, but as they are leafs of $T'$, then they must have been adjacent to leafs in $T$. Given a tree, all vertices adjacent to leafs must be in all total dominating sets, because they are the only vertices that can dominate the leafs. Hence $x$, $y$ and $z$ are in all total dominating sets of $T$. Since $\an{G} > 1$, Lemma~\ref{lem:TotDomSetInTree} implies that there must be a vertex $w$ such that $\{x,y,w,z\}$ is a total dominating set of $T$ and without loss of generality $xy$, $wz$ are edges of $G$. But this clearly implies a contradiction, because $z$ is not strongly dominated in $T$ by $\{x,y,w,z\}$. The contradiction came from the supposition that $T'$ was not a path.
\end{proof}

Before presenting the characterization, we shall state an inequality that will be used in the proof of Theorem~\ref{teo:caractCotreeAnEq2}. 

\begin{theorem}[\cite{ChellaliTotDomBound2006}]\label{theo:IneqOfTotDomNumber}
    The following inequality holds for every tree $T$: \[  \gamma_{\mathrm t}(T) \geq (n(T) + 2 - l(T))/2. \] Where $n(T)$ is $|V(T)|$, $l(T)$ is the number of leafs of $T$ and $\gamma_{\mathrm t}(T)$ is the total dominating number of $T$. 
\end{theorem}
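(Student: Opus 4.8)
The plan is to prove the equivalent inequality $2\gamma_{\mathrm t}(T)\ge n(T)-l(T)+2$, i.e.\ that the total domination number of a tree is at least one half of the number of its non-leaf vertices, plus one. I would argue by induction on $n(T)$. The base cases are immediate: for $n(T)=2$ the bound reads $4\ge 2$, and more generally whenever $T$ is a star we have $\gamma_{\mathrm t}(T)=2$ while $n(T)-l(T)+2\le 3$, so the inequality holds with room to spare.

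For the inductive step I would first dispose of an easy reduction. If some support vertex $u$ of $T$ has at least two leaf neighbours, delete one of them to obtain a tree $T_1$ with $n(T_1)=n(T)-1$ and $l(T_1)=l(T)-1$. One checks that $\gamma_{\mathrm t}(T_1)=\gamma_{\mathrm t}(T)$: every total dominating set of $T_1$ still contains $u$ and therefore totally dominates $T$ as well, and conversely from a minimum total dominating set of $T$ one can always avoid the deleted leaf, replacing it if necessary by another leaf neighbour of $u$. Applying the induction hypothesis to $T_1$ then gives the bound for $T$ verbatim.

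Hence I may assume that every support vertex of $T$ has exactly one leaf neighbour and that $T$ is not a star. Pick a longest path $v_0v_1\cdots v_k$ (so $k\ge 3$); then $v_1$ is a support vertex, whence $\deg_T(v_1)=2$ and $N_T(v_1)=\{v_0,v_2\}$. The heart of the proof is then a case analysis on the local structure of $T$ at $v_2$. The generic move is to delete the pendant path $v_0v_1$, and, when $\deg_T(v_2)=2$ so that $v_0v_1v_2$ hangs off $v_3$, possibly to delete $v_0v_1v_2$ instead; in each case one records how $n$, $l$ and $\gamma_{\mathrm t}$ change, using integrality of $\gamma_{\mathrm t}$ to round the inequality obtained from the smaller tree up to the required form. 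A handful of degenerate trees (short paths and small spiders that cannot be reduced this way) must be verified by hand.

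I expect the bookkeeping for $\gamma_{\mathrm t}$ under these pendant-path deletions to be the main obstacle. Removing a pendant path on two or three vertices should drop $\gamma_{\mathrm t}$ by at least one, but this can fail superficially when the attachment vertex $v_2$ (or the next vertex $v_3$) is forced into \emph{every} minimum total dominating set, so that a minimum total dominating set of the reduced tree is not simply obtained by deleting the two or three vertices. The way around this is a finer split according to whether $v_2$ or $v_3$ lies in some minimum total dominating set, choosing the length of the pendant path to be removed accordingly, so that the vertex at the new ``frontier'' either becomes a leaf or can be covered without extra cost; this is where the precise values of $l$ and $\gamma_{\mathrm t}$ in the smaller tree have to be tracked carefully.
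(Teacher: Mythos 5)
First, a point of reference: the paper does not prove this statement at all --- it is imported verbatim from Chellali and Haynes \cite{ChellaliTotDomBound2006} --- so there is no in-paper argument to measure yours against, and your attempt has to stand entirely on its own.

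As written it does not stand, because the part you yourself label ``the heart of the proof'' is absent. Your first reduction (deleting a second leaf from a support vertex $u$) is sound: $n-l$ is unchanged and $\gamma_{\mathrm t}$ is preserved by the swap argument you give. Everything after that, however, hinges on quantitative claims of the form $\gamma_{\mathrm t}(T)\ge\gamma_{\mathrm t}(T')+1$ (or $+2$) for the pruned tree $T'$, and these are precisely the claims you defer rather than prove. They are not routine. When $\deg_T(v_2)\ge 3$ and you delete $\{v_0,v_1\}$, one has $n'=n-2$ and $l'=l-1$, so the induction hypothesis yields only $2\gamma_{\mathrm t}(T')\ge n-l+1$, one short of the target $n-l+2$; integrality does not close this gap in general (both sides are already integers, and the evenness of $2\gamma_{\mathrm t}(T')$ helps only when $n-l+1$ is odd), so you genuinely need the strict drop $\gamma_{\mathrm t}(T)\ge\gamma_{\mathrm t}(T')+1$, which requires an argument that a minimum total dominating set of $T$ can be chosen containing $v_1$ and $v_2$ with $v_2$ retaining a set-neighbour inside $T'$ --- the case where $v_2$'s only set-neighbour is $v_1$ must be handled explicitly. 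When $\deg_T(v_2)=2$ the drop by $1$ can actually fail ($T=P_4$, $T'=P_2$ gives $\gamma_{\mathrm t}=2$ for both), and your fallback of deleting $\{v_0,v_1,v_2\}$ requires, when $\deg_T(v_3)=2$ as well, a drop of $2$, which fails for $T=P_5$ since $\gamma_{\mathrm t}(P_5)=3$ while $\gamma_{\mathrm t}(P_2)=2$. So the ``handful of degenerate trees'' is not an afterthought: determining exactly which configurations escape the generic reductions and checking them is the actual content of the proof, and none of it is carried out. The strategy is the standard leaf-pruning induction and can surely be completed, but what you have submitted is a plan with its decisive lemma unproved.
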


\begin{theorem}\label{teo:caractCotreeAnEq2}
    If $G$ is a co-tree, then $\an{G} > 1$ if and only if $T'$ is either $P_2$, $P_3$, $P_4$, or $P_5$ or $P_6$ with no central vertex of $T'$ adjacent to a leaf of $T$, where $T = \overline{G}$ and $T'$ is the graph $T$ with all its leafs erased.
\end{theorem}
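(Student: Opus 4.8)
The plan is to use the two previous lemmas to reduce the statement to a purely metric question about the tree $T'$, and then to pin down $\gamma_{\mathrm t}(T)$ using Theorem~\ref{theo:IneqOfTotDomNumber}. By Lemma~\ref{lem:CotreewithAn2isPath}, we may assume from the outset that $T'$ is a path $P_k$; the task is to determine for which $k$, and under which additional adjacency condition between $T'$ and the leaves of $T$, one has $\an{G}>1$. By Lemma~\ref{lem:TotDomSetInTree}, $\an{G}>1$ is equivalent to the existence of a $4$-element total dominating set $\{x,y,w,z\}$ of $T$ that splits into two edges $xy$, $wz$ of $G$ (equivalently, two non-edges of $T$). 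So everything comes down to: when does $T$ admit a total dominating set of size at most $4$ with the prescribed edge structure?

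First I would handle the ``only if'' direction. Suppose $\an{G}>1$, so $\gamma_{\mathrm t}(T)\le 4$. Every vertex of $T$ adjacent to a leaf of $T$ lies in $T'$ (a leaf's neighbour is not a leaf, unless $T=P_2$, a trivial case to dispose of separately) and, as observed in the proof of Lemma~\ref{lem:CotreewithAn2isPath}, must belong to every total dominating set of $T$. Combining this with Theorem~\ref{theo:IneqOfTotDomNumber}, $4\ge\gamma_{\mathrm t}(T)\ge (n(T)+2-l(T))/2$, which bounds $n(T)-l(T)=|V(T')|=k$ by $6$; so $T'\in\{P_2,P_3,P_4,P_5,P_6\}$. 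It remains to show that when $k=6$ no central vertex of $T'$ can be adjacent to a leaf of $T$: if it were, then both of $T'$'s internal degree-$\ge 2$ vertices that are centres, together with the four endpoints-of-$T'$-type forced vertices, would force $\gamma_{\mathrm t}(T)\ge 5$ (I would make this precise by counting the leaf-neighbours that are forced into every total dominating set and checking they already exceed $4$, or are exactly $4$ but cannot be arranged as two disjoint non-edges of $T$).

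For the ``if'' direction I would simply exhibit, in each of the listed cases, two neighborhood-independent edges of $G$ — equivalently, a size-$\le 4$ total dominating set $\{x,y,w,z\}$ of $T$ with $xy,wz\notin E(T)$. When $T'=P_k$ with $k\le 4$ one takes the (at most four) vertices of $T'$ themselves, which dominate all leaves and each other; for $k\le 2$ one pads with further vertices or uses a direct check. When $T'=P_5$ or $T'=P_6$ with no central vertex adjacent to a leaf of $T$, the key point is that the leaves hanging off $T'$ attach only to non-central vertices of $T'$, so a carefully chosen $4$-set near the two ends of $T'$ still totally dominates $T$; one then checks the $4$-set can be partitioned into two non-edges of $T$ (using that $T'$ is an induced path, so ``far apart'' vertices of $T'$ are non-adjacent). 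I would organize this as a short finite case analysis on the shape of $T'$ and the attachment pattern of $T$'s leaves.

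The main obstacle I anticipate is the boundary case $T'=P_6$: here $\gamma_{\mathrm t}(T)=4$ is tight, so I cannot afford any slack, and I must argue both that the central-vertex/leaf adjacency genuinely destroys the existence of a good $4$-set, and that its absence genuinely permits one, with the extra constraint that the $4$-set decompose into two non-edges of $T$ rather than being an arbitrary total dominating set. Getting Theorem~\ref{theo:IneqOfTotDomNumber} to interact correctly with the count of ``forced'' vertices (leaf-neighbours in $T$) is where the argument has to be done carefully; everything else is routine once $T'$ is known to be a path of length at most $6$.
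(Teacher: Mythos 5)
Your overall strategy coincides with the paper's: reduce via Lemma~\ref{lem:TotDomSetInTree} and Lemma~\ref{lem:CotreewithAn2isPath} to the existence of a size-$4$ total dominating set of $T$ splitting into two non-edges, bound $|V(T')|=n(T)-l(T)\le 6$ via Theorem~\ref{theo:IneqOfTotDomNumber}, and exhibit explicit witness sets for the converse ($T'$ itself padded with leaves when $T'\in\{P_2,P_3,P_4\}$; the four non-central vertices of $T'$ when $T'\in\{P_5,P_6\}$). The converse direction of your sketch is essentially the paper's argument and is fine.

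The gap is in the step you yourself flag as delicate: ruling out a central vertex of $T'$ adjacent to a leaf of $T$ when $T'=P_5$ or $P_6$ (note the condition applies to $P_5$ as well, not only to $P_6$). Your proposed mechanism --- counting the leaf-neighbours forced into every total dominating set and ``checking they already exceed $4$'' --- does not work: the forced vertices are only the two endpoints of $T'$ and the one offending central vertex, i.e.\ three vertices, so no counting contradiction arises, and your fallback (``exactly $4$ but cannot be arranged as two disjoint non-edges'') does not apply either since only three vertices are forced. The missing observation, which is how the paper closes this case, is that these three forced vertices are pairwise nonadjacent in $T$ (each pair is at distance at least $2$ along the induced path $T'$), so a total dominating set of size $4$ containing them would need its fourth vertex adjacent to all three simultaneously; since two of the three are at distance at least $4$ in the tree $T$, no such vertex exists, whence $\gamma_{\mathrm t}(T)\ge 5$, contradicting $\gamma_{\mathrm t}(T)\le 4$. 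Without this structural step the ``only if'' direction is incomplete at precisely the point that distinguishes the $P_5$/$P_6$ cases from the smaller ones.
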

\begin{proof}
 Let us first prove that if $G$ is a co-tree with $\an{G}>1$, then $T'$ is as described above. By Lemma~\ref{lem:CotreewithAn2isPath}, $T'$ must be a path. Clearly $T'$ cannot have only $1$ vertex, because $T$ would be a star and $\an{G}$ would be one. As we have already seen in Lemma~\ref{lem:TotDomSetInTree}, $\gamma_{\mathrm t}(T) \leq 4$ if $T = \overline{G}$. Thus, Theorem~\ref{theo:IneqOfTotDomNumber} implies that $6 \geq n(T) - l(T)$, but $n(T) - l(T) = n(T')$. Therefore $T'$ is $P_i$ with $2 \leq i \leq 6$. If $T' = P_5$ or $T' = P_6$, then suppose by contradiction that there is a leaf in $T$ adjacent to any central vertex of $T'$. As was already mentioned in the proof of Lemma~\ref{lem:CotreewithAn2isPath}, this means that there is a central vertex of $T'$ that must be in every total dominating set of $T$, this is also always true for both leafs of $T'$. But then there cannot be a total dominating set of $T$ of size $4$, because all three vertices are nonadjacent in $T$ and there is no vertex that is adjacent to all three at the same time. This leads to a contradiction because we have already proved that $\gamma_{\mathrm t} \leq 4$. Hence no central vertex of $T'$ can be adjacent to a leaf of $T$ if $T'$ is a $P_5$ or $P_6$.

 To prove the converse implication, if $T'$ is $P_2$, $P_3$ or $P_4$, simply take all vertices of $T'$ plus two, one, or zero leafs of $T$, respectively, adjacent to different leafs of $T'$, and we shall have a total dominating set of $T$ of size $4$. If this set is $\{x,y,w,z\}$, then clearly we can always take $xy$ and $wz$ to be non-edges of $T$ and thus edges of $G$, and by Lemma~\ref{lem:TotDomSetInTree}, $\an{G}>1$. If $T'$ is  $P_5$ or  $P_6$, we can take all vertices of $T'$, except for the central vertices of the path. As no central vertex is adjacent to leafs of $T$, then clearly these four vertices must be a total dominating set of $T$. Once again it is easy to check that we can find two non-edges of $T$ among these four vertices, and therefore $\an{G} > 1$.   
\end{proof}

\begin{corollary} \label{cor:algFindNISofCoTreeAnEq2}
    It is easy to decide in $\BigO$$(n + m)$ time whether a co-tree $G$ has $\an{G} > 1$ and if so find a neighborhood-independent set of $G$ size $2$.
\end{corollary}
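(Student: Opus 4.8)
The plan is to make Theorem~\ref{teo:caractCotreeAnEq2} algorithmic. Assume $G$ is given by adjacency lists. First I would build $T=\overline{G}$ explicitly. Since $G$ is a co-tree, $T$ is a tree on $n$ vertices, so $G$ has $\binom{n}{2}-(n-1)=\Theta(n^{2})$ edges; thus $m=\Theta(n^{2})$ and computing $\overline{G}$ costs $\BigO(n^{2})=\BigO(n+m)$ time. Deleting from $T$ all of its degree-$1$ vertices yields $T'$ in $\BigO(n)$ time. Next I would check in $\BigO(n)$ time whether $T'$ is a path $P_i$ with $2\le i\le 6$: it suffices to verify that $2\le |V(T')|\le 6$, that $T'$ is connected, and that every vertex of $T'$ has degree at most $2$ in $T'$. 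If this test reveals $T'$ to be $P_5$ or $P_6$, I would additionally test whether some central vertex of $T'$ has, in $T$, a neighbor that is a leaf of $T$ (equivalently, one of its $T$-neighbors has degree $1$ in $T$); this is again $\BigO(n)$. By Theorem~\ref{teo:caractCotreeAnEq2}, $\an{G}>1$ holds if and only if all these tests succeed, which settles the decision part.

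When the tests succeed I would output a neighborhood-independent set of size $2$ following the proof of Theorem~\ref{teo:caractCotreeAnEq2}: take the four-vertex set $\{x,y,w,z\}$ obtained by taking all vertices of $T'$ together with two, one, or zero leaves of $T$ attached to distinct endpoints of $T'$ when $T'$ is $P_2$, $P_3$, or $P_4$, respectively, and by taking all vertices of $T'$ except its one or two central vertices when $T'$ is $P_5$ or $P_6$. As established in that proof, $\{x,y,w,z\}$ is a total dominating set of $T=\overline{G}$, and one can label its elements so that $xy$ and $wz$ are non-edges of $T$, hence edges of $G$; by Lemma~\ref{lem:TotDomSetInTree} the pair $\{xy,wz\}$ is then a neighborhood-independent set of $G$ of size $2$. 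Picking these vertices and the right pairing is $\BigO(n)$, so the whole procedure runs in $\BigO(n+m)$ time.

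I do not expect a genuine obstacle here; the work is in checking that each ingredient stays within the claimed budget. The two slightly delicate points are both routine: complementing $G$ is affordable precisely because a co-tree has $\Theta(n^{2})$ edges, and in each of the finitely many shapes of $T'$ the chosen four vertices can be split into two pairs that are non-edges of $T$ --- this is immediate from the path structure of $T'$ and is already carried out inside the proof of Theorem~\ref{teo:caractCotreeAnEq2}. Degenerate inputs in which $T'$ has fewer than two vertices correspond to graphs $G$ of bounded size and can be handled by direct inspection in constant time.
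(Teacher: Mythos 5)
Your proposal is correct and follows essentially the same route as the paper's proof: complement $G$ (affordable because a co-tree has $\Theta(n^2)$ edges), strip the leaves of $T=\overline{G}$, test whether the result is a $P_i$ with $2\le i\le 6$ (with the extra leaf-adjacency check for $P_5$ and $P_6$), and extract the two neighborhood-independent edges by the recipe in the proof of Theorem~\ref{teo:caractCotreeAnEq2}. Your write-up is in fact slightly more explicit than the paper's about how the path test and the degenerate cases are handled, but there is no substantive difference.
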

\begin{proof}
    We use the characterization presented in Theorem~\ref{teo:caractCotreeAnEq2}. We can easily complement $G$ and remove the vertices with degree $1$. If the resulting tree is a path of length $2$ to $6$, then $\an{G}>1$ and, following the instructions of the proof of Theorem~\ref{teo:caractCotreeAnEq2}, we can obtain the two neighborhood-independent edges of $G$. As $G$ is a co-tree, $m \in \text{$\BigO$}(n^2)$ meaning that we can complement $G$ in time $\BigO$$(m)$. Deciding whether a tree becomes a path of bounded size by removing its leafs and, if so, also computing the corresponding path, can all be done in $\BigO$$(n)$. Finally obtaining the edges following the instructions of  Theorem~\ref{teo:caractCotreeAnEq2} can be easily done in time $\BigO$$(n)$.
\end{proof}

\begin{algorithm2e}[htbp!]
\DontPrintSemicolon
\SetKwInput{Input}{Input}
\SetKwInput{Output}{Output}
\SetKwInput{Initialization}{Initialization}
\SetKwBlock{StepOne}{Step 1:}{end}
\SetKwBlock{StepTwo}{Step 2:}{end}

\Input{An N-node $h$ of a modular decomposition tree of a tree-cograph $G$}
\Output{$A_{\mathrm n}(h)$, $R_{\mathrm n}(h)$, $A_2(h)$ and $D(h)$}    
\StepOne{
    \If{$\pi(h)$ is a tree}{
    $A_2(h):={}$a maximum $2$-independent set of $G[h]$\;
    $D(h):={}$a minimum dominating set of $G[h]$\;
    $A_{\mathrm n}(h):={}$a maximum matching of $G[h]$\;
    $R_{\mathrm n}(h):={}$a minimum vertex cover of $G[h]$\;
    }

    \ElseIf{$\pi(h)$ is a co-tree}{
        \lIf{$\overline{\pi(h)}$ has a total dominating set of size $2$ \label{ln:7totDomset}}{$A_2(h) :={}$ a total dominating set of $\overline{G[h]}$ of size $2$}
        \lElse{$A_2(h):= \langle v_1\rangle$ for any $v_1\in G[h]$}
        \lIf{$\an{\pi(h)} > 1$ \label{ln:7AnEq2}}{$A_{\mathrm n}(h) := \{e_1,e_2\}$ with $e_1, e_2$ neighborhood-independent edges of $G[h]$}
        \lElse{$A_{\mathrm n}(h) = \{e_1\}$ with $e_1$ any edge of $\pi(h)$}
        $D(h) := \langle v_l, v_n\rangle$, $R_{\mathcal n}(h) := \langle v_l, v_n\rangle$, with $v_l$ a leaf of $\overline{G[h]}$ and $v_n$ its only neighbor in $\overline{G[h]}$\;
    }
}

\StepTwo{Output $A_{\mathrm n}(h)$, $R_{\mathrm n}(h)$, $A_2(h)$, $D(h)$}

\caption{Computes $A_{\mathrm n}(h)$, $R_{\mathrm n}(h)$, $A_2(h)$, $D(h)$, for a given N-node $h$ of the modular decomposition tree $T(G)$ of a tree-cograph $G$ \label{algo:NnodeTreeCographOptimalSet}}

\end{algorithm2e}

Now that we have given this characterization, we shall prove that Algorithm~\ref{algo:NnodeTreeCographOptimalSet} finds the optimal sets for an N-node of the modular decomposition tree of a tree-cograph, all in $\BigO$$(n_{\pi}(h) + m_{\pi}(h))$ time. 

In line~\ref{ln:7totDomset}, we check if $\overline{\pi(h)}$ has a total dominating set of size $2$. Let us see why this allows us to find the $2$-independent set $\pi(h)$ that we need.

\begin{lemma}\label{lem:2indepsetGistotdomsetInCoG}
    If $G$ is a graph, then $\{v_1,v_2\}\subseteq V(G)$ is a $2$-independent set of $G$ if and only if it is a total dominating set of $\overline{G}$. 
\end{lemma}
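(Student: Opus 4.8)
The plan is to reduce both sides of the claimed equivalence to the same pair of elementary conditions on $v_1$ and $v_2$, namely that $v_1v_2\notin E(G)$ and that $v_1$ and $v_2$ have no common neighbor in $G$.

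First I would unpack the left-hand side. By the definition of a $2$-independent set, $\{v_1,v_2\}$ is $2$-independent in $G$ exactly when there is no path of length at most $2$ joining $v_1$ and $v_2$ in $G$; since $v_1\neq v_2$, this holds if and only if $v_1v_2\notin E(G)$ (no path of length $1$) and $N_G(v_1)\cap N_G(v_2)=\emptyset$ (no path of length $2$). Here I would note that, because $G$ has no loops, $v_1,v_2\notin N_G(v_1)\cap N_G(v_2)$ automatically, so a path of length $2$ between $v_1$ and $v_2$ is the same thing as a common neighbor of $v_1$ and $v_2$.

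Next I would unpack the right-hand side. By definition, $\{v_1,v_2\}$ is a total dominating set of $\overline{G}$ iff every vertex $w$ of $\overline{G}$ has a neighbor in $\{v_1,v_2\}$ within $\overline{G}$. I would split this requirement into the three cases $w=v_1$, $w=v_2$, and $w\notin\{v_1,v_2\}$. Applied to $w=v_1$ (and symmetrically to $w=v_2$), since $v_1$ cannot be its own neighbor the condition forces $v_1$ to be $\overline{G}$-adjacent to $v_2$, that is, $v_1v_2\notin E(G)$. Applied to a vertex $w\notin\{v_1,v_2\}$, the condition says $w$ is $\overline{G}$-adjacent to $v_1$ or to $v_2$, i.e.\ $w\notin N_G(v_1)$ or $w\notin N_G(v_2)$; quantifying over all such $w$, this is equivalent to $N_G(v_1)\cap N_G(v_2)\subseteq\{v_1,v_2\}$, which (again because $G$ is loopless, so $N_G(v_1)\cap N_G(v_2)$ is already disjoint from $\{v_1,v_2\}$) is the same as $N_G(v_1)\cap N_G(v_2)=\emptyset$.

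Comparing the two analyses, both membership claims are equivalent to the conjunction ``$v_1v_2\notin E(G)$ and $N_G(v_1)\cap N_G(v_2)=\emptyset$,'' which proves the lemma. There is no genuine obstacle here; the only points requiring a little care are to remember that the total-domination condition must also be checked at $w=v_1$ and $w=v_2$ (this is precisely where ``total'' domination, rather than ordinary domination, is used, and where the non-edge $v_1v_2$ comes from) and to invoke looplessness so that intersections of open neighborhoods never accidentally contain $v_1$ or $v_2$.
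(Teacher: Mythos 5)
Your proof is correct and follows essentially the same route as the paper's: both reduce the two conditions to the single statement that the closed neighborhoods $N_G[v_1]$ and $N_G[v_2]$ are disjoint (equivalently, $v_1v_2\notin E(G)$ and $N_G(v_1)\cap N_G(v_2)=\emptyset$) and observe that this is exactly total domination of $\overline{G}$ by $\{v_1,v_2\}$. Your version merely makes explicit the case split at $w\in\{v_1,v_2\}$, which the paper leaves implicit.
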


\begin{proof}
     The set $S = \{v_1,v_2\}$ is a $2$-independent set of $G$ if and only if $N_G[v_1] \cap N_G[v_2] = \emptyset$. But this means that in $\overline{G}$ no vertex can be nonadjacent to both $v_1$ and $v_2$, which is to say that all vertices of $\overline{G}$ must be adjacent to $v_1$ or $v_2$. Therefore $S$ is a $2$-independent set of $G$ if and only if $S$ is a total dominating set of $\overline{G}$ of size $2$. 
\end{proof}

\begin{theorem}\label{teo:NnodeForTreeCographCorrect}
    Algorithm~\ref{algo:NnodeTreeCographOptimalSet} correctly finds $A_{\mathrm n}(h)$, $R_{\mathrm n}(h)$, $A_2(h)$ and $D(h)$, for any given N-node $h$ of the modular decomposition tree of a tree-cograph $G$.
\end{theorem}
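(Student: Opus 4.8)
The plan is to dispatch on the shape of $G[h]$. Since $h$ is an N-node, $G[h]$ is connected and co-connected, and being an induced subgraph of a tree-cograph it is itself a tree-cograph; hence, by the structure of tree-cographs, $G[h]$ is a tree (exactly when $\pi(h)$ is a tree) or a connected co-tree (exactly when $\pi(h)$ is a co-tree). I would verify the four lists in each case, recalling that in the first case $G[h]$ is obtained from $\pi(h)$ by blowing up some vertices into $tK_1$'s and in the second by blowing up some vertices into cliques.

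\textbf{Tree case.} Here $A_2(h)$ and $D(h)$ are, by the very way they are produced, a maximum $2$-independent set and a minimum dominating set of $G[h]$, so only $A_{\mathrm n}(h)$ and $R_{\mathrm n}(h)$ need an argument. For $A_{\mathrm n}(h)$ I would first observe that in any triangle-free graph every matching is neighborhood-independent: if two distinct matching edges $e_1,e_2$ both lay inside some $N[v]$, then $v$ together with the two endpoints of an edge among $e_1,e_2$ not containing $v$ (such an edge exists because $e_1\cap e_2=\emptyset$) would span a triangle. Since a tree is bipartite, $\an{G[h]}=\nu(G[h])$ (as recalled in the proof of Theorem~\ref{teo:ConCoConTreeCograph}), so a maximum matching of $G[h]$ is a maximum neighborhood-independent set. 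Dually, every vertex cover $W$ of $G[h]$ is a neighborhood set: each edge has an endpoint in $W$ and so lies in $G[w]$ for that $w$, and each vertex is in $W$ or (since $G[h]$ is connected with at least two vertices, hence has no isolated vertex) has a neighbor in $W$; as $\pn{G[h]}=\tau(G[h])$ for bipartite graphs, a minimum vertex cover is a minimum neighborhood set.

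\textbf{Co-tree case.} Write $T=\overline{G[h]}$, a tree on at least two vertices. By Theorem~\ref{teo:ConCoConTreeCograph}, $\pn{G[h]}=2$, so $\at{G[h]}\le\an{G[h]}\le\pn{G[h]}=2$ and $\gamma(G[h])\le 2$, while $\gamma(G[h])\ge 2$ because $G[h]$ has no universal vertex ($T$, being connected on at least two vertices, has no isolated vertex). For $A_2(h)$: by Lemma~\ref{lem:2indepsetGistotdomsetInCoG} a pair of vertices is a $2$-independent set of $G[h]$ iff it is a total dominating set of $T$, so the algorithm returns a set of size $\at{G[h]}$ — a total dominating set of size $2$ when one exists (equivalently $\at{G[h]}=2$) and a single vertex otherwise; since $G[h]$ is obtained from $\pi(h)$ by clique blow-ups, $\overline{G[h]}$ admits a total dominating set of size $2$ precisely when $\overline{\pi(h)}$ does, so the test on $\overline{\pi(h)}$ is the right one. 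For $A_{\mathrm n}(h)$: the test $\an{\pi(h)}>1$, decidable by Corollary~\ref{cor:algFindNISofCoTreeAnEq2}, is equivalent to $\an{G[h]}>1$ by the same blow-up remark together with Theorem~\ref{teo:caractCotreeAnEq2}, so the algorithm returns two neighborhood-independent edges when $\an{G[h]}=2$ and one edge when $\an{G[h]}=1$, in either case a maximum neighborhood-independent set. Finally, for $D(h)=R_{\mathrm n}(h)=\langle v_l,v_n\rangle$ with $v_l$ a leaf of $T$ and $v_n$ its unique $T$-neighbor: every vertex other than $v_l,v_n$ is adjacent to $v_l$ in $G[h]$ (it is non-adjacent to $v_l$ in $T$, whose only edge at $v_l$ is $v_lv_n$), so $\{v_l,v_n\}$ dominates $G[h]$; and $\{v_l,v_n\}$ is a neighborhood set because every edge of $G[h]$ missing $v_n$ has both ends in $N_{G[h]}[v_l]$, every edge at $v_n$ has both ends in $N_{G[h]}[v_n]$, and every vertex lies in $N_{G[h]}[v_l]\cup N_{G[h]}[v_n]$. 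Since $\gamma(G[h])=\pn{G[h]}=2$, both lists are optimal.

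The delicate points — where I expect to spend most of the effort — are the two ``correct size implies correct object'' steps in the tree case (that a maximum matching is genuinely a \emph{neighborhood}-independent set and a minimum vertex cover genuinely a neighborhood set), the direct check that the single pair $\{v_l,v_n\}$ serves simultaneously as a minimum dominating set and a minimum neighborhood set of $G[h]$, and the reconciliation of the tests the algorithm runs on $\pi(h)$ and $\overline{\pi(h)}$ with the corresponding quantities for $G[h]$ when $G[h]\neq\pi(h)$. Everything else is bookkeeping on top of Theorem~\ref{teo:ConCoConTreeCograph}, Lemma~\ref{lem:2indepsetGistotdomsetInCoG}, Corollary~\ref{cor:algFindNISofCoTreeAnEq2}, Theorem~\ref{teo:caractCotreeAnEq2}, and the bipartite identities $\an{\cdot}=\nu(\cdot)$, $\pn{\cdot}=\tau(\cdot)$.
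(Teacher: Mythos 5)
Your proof is correct and follows the same overall strategy as the paper's: dispatch on whether $\pi(h)$ is a tree or a co-tree, use the bipartite identities $\an{\cdot}=\nu(\cdot)$ and $\pn{\cdot}=\tau(\cdot)$ in the first case, and in the second case combine $\pn{G[h]}=\gamma(G[h])=2$ with Lemma~\ref{lem:2indepsetGistotdomsetInCoG}, Corollary~\ref{cor:algFindNISofCoTreeAnEq2} and the leaf-plus-neighbor construction. The one genuine divergence is in your favor. The paper's proof rests on the assertion that ``in both cases $\pi(h)$ is isomorphic to $G[h]$,'' which in turn traces back to the earlier claim that every tree on more than two vertices is prime; that claim is false (two leaves pendant at a common vertex form a nontrivial module, so e.g.\ a $P_4$ with an extra leaf at an internal vertex is a connected, co-connected, non-prime tree). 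You instead treat $G[h]$ as a blow-up of $\pi(h)$ (independent sets of twin leaves in the tree case, cliques in the co-tree case) and explicitly reconcile the tests the algorithm runs on $\pi(h)$ and $\overline{\pi(h)}$ with the corresponding properties of $G[h]$ --- the existence of a size-$2$ total dominating set and the condition $\alpha_{\mathrm n}>1$. This reconciliation is exactly what is needed on the co-tree branch, where the algorithm tests $\pi(h)$ but extracts its output sets from $G[h]$, so your argument closes a small gap that the paper's proof leaves open; the remaining verifications (matchings are neighborhood-independent in triangle-free graphs, vertex covers are neighborhood sets in graphs without isolated vertices, the pair $\{v_l,v_n\}$ is simultaneously an optimal dominating and neighborhood set) agree with, and slightly elaborate on, what the paper does.
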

\begin{proof}
If $G$ is a tree-cograph, then an N-node $h$ of its modular decomposition is a tree with connected complement or a connected co-tree. In both cases $\pi(h)$ is isomorphic to $G[h]$, thus we can find the optimal sets analyzing $\pi(h)$. If $\pi(h)$ is a tree, then as was already seen in \cite{lehel_neighborhood_1986} a maximum matching of $G[h]$ is also a maximum neighborhood independent edge set and a minimum vertex cover is a minimum neighborhood cover set. Hence if $\pi(h)$ is a tree, then clearly the algorithm computes the correct values for the optimal sets. On the other hand if $\pi(h)$ is a co-tree, then as was seen in Lemma~\ref{lem:2indepsetGistotdomsetInCoG}, if we find a total dominating set of size $2$ in $\overline{G[h]}$, we will have a $2$-independent set of size $2$ of $G[h]$. Clearly a co-tree cannot have an independent set of size three, thus $\an{\pi(h)} \leq 2$. Clearly if there are no $2$-independent sets of size $2$, then any node is a maximum $2$-independent set. It was already stated in Corollary~\ref{cor:algFindNISofCoTreeAnEq2} that there is a linear-time algorithm to determine if $\an{\pi(h)} > 1$ and if this is the case to find a neighborhood independent set of size $2$. Thus in line~\ref{ln:7AnEq2}, we correctly obtain $A_{\mathrm n}(h)$. Note that $\an{G[h]} \leq 2$, because if we take a leaf of $\overline{G[h]}$ and its only neighbor in $\overline{G[h]}$, we clearly have a neighborhood set as well as a dominating set of $G(h)$. Moreover if there were a dominating set or neighborhood set of size $1$, then that would mean an isolated vertex in $\overline{G[h]}$, which would contradict the fact that it is a tree.  
\end{proof}

\begin{theorem}\label{teo:NnodeForTreeCographComplex}
    Algorithm~\ref{algo:NnodeTreeCographOptimalSet} can be implemented to run in $\BigO$$(n_{\pi}(h) + m_{\pi}(h))$ time.
\end{theorem}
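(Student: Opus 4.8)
The plan is to follow the case split of Algorithm~\ref{algo:NnodeTreeCographOptimalSet}, namely $\pi(h)$ being a tree or a co-tree. First I would note that deciding which of these two cases holds costs only $\BigO(n_\pi(h)+m_\pi(h))$ time, e.g.\ by a single traversal of the adjacency list of $\pi(h)$ (or simply by counting its edges and testing acyclicity); by the structure of tree-cographs recalled in Section~\ref{sec:preliminaries}, these are the only two possibilities for an N-node, and in both of them $G[h]=\pi(h)$, so the optimal sets may be read off directly from $\pi(h)$.

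In the case that $\pi(h)$ is a tree we have $m_\pi(h)=n_\pi(h)-1$, so it suffices to exhibit, for each of the four lists, a procedure running in $\BigO(n_\pi(h))$ time. This is standard: a maximum matching and a minimum vertex cover of a tree can be computed in linear time (and, since trees are bipartite, have equal size by K\"onig--Egerv\'ary), while a minimum dominating set and a maximum $2$-independent set of a tree are obtained in linear time by the usual bottom-up dynamic programming on the tree (alternatively, by the linear-time algorithms already cited in Section~\ref{sec:preliminaries}). Hence this branch runs in $\BigO(n_\pi(h))\subseteq\BigO(n_\pi(h)+m_\pi(h))$ time.

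In the case that $\pi(h)$ is a co-tree, write $T=\overline{\pi(h)}$; this is a tree on $n_\pi(h)\ge 4$ vertices, so $\pi(h)$ has $\binom{n_\pi(h)}{2}-(n_\pi(h)-1)=\Theta(n_\pi(h)^2)$ edges, i.e.\ $m_\pi(h)=\Theta(n_\pi(h)^2)$. This is the one point that needs care, and it is also what makes the bound work: precisely because $\pi(h)$ is this dense, we can afford to complement it and obtain $T$ in $\BigO(m_\pi(h))$ time, after which all remaining work is performed on the linear-size tree $T$. Given $T$: a leaf $v_l$ and its unique neighbor $v_n$, which yield $D(h)$ and $R_{\mathrm n}(h)$, are found in $\BigO(n_\pi(h))$ time; whether $T$ has a total dominating set of size $2$ and, if so, such a set, which yields $A_2(h)$ through Lemma~\ref{lem:2indepsetGistotdomsetInCoG}, is decided by a linear scan of $T$ (indeed $T$ has such a set exactly when it is a double star); and whether $\an{\pi(h)}>1$, together with a witnessing pair of neighborhood-independent edges, which yields $A_{\mathrm n}(h)$, is computed in $\BigO(n_\pi(h)+m_\pi(h))$ time by Corollary~\ref{cor:algFindNISofCoTreeAnEq2}. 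Summing these contributions, this branch also runs in $\BigO(n_\pi(h)+m_\pi(h))$ time.

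Since both branches, as well as the preliminary case distinction, respect the claimed bound, the whole procedure runs in $\BigO(n_\pi(h)+m_\pi(h))$ time. The argument is essentially routine bookkeeping; the only mildly delicate observation, as noted above, is that the quadratic density of a co-tree is exactly what lets us afford the complementation and thereby reduce everything to linear-time tree algorithms within the $\BigO(n_\pi(h)+m_\pi(h))$ budget.
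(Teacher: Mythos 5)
Your proposal is correct and follows essentially the same route as the paper: split on whether $\pi(h)$ is a tree or a co-tree, use standard linear-time tree algorithms in the first case, and in the second case exploit $m_\pi(h)=\Theta(n_\pi(h)^2)$ to afford complementation before running linear-time procedures (including Corollary~\ref{cor:algFindNISofCoTreeAnEq2}) on $\overline{\pi(h)}$. The only cosmetic difference is that you decide the size-$2$ total dominating set question directly via the double-star observation where the paper cites general total-domination algorithms, and the paper spells out the bottom-up dynamic program for the maximum $2$-independent set of a tree that you invoke as standard.
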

\begin{proof}
    It is clear that in linear time it can be determined if $\pi(h)$ is a tree. Moreover, if $\pi(h)$ is not a tree, then it must be a co-tree because all N-nodes of a tree-cograph are trees or co-trees. If $\pi(h)$ is a tree, algorithms for finding minimum vertex cover sets, minimum dominating sets and maximum matchings in linear-time can be found in \cite{LinearAlgTreesPreeceding_Mitchell1975,LinearAlgTreesRecursiveMithcell1979,DFS_for_vertex_cover_Savage_1982}. Obtaining a $2$-independent maximum set of a tree can also be done efficiently with an algorithm very similar to the one mentioned in \cite{LinearAlgTreesRecursiveMithcell1979} for independent sets. We explicitly state here, for the sake of completion, this linear-time algorithm for finding a $2$-independent maximum set of a tree $T$:  

    Given a tree $T$, we regard it as a directed tree with an arbitrary root vertex $r$ and traverse its vertices in post-order. For every vertex $i$, we determine $\text{Use}(i)$ and $\text{NUse}(i)$, where $\text{Use}(i)$ is a maximum $2$-independent set using vertex $i$ and $\text{NUse}(i)$ is defined analogously but without using $i$. Clearly, if $i$ is not a leaf, then $\text{Use}(i) = i \cup \bigcup(\text{ NUse($j$) }: j\text{ is a child of $i$ })$ and $\text{NUse($i$) } = \bigcup( \max\{ \text{ Use($j$) }, \text{ NUse($j$) }: j \text{ is child of $i$ })$\}, where $\max\{A,B\}$ denotes a set with maximum number of vertices among $A$ and $B$. If $i$ is a leaf, then clearly $\text{Use($i$)}=\{i\}$ and $\text{NUse($i$)}=\emptyset$. Hence, $\text{ Use($i$) }$ and $\text{ NUse($i$) }$ for all vertices $i$ can be determined in overall linear-time. Finally, $\max\{\text{Use($i$), NUse($i$)} \}$, which is a maximum $2$-independent set of $T$, can be found in linear-time.

    Hence, using the algorithms mentioned above, which clearly run in $\BigO$$(n_{\pi}(h))$ time, we can obtain corresponding to a node $h$ whenever $\pi(h)$ is a tree. If $\pi(h)$ is a co-tree, then, as was already mentioned, we can complement it in $\BigO$$(m_{\pi}(h))$ time, then using any of the algorithms mentioned in \cite{LaskarTotalDom1984,ChellaliTotDomBound2006,HenningTotalDomMonograph2013}, we can obtain a maximum total dominating set of $\overline{\pi(h)}$, and if it is of size $2$, we can obtain the corresponding total dominating set of $\overline{G[h]}$ and assign it to $A_2(h)$ (bearing in mind that $\pi(h)$ and $G[h]$ are isomorphic). Using the algorithm mentioned in Corollary~\ref{cor:algFindNISofCoTreeAnEq2}, we can find in time $\BigO$$(n_{\pi}(h) + m_{\pi}(h))$ a maximum neighborhood independent set of $G[h]$. Finally, having already complemented $\pi(h)$, finding a leaf of $\overline{G[h]}$ and its neighbor can be done easily in linear-time. Therefore if $\pi(h)$ is a co-tree, the algorithm can also be implemented to run in $\BigO$$(n_{\pi}(h) + m_{\pi}(h))$ time.
\end{proof}

\subsection{Complexity Results}

Theorems~\ref{teo:LinearGeneralOptimalSetAlgorithm} and \ref{teo:NnodeForTreeCographComplex} imply that the problem of finding $\an{G}$ and $\pn{G}$ can be solved in linear-time if $G$ is the complement of a tree. Nevertheless, as was already stated, the problems of determining these two parameters for general graphs have been proven to be \NP-hard \cite{chang_algorithmic_1993}. We prove here that even if $G$ belong to the class of complement of bipartite graphs, that includes the class of complements of trees, these problems are \NP-hard.

\begin{theorem}\label{teo:npHardAnanPNforCoBip}
    It is \NP-hard to determine $\an{G}$ and $\pn{G}$  for any graph $G$ complement of a bipartite graph.
\end{theorem}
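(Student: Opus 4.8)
The plan is to give two polynomial-time many-one reductions, one for the problem of computing $\pn{\cdot}$ and one for that of computing $\an{\cdot}$, both producing complements of bipartite graphs and both exploiting the same mechanism: a hard instance is hidden in the bipartite ``link'' joining the two cliques of a co-bipartite graph. First I would record the relevant structure: if $B$ is bipartite with parts $X$ and $Y$ and $G=\overline B$, then $X$ and $Y$ are cliques of $G$, the edges of $G$ joining $X$ and $Y$ form a bipartite graph $L$, and $N_G[v]\supseteq X$ for every $v\in X$ and $N_G[v]\supseteq Y$ for every $v\in Y$. For the graphs built below I would then establish two facts. (i) A set $C\subseteq V(G)$ is a neighborhood-covering set of $G$ if and only if $C\cap X\neq\emptyset$, $C\cap Y\neq\emptyset$, and every edge $ab$ of $L$ (with $a\in X$, $b\in Y$) satisfies $b\in N_L(C\cap X)$ or $a\in N_L(C\cap Y)$. (ii) Every neighborhood-independent set of $G$ of size at least $2$ is either a matching $\{a_1b_1,\dots,a_kb_k\}$ of $L$ in which the $a_i$ have pairwise disjoint $L$-neighborhoods and the $b_i$ have pairwise disjoint $L$-neighborhoods, or a pair consisting of one edge of $G[X]$ and one edge of $G[Y]$; and conversely, every matching of the first kind is neighborhood-independent. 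The delicate part of (ii) is the case analysis showing that, once a neighborhood-independent set contains a link edge, it can contain no clique vertex and no clique edge.

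For $\pn{\cdot}$ I would reduce from \textsc{Dominating Set}. Given $H$ with $V(H)=\{v_1,\dots,v_n\}$, let $L$ consist of the ``closed-neighborhood incidence'' bipartite graph---vertices $a_1,\dots,a_n$ on one side and $b_1,\dots,b_n$ on the other, with $a_ib_j$ an edge exactly when $v_i\in N_H[v_j]$---together with two further, vertex-disjoint edges $a_0b_0$ and $a_0'b_0'$, and let $G$ be the corresponding co-bipartite graph, so that $X=\{a_0,a_0',a_1,\dots,a_n\}$ and $Y=\{b_0,b_0',b_1,\dots,b_n\}$. By (i), a minimum covering set decomposes over the components of $L$: each of the two extra edges forces one of its two endpoints into $C$ (and such a vertex helps with nothing else), while over the incidence component the diagonal edges $a_ib_i$ force the chosen indices to dominate $H$, and, conversely, $\{a_i:v_i\in D\}\cup\{a_0,b_0'\}$ is a covering set for any minimum dominating set $D$ of $H$. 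This yields $\pn{G}=\gamma(H)+2$ exactly, hence $\gamma(H)\le k$ if and only if $\pn{G}\le k+2$, and $\mathcal{NP}$-hardness of $\pn{\cdot}$ on complements of bipartite graphs follows.

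For $\an{\cdot}$ I would reduce from the problem of computing a maximum $2$-independent set (equivalently, a maximum $2$-packing), which is $\mathcal{NP}$-hard. Take $L$ to be just the closed-neighborhood incidence graph of $H$ (assuming, harmlessly, that $H$ has no isolated vertex), and let $G=\overline B$ be the corresponding co-bipartite graph. Two link vertices $a_i$ and $a_j$ have disjoint $L$-neighborhoods precisely when no vertex of $H$ lies in $N_H[v_i]\cap N_H[v_j]$, i.e.\ exactly when there is no path of length at most $2$ between $v_i$ and $v_j$; since each $a_ib_i$ is an edge of $L$, fact (ii) shows that the largest ``diagonal'' neighborhood-independent matchings of $G$ correspond precisely to the maximum $2$-independent sets of $H$, and that, as soon as $\at{H}\ge 2$, no other neighborhood-independent set of $G$ is larger. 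Hence $\an{G}=\at{H}$ whenever $\at{H}\ge 2$; since the alternative $\at{H}\le 1$ is recognizable in polynomial time, computing $\an{\cdot}$ on complements of bipartite graphs is $\mathcal{NP}$-hard as well.

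The step I expect to be the main obstacle is making these correspondences \emph{exact} rather than merely approximate. Since every vertex of $X$ sees all of $X$ (and symmetrically for $Y$), a vertex placed into a covering set only to take care of a clique may secretly also help cover link edges, so with a naive construction the ``clique overhead'' fluctuates between $+0$ and $+1$ and the reduction breaks; the two isolated link edges $a_0b_0$ and $a_0'b_0'$ are exactly the gadget that pins this overhead down, because their covering vertices are useless for the rest of $L$ yet still lie one in $X$ and one in $Y$. The analogous point for $\an{\cdot}$ is the verification, inside fact (ii), that the neighborhood-independent sets of size $\ge 2$ which are not diagonal matchings all have size exactly $2$ (one edge of $G[X]$ plus one edge of $G[Y]$, subject to a distance condition on $H$), which is harmless once $\at{H}\ge 2$. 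Proving facts (i) and (ii) rigorously---in particular that delicate case analysis---is where most of the effort lies.
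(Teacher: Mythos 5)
Your proposal is correct, but it takes a genuinely different route from the paper on both halves. For $\rho_{\mathrm n}$, the paper reduces from Minimum Vertex Cover via the vertex--edge incidence co-bipartite graph $(V,E,F)$; there the ``clique overhead'' cannot be pinned down, so the paper only obtains $\tau(H)\le\pn{G}\le\tau(H)+1$ and must invoke the Dinur--Safra inapproximability theorem to turn this approximate relation into hardness. Your closed-neighborhood incidence construction with the two pendant link edges $a_0b_0$ and $a_0'b_0'$ yields the exact identity $\pn{G}=\gamma(H)+2$ (the two isolated link edges each force one private cover vertex, one on each side of the link, and the diagonal edges $a_ib_i$ force the selected indices to dominate $H$), so you get a clean many-one reduction from Dominating Set and avoid the PCP-based machinery entirely --- a genuine simplification. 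For $\alpha_{\mathrm n}$, the paper reduces from Maximum Independent Set using $X=\{v'\colon v\in V\}$ and $Y=V\cup E$, obtaining $\an{G}=\alpha(H)$ exactly; you instead reduce from Maximum $2$-Independent Set and obtain $\an{G}=\at{H}$ whenever $\at{H}\ge 2$. Your structural facts (i) and (ii) do hold for your constructions: since every $b_i$ has $a_i$ as a link-neighbor, no vertex of $G$ can belong to a neighborhood-independent set of size at least $2$, and a set containing a link edge can contain only link edges whose $X$-ends, respectively $Y$-ends, have pairwise disjoint link-neighborhoods, which translates exactly into pairwise distance at least $3$ in $H$. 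The one thing you must still supply is a proof or reference for the $\mathcal{NP}$-hardness of computing $\alpha_2$; this is a known result, but your argument is not self-contained without it, whereas the paper leans only on the hardness of Maximum Independent Set and Minimum Vertex Cover.
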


We shall denote complement of bipartite graphs as co-bipartite graphs. The proofs of Theorems~\ref{teo:NPhardPNforCobip} and \ref{teo:npHardAnforCoBip} together constitute a proof of Theorem \ref{teo:npHardAnanPNforCoBip}.

If $X$ and $Y$ are disjoint sets and $F\subseteq X \times Y$, we shall denote by $(X,Y,F)$ the co-bipartite graph with vertex set $X\cup Y$ where $X$ and $Y$ are cliques and the edges between $X$ and $Y$ are those in $F$.

\begin{theorem}\label{teo:npHardAnforCoBip}
    It is \NP-hard to determine the neighborhood independence number in co-bipartite graphs.
\end{theorem}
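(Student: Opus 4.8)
The plan is to exhibit a polynomial-time reduction from the (\NP-hard) maximum independent set problem. The first step is to reformulate neighborhood-independence in a co-bipartite graph $G=(X,Y,F)$ in terms of the bipartite graph $F$. The basic computation is that an edge $xy\in F$ (with $x\in X$, $y\in Y$) lies in $G[w]$ if and only if $w\in N_F(x)\cup N_F(y)$; since $N_F(x)\subseteq Y$ while $N_F(y)\subseteq X$, the cross terms disappear, and two edges $xy$ and $x'y'$ of $F$ are neighborhood-independent in $G$ if and only if $N_F(x)\cap N_F(x')=\emptyset$ and $N_F(y)\cap N_F(y')=\emptyset$. I would also record the following ``degeneracy'' remarks, valid whenever $F$ has no isolated vertex: every vertex of $G$ lies together with every edge of $G$ in a common $G[w]$ --- so a neighborhood-independent set of size at least $2$ consists only of edges --- while any two edges inside $X$ (or inside $Y$) lie together in some $G[w]$, and an edge inside $X$ lies together with every edge of $F$ in some $G[w]$ (symmetrically for $Y$). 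Hence a neighborhood-independent set that is not contained in $F$ has at most one edge inside $X$ and at most one inside $Y$, so has size at most $2$.

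For the reduction, given a graph $H$ I would construct $G(H)=(X,Y,F)$ with $X=\{x_v:v\in V(H)\}$, $Y=\{y_v:v\in V(H)\}\cup\{z_e:e\in E(H)\}$, and $F$ consisting of the diagonal edges $x_vy_v$ for $v\in V(H)$ together with the edges $x_vz_e$ and $x_wz_e$ for every $e=vw\in E(H)$; this is a co-bipartite graph on $\BigO(|V(H)|^2)$ vertices. Here $N_F(x_v)=\{y_v\}\cup\{z_e:v\in e\}$ and $N_F(y_v)=\{x_v\}$, so by the characterization above the diagonal edges $x_vy_v$ and $x_wy_w$ are neighborhood-independent in $G(H)$ precisely when $x_v$ and $x_w$ have no common $F$-neighbor, that is, precisely when $vw\notin E(H)$ (the only candidate common neighbor, $z_{vw}$, belonging to $F$ iff $vw\in E(H)$). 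Thus $\{x_vy_v:v\in I\}$ is neighborhood-independent whenever $I$ is an independent set of $H$, which gives $\an{G(H)}\ge\alpha(H)$.

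The matching upper bound is the step I expect to be the main obstacle, since it requires showing that neither the auxiliary vertices $z_e$ nor the clique edges help. By the degeneracy remarks, a neighborhood-independent set not contained in $F$ has size at most $2$, so it is enough to bound edge subsets $S\subseteq F$. Neighborhood-independence makes the $X$-endpoints of the edges of $S$ pairwise distinct (each such endpoint $x$ satisfies $N_F(x)\ne\emptyset$), say $\{x_v:v\in V_S\}$, whence $|S|=|V_S|$; and for distinct $v,v'\in V_S$ the requirement $N_F(x_v)\cap N_F(x_{v'})=\emptyset$ forces $z_{vv'}\notin F$, i.e.\ $vv'\notin E(H)$. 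So $V_S$ is an independent set of $H$, giving $|S|\le\alpha(H)$, and therefore $\an{G(H)}=\alpha(H)$ as soon as $\alpha(H)\ge2$. Since deciding whether an arbitrary graph has an independent set of a prescribed size is \NP-complete --- and remains so when that size is at least $3$, the regime in which $\an{G(H)}=\alpha(H)$ --- this reduction shows that deciding $\an{G}\ge k$, and a fortiori computing $\an{G}$, is \NP-hard for co-bipartite graphs $G$. The only remaining work is the finite case analysis behind the degeneracy remarks, which presents no genuine difficulty.
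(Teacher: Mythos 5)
Your proposal is correct and follows essentially the same route as the paper: the co-bipartite graph you build (a copy of $V(H)$ on one side, $V(H)\cup E(H)$ on the other, with the diagonal and incidence edges) is exactly the paper's construction, and the reduction from maximum independent set via the diagonal edges is the same. Your ``degeneracy remarks'' and the explicit proviso $\alpha(H)\ge 2$ actually patch a point the paper glosses over (it asserts without detail that some maximum neighborhood-independent set lies in $F$), so no changes are needed.
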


\begin{proof}
We shall prove the \NP-hardness of the problem, by showing a polynomial reduction of the problem of determining the size of a maximum independent set of a graph $H$. For that purpose, given any graph $H$, we will define a co-bipartite graph $G$ such that $\an{G} = \alpha(G)$.
    
Given any graph $H=(V,E)$, let $G=(X,Y,F)$ where $X=\{v'\colon v\in V\}$, $Y=V\cup E$ and $F=\{v'e\colon v\in V, e\in E\text{ and }v\text{ is incident to }e\}\cup\{v'v \colon v\in V\}$; that is, we connect every vertex in $Y$ to its copy in $X$ and every edge in $Y$ to the copies of its endpoints in $X$. Let us first note that as there are no isolated vertices in $G$, then in order to determine the neighborhood-independence number we can restrict our attention to those neighborhood-independent sets consisting only of edges. Moreover, being $X$ and $Y$ cliques, there is some maximum neighborhood-independent set having all its edges in $F$.
   
Given an independent set $S\in V$ of $H$, let $I$ be the subset of $F$ defined by $I=\{v'v\colon v \in S\}$. It is easy to see that $I$ is a neighborhood-independent set because given two different edges $v'v$ and $w'w$ of $I$, there is no vertex adjacent to all four vertices. In fact, the only vertices in $X$ adjacent to $v$ and $w$ are $v'$ and $w'$ respectively and if there were an element of $Y$ adjacent to $v$, $v'$, $w$, and $w'$, then it would necessarily be an edge $e$ of $H$ joining $v$ to $w$, which contradicts the fact that $S$ is an independent set of $H$. This contradictions proves that $I$ is a neighborhood-independent set and hence $\an{G}\geq\alpha(H)$.

Conversely, let $I$ be a neighborhood-independent set of edges in $G$ such that $I\subseteq F$. We shall see that $S = \{v\in V \colon v'y \in I\}$ is an independent set of $H$. Suppose, for a contradiction, that there is an edge $e$ of $H$ joining two vertices $v$ and $w$ of $S$. By definition, there are $y_1,y_2\in Y$ such that $v'y_1,w'y_2\in F$ and, by construction, $e$ is adjacent in $G$ to all the four endpoints of $v'y_1$ and $w'y_2$, which contradicts the fact that $F$ is a neighborhood-independent set. This contradictions shows that $S$ is an independent set of $H$ and therefore $\alpha(H)\geq\an{G}$. This completes the proof of the polynomial reduction of the maximum independent set problem to the maximum neighborhood-independent set problem in co-bipartite graphs.
\end{proof}

To prove the \NP-hardness of determining the neighborhood number of co-bipartite graphs, we will use the following result from \cite{DinurAndSafra_HardnessOfAproxVCAnnofMath_2005}. 

\begin{theorem}[\cite{DinurAndSafra_HardnessOfAproxVCAnnofMath_2005}] \label{teo:inaproxOfVC}
    Given a graph $G$, it is \NP-hard to approximate the Minimum Vertex Cover to within any factor smaller than $10\sqrt{5}-21 = 1.3606\dots$.
\end{theorem}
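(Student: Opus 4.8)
The plan is to reconstruct the Dinur--Safra reduction, whose backbone is the PCP theorem together with the hardness of a gap version of \textsc{Label Cover}, and whose crux is a combinatorial analysis of independent sets in a long-code gadget. First I would pass to the complementary formulation: since $\tau(G)=n_G-\alpha(G)$ for every graph $G$, approximating minimum vertex cover within a factor $c$ is equivalent to distinguishing, on a suitable family of instances, the case in which $G$ has a very large independent set from the case in which every independent set is small. It therefore suffices to build, in polynomial time from a \textsc{Label Cover} instance, a graph whose independence ratio is close to some value $s$ in the YES case and at most some $s'<s$ in the NO case, and to arrange the parameters so that the ratio of the complementary vertex-cover fractions $\frac{1-s'}{1-s}$ approaches $10\sqrt{5}-21$.

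For the starting hardness I would invoke the PCP theorem together with parallel repetition: these yield an \NP-hard, constant-gap version of \textsc{Label Cover} on bipartite constraint graphs with a \emph{projection} (functional) property, whose soundness error can be pushed below any fixed constant at the cost of enlarging the alphabet. The gadget replaces each \textsc{Label Cover} variable by a block of vertices indexed by the \emph{long code} of its label alphabet, with the encodings sampled under a $p$-biased product measure, and connects two encodings across a projection constraint precisely when the associated sets are (almost) disjoint. The completeness direction is the easy half: a satisfying labeling induces a ``dictator'' independent set of measure essentially $p$ in every block, so in the YES case the graph has an independent set of density close to $p$.

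The hard part, and the location of the specific constant, is the soundness analysis. One must show that any independent set whose density exceeds what an honest/cheating strategy can achieve is forced to be \emph{concentrated on a bounded number of coordinates}: its structure pins down a small set of influential labels in almost every block, and these can then be list-decoded into a labeling satisfying a constant fraction of the \textsc{Label Cover} constraints, contradicting soundness. The technical engine is a structure theorem for large independent sets in the $p$-biased intersection (Kneser-type) graph---essentially, that such sets are governed by ``cores'' of bounded size---established by an intersecting-families argument rather than pure Fourier analysis. Balancing the bias $p$ against the two measures that appear in the YES and NO regimes is exactly the optimization that sharpens the gap to the threshold $10\sqrt{5}-21=1.3606\dots$.

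The main obstacle is precisely this soundness lemma: proving the core/junta structure of large independent sets in the biased Kneser graph with a quantitatively tight constant, and then threading that bound through the projection structure of \textsc{Label Cover} so that the separation between the completeness and soundness densities is preserved under the reduction. Everything else---the reformulation via complements, the long-code encoding, and the completeness calculation---is routine by comparison, so I would allocate essentially all of the effort to making the structural estimate on independent sets both correct and tight enough to deliver the claimed $1.3606\dots$ ratio.
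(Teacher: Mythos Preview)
The paper does not prove this statement at all: it is quoted verbatim from Dinur and Safra and used purely as a black box in the subsequent reduction (Theorem~\ref{teo:NPhardPNforCobip}). There is no ``paper's own proof'' to compare against; the paper's only contribution at this point is the citation.

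Your proposal is a reasonable high-level outline of the actual Dinur--Safra argument (PCP plus parallel repetition to get hard gap \textsc{Label Cover}, long-code gadget under $p$-biased measure, structure theorem for large independent sets in the biased Kneser graph, list-decoding back to a labeling, and the optimization over $p$ that pins down $10\sqrt{5}-21$). As a sketch of that paper it is broadly accurate, though of course each of the steps you list---especially the ``core/junta'' structural lemma---is itself a substantial piece of work that your outline does not carry out. But none of this is what the present paper does or requires: for the purposes of this paper, the correct ``proof'' is simply to cite \cite{DinurAndSafra_HardnessOfAproxVCAnnofMath_2005} and move on.
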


\begin{theorem}\label{teo:NPhardPNforCobip}
    It is \NP-hard to determine the neighborhood number in co-bipartite graphs.
\end{theorem}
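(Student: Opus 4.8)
The plan is to give a reduction from Minimum Vertex Cover and then invoke Theorem~\ref{teo:inaproxOfVC}. Given an arbitrary graph $H$ --- which, after replacing $H$ by $H\cup 2K_2$ if necessary (this raises $\tau$ by exactly $2$), we may assume satisfies $\nu(H)\geq 2$ and is not complete --- I would build a co-bipartite graph $G=(X,Y,F)$ in the spirit of the construction used in the proof of Theorem~\ref{teo:npHardAnforCoBip}: let $X$ be a clique on a copy of $V(H)$, let $Y$ be a clique on a copy of $E(H)$, and let $F$ consist of all pairs $ve$ with $v$ incident to $e$ in $H$. Then $G$ has $n_H+m_H$ vertices and is built from $H$ in polynomial time; it remains to relate $\pn{G}$ to $\tau(H)$.

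The upper bound $\pn{G}\leq\tau(H)+1$ is the easy direction: if $K$ is a minimum vertex cover of $H$, then the copies in $X$ of the vertices of $K$, together with any single vertex of $Y$, form a neighborhood-covering set of $G$. Indeed, any one vertex of $X$ already covers every vertex and edge internal to the clique $X$; the single vertex of $Y$ covers everything internal to the clique $Y$; and each edge $ve\in F$ with $e=vu$ is covered because $K$ contains $v$ or $u$, whose copy in $X$ has both endpoints of $ve$ in its closed neighborhood.

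The crux is the matching lower bound $\pn{G}\geq\tau(H)-O(1)$. Given a neighborhood-covering set $C$ of $G$, write $A=C\cap X$ and $B=C\cap Y$, identified with a vertex set and an edge set of $H$. I would first note $A\neq\emptyset$ (otherwise, as $H$ is not complete, some edge internal to $X$ is uncovered) and $B\neq\emptyset$ (otherwise, as $\nu(H)\geq 2$, two disjoint edges of $H$ give an uncovered edge internal to $Y$). Then, inspecting which vertices of $G$ can cover the two edges $ve$ and $ue$ of $F$ incident to a given edge $e=vu$ of $H$, one gets that either $e$ has an endpoint in $A$ or both endpoints of $e$ lie in $V(B):=\bigcup_{f\in B}f$; hence $A\cup V(B)$ is a vertex cover of $H$. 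The delicate part is to avoid the factor-$2$ loss in $|V(B)|\leq 2|B|$ --- a bare-hands argument gives only $\tau(H)\leq 2\,\pn{G}$, which is useless against the threshold $10\sqrt5-21<2$. To obtain $\tau(H)\leq\pn{G}+O(1)$ I would argue that $C$ can be replaced, without increasing its size, by a neighborhood-covering set in which $B$ is either a single edge or the full set of edges of $H$ incident to one fixed vertex; in both cases $|V(B)|\leq|B|+1$, so the vertex cover extracted above has size at most $|C|+1$.

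Combining the two bounds yields $\tau(H)\leq\pn{G}\leq\tau(H)+O(1)$, so a polynomial-time algorithm computing $\pn{G}$ on co-bipartite graphs would compute $\tau(H)$ up to an additive constant, hence $(1+o(1))$-approximate --- a fortiori $(10\sqrt5-21)$-approximate --- Minimum Vertex Cover on the instances of unbounded optimum for which Theorem~\ref{teo:inaproxOfVC} forbids this unless $\mathcal P=\mathcal{NP}$; this is the desired contradiction, and so determining $\pn{G}$ for co-bipartite $G$ is \NP-hard. I expect the main obstacle to be precisely the cover-modification step in the lower bound: quantitatively ruling out that the vertices of $C$ lying in the $E(H)$-side clique are ``twice as efficient'' as those in the $V(H)$-side clique at covering the edges of $F$.
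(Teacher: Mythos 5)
Your construction and the upper bound $\pn{G}\leq\tau(H)+1$ coincide with the paper's, and you have correctly located the crux: extracting from a neighborhood-covering set $C$ of $G$ a vertex cover of $H$ of size close to $|C|$. But the normalization you propose to close the factor-$2$ gap --- replacing $C$, at no cost, by a cover in which $B=C\cap Y$ is a single edge or a star --- is false, and in fact no bound of the form $\tau(H)\leq\pn{G}+O(1)$ holds for this construction. Take $H=K_{2t+1}$ with vertices $0,1,\ldots,2t$ and let $C=\{0\}\cup\{e_{1,2},e_{3,4},\ldots,e_{2t-1,2t}\}$, where $e_{i,j}$ denotes the $Y$-vertex corresponding to the edge $ij$ of $H$. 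Since $N[0]\supseteq X$ and $N[e]\supseteq Y$ for every $e\in Y$, all vertices and all edges internal to the two cliques are covered; an $F$-edge $u\,e_{i,j}$ with $0\in\{i,j\}$ is covered by $0$ (both $u$ and $e_{i,j}$ lie in $N[0]$), and one with $0\notin\{i,j\}$ has $u\in\{1,\ldots,2t\}$ and is covered by the matching edge of $B$ containing $u$. Hence $\pn{G}\leq t+1$ while $\tau(K_{2t+1})=2t$. Already $t=2$ refutes your normalization (one checks directly that every neighborhood cover of the $K_5$-instance in which $B$ is empty, a single edge, or a star has size at least $4$, yet the matching-based cover has size $3$), and as $t$ grows the ratio $\pn{G}/\tau(H)$ tends to $1/2$, so no argument confined to this reduction can reach the threshold $10\sqrt{5}-21<2$. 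The intuition you worried about is exactly right: $Y$-side vertices can be ``twice as efficient,'' because the closed neighborhood of $e=vw\in Y$ contains all of $Y$, so $e$ covers every $F$-edge whose $X$-endpoint is $v$ or $w$, not just the two edges $ve$ and $we$.

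You should also know that the difficulty you flagged is not an artifact of your write-up: the paper's own proof founders on precisely this point. It asserts that $e=vw\in C\cap Y$ covers only the two $F$-edges $ve$ and $we$ and may therefore be swapped for $v$ or $w$; as just noted this is incorrect, the swap does not preserve coverage, and the inequality $\tau(H)\leq\pn{G}$ claimed there is outright refuted by the $K_5$ example above. So your proposal reproduces the intended argument faithfully, but the missing step is a genuine gap in both versions; repairing it would require either a different gadget (one that penalizes placing many elements of $C$ on the $Y$-side) or exploiting special structure of the hard vertex-cover instances, neither of which is supplied here.
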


\begin{proof}
To prove that the problem is \NP-hard, we shall use Theorem~\ref{teo:inaproxOfVC}, and show that a polynomial-time reduction from a $\frac{4}{3}$-approximation of the Minimum Vertex Cover problem can be easily obtained. For that purpose, given a graph $H$, we will show to build a co-bipartite graph $G$ such that $\beta(H) \leq \pn{G} \leq \beta(H) + 1$. Namely, given any graph $H=(V,E)$, let $G=(X,Y,F)$ where $X = V$, $Y = E$ and $F = \{ve\in V\times E\colon v\text{ is incident to }e\text{ in }H\}$; that is, every vertex in $X$ is joined to the edges in $Y$ to which it is incident in $H$.

Given a set vertex cover $C\subseteq V$ of $H$, then $C$ together with any element of $Y$ is clearly neighborhood set of $G$. In fact, all the edges of the cliques $X$ and $Y$ will clearly be covered by any vertex of $X$ and the vertex of $Y$, respectively. Moreover all edges of $F$ will be covered because if $ve\in F$, then $e = vw$ (in $H$) for some $w\in V$. Hence, since $C$  was a vertex cover of $H$, $v$ or $w$ must be in $C$ and both cover the edge $ve$ in $G$ (because $v,w,e$ is a triangle in $G$). Thus $\pn{G} \leq \beta(H) + 1$. 

To check the remaining inequality, let $S \subseteq X\cup Y$ be a neighborhood set of $G$ with minimum cardinality. If $e$ is any element in $S \cap Y$,  then $e$ is covering in $G$ only two edges of $F$, namely the $ve$ and $we$, where $vw = e$ (in $H$). Thus if we replace $e$ by $v$ or $w$ in $S$, this set that arises still covers all the the edges of $F$. If we apply this procedure successively for all vertices in $S \cup Y$, we will obtain at the end a vertex set of $S'\subseteq X$ that is a neighborhood-covering set of $F$ and has size less than or equal to $\pn{G}$. It turns out that $S'\subseteq V$ will be a vertex cover of $H$, because for any edge $e\in E$, where $e=vw$ (in $H$), $v$ or $w$ will be in $S'$ for these are the only vertices in $X$ that cover $ve\in F$. As $S'$ is a a vertex cover of $H$ whose size is less than or equal to $\pn{G}$, $\beta(H) \leq \pn{G}$.

Now that we have proved that this co-bipartite graph $G$ satisfies $\beta(H) \leq \pn{G} \leq \beta(H) + 1$, it is easy to give a polynomial-time reduction to the problem of approximating $\beta(H)$ within a factor of $\frac{4}{3}$. Given a graph $H$, we can in polynomial (linear) time decide whether it has a vertex cover of size $1$ or $2$ and, if so, we transform $H$ into an arbitrary co-bipartite graph whose corresponding maximum neighborhood set has size $1$ or $2$, respectively. If $\beta(H) \geq 3$ we construct in polynomial time $G$ as described above. As proven before $\beta(H) \leq \pn{G} \leq \beta(H) + 1$, which as $\beta(H) \geq 3$ means that $ 1 \leq \frac{\pn{G}}{\beta(H)} \leq 1 + \frac{1}{3}$. This proves the reduction from the problem of approximating the Minimum Vertex Cover problem less than $10\sqrt{5}-21 = 1.3606\dots$, as desired.     
\end{proof} 

\section{Further Remarks} 
\label{sec:further_remarks}

It is worth noting that a different approach for obtaining linear-time algorithms for $P_4$-tidy graphs (and, more generally, in graph classes having bounded clique-width) was introduced by Courcelle et al.\ in \cite{CourcelleLinearTimeBoundedCW}. This approach allows for linear-time solutions of recognition and optimization problems that are expressible in a certain monadic second-order logic. Given the characterizations proven in Theorems~\ref{teo:CaractOfP4TidyNP} and \ref{teo:CaractNPTreeCograph}, it is easy to see that the recognition problem of neighborhood-perfectness in $P_4$-tidy graphs and tree-cographs can be expressed in this monadic second-order logic. As the class of tree-cographs also has bounded clique-width, Courcelle et al.'s metatheorem would imply the existence of a linear-time algorithm for the recognition problem of neighborhood-perfectness when the input graph is restricted to both tree-cographs and $P_4$-tidy graphs. Nevertheless, we stress the necessity of Theorems~\ref{teo:CaractOfP4TidyNP} and \ref{teo:CaractNPTreeCograph} for proving the existence of such linear-time recognition algorithms using the approach of \cite{CourcelleLinearTimeBoundedCW} since, while the fact of containing or not a particular induced subgraph is expressible in the corresponding monadic second-order logic, the property itself of being or not neighborhood-perfect seems not expressible at all in such logic. Moreover, the problem of finding a minimum neighborhood-covering set can be seen to fall as well in the scope of the approach of \cite{CourcelleLinearTimeBoundedCW}, implying a linear-time algorithm to solve this problem in any $P_4$-tidy graph or tree-cograph. Nevertheless, although Courcelle et al.'s metatheorem is of great theoretical importance, the algorithm obtained by it is far away from being practical; because it may have enormous hidden constants in the linear-time complexity (even if the input graph has small clique-width) \cite{Courcelle2008}. This combinatorial explosion of the constants seems to be a consequence of the generality of the metatheorem, given that it requires only a monadic second-order formula and an input graph to solve the problem. This seems unavoidable if one wishes to obtain results for general monadic second-order formulas \cite{FrickGrohe2004ComplexOfMSOL}. Therefore, it is clearly of interest to find more practical algorithms, that can work by only performing a simple transversal of the modular decomposition trees of the the input graph as those developed in Section~\ref{sec:algorithms_and_complexity_results}. In addition, linear-time algorithms for solving the neighborhood-independence number problem for $P_4$-tidy graphs and tree-cographs, as those given in Subsection~\ref{subsec:optimal}, do not seem to follow from the approach of \cite{CourcelleLinearTimeBoundedCW} as the problem is not directly expressible in the corresponding logic (as quantification over subsets of edges is not allowed).

\section*{Acknowledgments}
This work was partially supported by UBACyT Grant 20020130100808BA, CONICET PIP 112-201201-00450CO and PIO 14420140100027CO, ANPCyT PICT 2012-1324 (Argentina), FONDECyT Grant 1140787 and Millennium Science Institute Complex Engineering Systems (Chile).


\begin{thebibliography}{10}
\expandafter\ifx\csname url\endcsname\relax
  \def\url#1{\texttt{#1}}\fi
\expandafter\ifx\csname urlprefix\endcsname\relax\def\urlprefix{URL }\fi
\expandafter\ifx\csname href\endcsname\relax
  \def\href#1#2{#2} \def\path#1{#1}\fi

\bibitem{Baumann96alinear}
S.~Baumann, A linear algorithm for the homogeneous decomposition of graphs,
  Report TUM M9615, Fakultät für Mathematik, Technische Universität
  München, Munich, Germany.

\bibitem{berge1961farbung}
C.~Berge, F{\"a}rbung von graphen, deren s{\"a}mtliche bzw. deren ungerade
  kreise starr sind, Wiss. Z. Martin-Luther-Univ. Halle-Wittenberg Math.-Natur.
  Reihe 10~(114) (1961) 88.

\bibitem{bonomo2007coordinated}
F.~Bonomo, G.~Dur{\'a}n, M.~Groshaus, Coordinated graphs and clique graphs of
  clique-helly perfect graphs, Utilitas Mathematica 72 (2007) 175--192.

\bibitem{Branstadt1997CliqueRDomCliqueRpackDuallyChordal}
A.~Brandst{\"a}dt, V.~D. Chepoi, F.~F. Dragan, Clique {$r$}-domination and
  clique {$r$}-packing problems on dually chordal graphs, SIAM J. Discrete
  Math. 10~(1) (1997) 109--127.
\newblock \href {http://dx.doi.org/10.1137/S0895480194267853}
  {\path{doi:10.1137/S0895480194267853}}.

\bibitem{BuerDecomp1983}
H.~Buer, R.~H. M{\"o}hring, A fast algorithm for the decomposition of graphs
  and posets, Math. Oper. Res. 8~(2) (1983) 170--184.
\newblock \href {http://dx.doi.org/10.1287/moor.8.2.170}
  {\path{doi:10.1287/moor.8.2.170}}.

\bibitem{LongestPathInTreeBulterman2002}
R.~W. Bulterman, F.~W. van~der Sommen, G.~Zwaan, T.~Verhoeff, A.~J.~M. van
  Gasteren, W.~H.~J. Feijen, On computing a longest path in a tree, Inform.
  Process. Lett. 81~(2) (2002) 93--96.
\newblock \href {http://dx.doi.org/10.1016/S0020-0190(01)00198-3}
  {\path{doi:10.1016/S0020-0190(01)00198-3}}.

\bibitem{chang_algorithmic_1993}
G.~J. Chang, M.~Farber, Z.~Tuza, Algorithmic aspects of neighborhood numbers,
  {SIAM} Journal on Discrete Mathematics 6~(1) (1993) 24–29.

\bibitem{ChellaliTotDomBound2006}
M.~Chellali, T.~W. Haynes, A note on the total domination number of a tree, J.
  Combin. Math. Combin. Comput. 58 (2006) 189--193.

\bibitem{ChudnovskyRoberstonSeymourRobinSPGT2006}
M.~Chudnovsky, N.~Robertson, P.~Seymour, R.~Thomas, The strong perfect graph
  theorem, Ann. of Math. (2) 164~(1) (2006) 51--229.
\newblock \href {http://dx.doi.org/10.4007/annals.2006.164.51}
  {\path{doi:10.4007/annals.2006.164.51}}.

\bibitem{Courcelle2008}
B.~Courcelle, A multivariate interlace polynomial and its computation for
  graphs of bounded clique-width, Electron. J. Combin. 15~(1) (2008) Research
  Paper 69, 36.

\bibitem{CourcelleLinearTimeBoundedCW}
B.~Courcelle, J.~A. Makowsky, U.~Rotics, Linear time solvable optimization
  problems on graphs of bounded clique-width, Theory Comput. Syst. 33~(2)
  (2000) 125--150.
\newblock \href {http://dx.doi.org/10.1007/s002249910009}
  {\path{doi:10.1007/s002249910009}}.

\bibitem{CournierModDecomp1994}
A.~Cournier, M.~Habib, A new linear algorithm for modular decomposition, in:
  Trees in algebra and programming---{CAAP} '94 ({E}dinburgh, 1994), Vol. 787
  of Lecture Notes in Comput. Sci., Springer, Berlin, 1994, pp. 68--84.
\newblock \href {http://dx.doi.org/10.1007/BFb0017474}
  {\path{doi:10.1007/BFb0017474}}.

\bibitem{DahlhausDecomp2001}
E.~Dahlhaus, J.~Gustedt, R.~M. McConnell, Efficient and practical algorithms
  for sequential modular decomposition, J. Algorithms 41~(2) (2001) 360--387.
\newblock \href {http://dx.doi.org/10.1006/jagm.2001.1185}
  {\path{doi:10.1006/jagm.2001.1185}}.

\bibitem{DinurAndSafra_HardnessOfAproxVCAnnofMath_2005}
I.~Dinur, S.~Safra, On the hardness of approximating minimum vertex cover, Ann.
  of Math. (2) 162~(1) (2005) 439--485.
\newblock \href {http://dx.doi.org/10.4007/annals.2005.162.439}
  {\path{doi:10.4007/annals.2005.162.439}}.

\bibitem{Fouquet_OnSemiP4sparseGraphs_1997}
J.-L. Fouquet, V.~Giakoumakis, On semi-{$P\sb 4$}-sparse graphs, Discrete Math.
  165/166 (1997) 277--300, graphs and combinatorics (Marseille, 1995).
\newblock \href {http://dx.doi.org/10.1016/S0012-365X(96)00177-X}
  {\path{doi:10.1016/S0012-365X(96)00177-X}}.

\bibitem{FrickGrohe2004ComplexOfMSOL}
M.~Frick, M.~Grohe, The complexity of first-order and monadic second-order
  logic revisited, Ann. Pure Appl. Logic 130~(1-3) (2004) 3--31.
\newblock \href {http://dx.doi.org/10.1016/j.apal.2004.01.007}
  {\path{doi:10.1016/j.apal.2004.01.007}}.

\bibitem{FrickeStronMatchingOnTrees1992}
G.~Fricke, R.~Laskar, Strong matchings on trees, in: Proceedings of the
  {T}wenty-third {S}outheastern {I}nternational {C}onference on
  {C}ombinatorics, {G}raph {T}heory, and {C}omputing ({B}oca {R}aton, {FL},
  1992), Vol.~89, 1992, pp. 239--243.

\bibitem{GallaiCographs1967}
T.~Gallai, Transitiv orientierbare {G}raphen, Acta Math. Acad. Sci. Hungar 18
  (1967) 25--66.

\bibitem{Giakoumakis_OnP4TidyGraphs_1997}
V.~Giakoumakis, F.~Roussel, H.~Thuillier, On {$P\sb 4$}-tidy graphs, Discrete
  Math. Theor. Comput. Sci. 1~(1) (1997) 17--41 (electronic).

\bibitem{GolumbicInducedMatching2000}
M.~C. Golumbic, M.~Lewenstein, New results on induced matchings, Discrete Appl.
  Math. 101~(1-3) (2000) 157--165.
\newblock \href {http://dx.doi.org/10.1016/S0166-218X(99)00194-8}
  {\path{doi:10.1016/S0166-218X(99)00194-8}}.

\bibitem{guruswami_algorithmic_2000}
V.~Guruswami, C.~P. Rangan, Algorithmic aspects of clique-transversal and
  clique-independent sets, Discrete Appl. Math. 100~(3) (2000) 183--202.
\newblock \href {http://dx.doi.org/10.1016/S0166-218X(99)00159-6}
  {\path{doi:10.1016/S0166-218X(99)00159-6}}.

\bibitem{gyarfas_minimal_1996}
A.~Gy{\'a}rf{\'a}s, D.~Kratsch, J.~Lehel, F.~Maffray, Minimal
  non-neighborhood-perfect graphs, J. Graph Theory 21~(1) (1996) 55--66.
\newblock \href
  {http://dx.doi.org/10.1002/(SICI)1097-0118(199601)21:1<55::AID-JGT8>3.3.CO;2-S}
  {\path{doi:10.1002/(SICI)1097-0118(199601)21:1<55::AID-JGT8>3.3.CO;2-S}}.

\bibitem{HabibModDecompSurvey2010}
M.~Habib, C.~Paul, A survey of the algorithmic aspects of modular
  decomposition, Computer Science Review 4~(1) (2010) 41 -- 59.
\newblock \href
  {http://dx.doi.org/http://dx.doi.org/10.1016/j.cosrev.2010.01.001}
  {\path{doi:http://dx.doi.org/10.1016/j.cosrev.2010.01.001}}.

\bibitem{HenningTotalDomMonograph2013}
M.~A. Henning, A.~Yeo, Total domination in graphs, Springer Monographs in
  Mathematics, Springer, New York, 2013.
\newblock \href {http://dx.doi.org/10.1007/978-1-4614-6525-6}
  {\path{doi:10.1007/978-1-4614-6525-6}}.

\bibitem{LaskarTotalDom1984}
R.~Laskar, J.~Pfaff, S.~M. Hedetniemi, S.~T. Hedetniemi, On the algorithmic
  complexity of total domination, SIAM J. Algebraic Discrete Methods 5~(3)
  (1984) 420--425.
\newblock \href {http://dx.doi.org/10.1137/0605040}
  {\path{doi:10.1137/0605040}}.

\bibitem{lehel_neighbourhood-perfect_1994}
J.~Lehel, Neighbourhood-perfect line graphs, Graphs Combin. 10~(4) (1994)
  353--361.
\newblock \href {http://dx.doi.org/10.1007/BF02986685}
  {\path{doi:10.1007/BF02986685}}.

\bibitem{lehel_neighborhood_1986}
J.~Lehel, Z.~Tuza, Neighborhood perfect graphs, Discrete Math. 61~(1) (1986)
  93--101.
\newblock \href {http://dx.doi.org/10.1016/0012-365X(86)90031-2}
  {\path{doi:10.1016/0012-365X(86)90031-2}}.

\bibitem{McConnellMDecomp1999}
R.~M. McConnell, J.~P. Spinrad, Modular decomposition and transitive
  orientation, Discrete Math. 201~(1-3) (1999) 189--241.
\newblock \href {http://dx.doi.org/10.1016/S0012-365X(98)00319-7}
  {\path{doi:10.1016/S0012-365X(98)00319-7}}.

\bibitem{LinearAlgTreesPreeceding_Mitchell1975}
S.~Mitchell, S.~Hedetniemi, S.~Goodman, Some linear algorithms on trees, in:
  Proceedings of the {S}ixth {S}outheastern {C}onference on {C}ombinatorics,
  {G}raph {T}heory, and {C}omputing ({F}lorida {A}tlantic {U}niv., {B}oca
  {R}aton, {F}la., 1975), Utilitas Math., Winnipeg, Man., 1975, pp. 467--483.
  Congressus Numerantium, No. XIV.

\bibitem{LinearAlgTreesRecursiveMithcell1979}
S.~L. Mitchell, E.~J. Cockayne, S.~T. Hedetniemi, Linear algorithms on
  recursive representations of trees, J. Comput. System Sci. 18~(1) (1979)
  76--85.
\newblock \href {http://dx.doi.org/10.1016/0022-0000(79)90053-9}
  {\path{doi:10.1016/0022-0000(79)90053-9}}.

\bibitem{Sampathkumar_Neeralagi_NNumber_1985}
E.~Sampathkumar, P.~S. Neeralagi, The neighbourhood number of a graph, Indian
  J. Pure Appl. Math. 16~(2) (1985) 126--132.

\bibitem{DFS_for_vertex_cover_Savage_1982}
C.~Savage, Depth-first search and the vertex cover problem, Inform. Process.
  Lett. 14~(5) (1982) 233--235.
\newblock \href {http://dx.doi.org/10.1016/0020-0190(82)90022-9}
  {\path{doi:10.1016/0020-0190(82)90022-9}}.

\bibitem{TedderSimpleModDecom2008}
M.~Tedder, D.~Corneil, M.~Habib, C.~Paul, Simpler linear-time modular
  decomposition via recursive factorizing permutations, in: Automata, languages
  and programming. {P}art {I}, Vol. 5125 of Lecture Notes in Comput. Sci.,
  Springer, Berlin, 2008, pp. 634--645.
\newblock \href {http://dx.doi.org/10.1007/978-3-540-70575-8_52}
  {\path{doi:10.1007/978-3-540-70575-8_52}}.

\bibitem{TinhoferTreeCographs1988}
G.~Tinhofer, Strong tree-cographs are {B}irkhoff graphs, Discrete Appl. Math.
  22~(3) (1988/89) 275--288.
\newblock \href {http://dx.doi.org/10.1016/0166-218X(88)90100-X}
  {\path{doi:10.1016/0166-218X(88)90100-X}}.

\bibitem{west2001introduction}
D.~B. West, Introduction to graph theory, Prentice Hall, 2001.

\bibitem{ZitoMaxInducedMatchingTrees2000}
M.~Zito, Linear time maximum induced matching algorithm for trees, Nordic J.
  Comput. 7~(1) (2000) 58--63.

\end{thebibliography}
\end{document}